\documentclass[10pt,twocolumn]{IEEEtran} 	

\usepackage{fancyhdr, epsfig, epsf, amsthm, amsmath, amssymb, amsfonts, subfigure, color}
\usepackage{threeparttable}
\usepackage[noadjust]{cite}
\usepackage{dsfont}
\usepackage{enumerate}
\usepackage{comment}
\usepackage{soul}
\usepackage{algorithm}
\usepackage{algpseudocode}
\usepackage{mathtools}
\usepackage{textcomp}
\usepackage{gensymb}
\usepackage{array,booktabs,hhline,tabu}
\newcolumntype{P}[1]{>{\centering\arraybackslash}p{#1}}
\usepackage{bbm}
\usepackage{amssymb}
\usepackage{multirow}
\usepackage{adjustbox}
\usepackage{hyperref} 
\usepackage{cleveref}
\Crefname{figure}{Fig.}{Figs.}
\usepackage{scalerel}
\usepackage{tikz}
\usetikzlibrary{svg.path}

\definecolor{orcidlogocol}{HTML}{A6CE39}
\tikzset{
	orcidlogo/.pic={
		\fill[orcidlogocol] svg{M256,128c0,70.7-57.3,128-128,128C57.3,256,0,198.7,0,128C0,57.3,57.3,0,128,0C198.7,0,256,57.3,256,128z};
		\fill[white] svg{M86.3,186.2H70.9V79.1h15.4v48.4V186.2z}
		svg{M108.9,79.1h41.6c39.6,0,57,28.3,57,53.6c0,27.5-21.5,53.6-56.8,53.6h-41.8V79.1z M124.3,172.4h24.5c34.9,0,42.9-26.5,42.9-39.7c0-21.5-13.7-39.7-43.7-39.7h-23.7V172.4z}
		svg{M88.7,56.8c0,5.5-4.5,10.1-10.1,10.1c-5.6,0-10.1-4.6-10.1-10.1c0-5.6,4.5-10.1,10.1-10.1C84.2,46.7,88.7,51.3,88.7,56.8z};
	}
}

\hypersetup{
	colorlinks=false
}

\newcommand\orcidicon[1]{\href{https://orcid.org/#1}{\mbox{\scalerel*{
				\begin{tikzpicture}[yscale=-1,transform shape]
					\pic{orcidlogo};
				\end{tikzpicture}
			}{|}}}}

\newtheorem{proposition}{Proposition}

\newtheorem{assumption}{Assumption}

\newtheorem{remark}{\bf Remark}
\def\phi{\varphi}

\def\l{\left}
\def\r{\right}
\def\({\left(}
\def\){\right)}

\setcounter{page}{1}



\def\b0{{\mathbf{0}}}








\includecomment{comment}	

\DeclarePairedDelimiter\norm{\lVert}{\rVert}

\begin{document}
\title{Exploiting User Mobility for WiFi RTT Positioning: A Geometric Approach}
\author{Kyuwon Han \orcidicon{0000-0001-5792-8414}, \IEEEmembership{Student Member, IEEE}, Seung Min Yu, Seong-Lyun Kim \orcidicon{0000-0002-5228-9913}, \IEEEmembership{Member, IEEE}, and Seung-Woo Ko \orcidicon{0000-0002-8592-7408}, \IEEEmembership{Member, IEEE} \\
\thanks{ 	
	This work was supported by Institute of Information \& communications Technology Planning \& Evaluation (IITP) grant funded by the Korea government(MSIT) (No. 2018-11-1864, Scalable Spectrum Sharing for Beyond 5G Communication), a grant from R\&D Program of the Korea Railroad Research Institute, Republic of Korea, and in part by the Basic Science Research Program through the National Research Foundation of Korea (NRF) funded by the Ministry of Science and ICT (NRF-2019R1G1A1100123).
\textit{(Corresponding author: Seung-Woo Ko.)}
	
K. Han and S.-L Kim are with Yonsei University, Seoul, South Korea (email: \{kwhan,slkim\}@ramo.yonsei.ac.kr). S. M. Yu is with the Korea Railroad Research Institute, Uiwang, South Korea
(email: smyu@krri.re.kr). S.-W. Ko is with the Division of Electronics and Electrical Information Engineering, Korea Maritime and Ocean University (KMOU), Busan 49112, South Korea (email: swko@kmou.ac.kr). 

Copyright (c) 2021 IEEE. Personal use of this material is permitted. However, permission to use this material for any other purposes must be obtained from the IEEE by sending a request to pubs-permissions@ieee.org.
}}

\IEEEoverridecommandlockouts 
	
\markboth{ACCEPTED FOR PUBLICATION IN IEEE INTERNET OF THINGS JOURNAL}
{}
\maketitle

\begin{abstract}

Recently, \emph{round-trip time} (RTT) measured by a fine-timing measurement protocol has received great attention in the area of WiFi positioning. It provides an acceptable ranging accuracy in favorable environments when a \emph{line-of-sight} (LOS) path exists. Otherwise, a signal is detoured along with non-LOS paths, making the resultant ranging results different from the ground-truth, called an RTT bias, which is the main reason for poor positioning performance. To address it, we aim at leveraging the user mobility trajectory detected by a smartphone's inertial measurement units, called \emph{pedestrian dead reckoning} (PDR). Specifically, PDR provides the geographic relation among adjacent locations, guiding the resultant positioning estimates' sequence not to deviate from the user trajectory. To this end, we describe their relations as multiple geometric equations, enabling us to render a novel positioning algorithm with acceptable accuracy. Depending on the mobility pattern being linear or arbitrary, we develop different algorithms divided into two phases. First, we can jointly estimate an RTT bias of each AP and the user's step length by leveraging the geometric relation mentioned above. It enables us to construct a user's relative trajectory defined on the concerned AP's local coordinate system. Second, we align every AP's relative trajectory into a single one, called \emph{trajectory alignment}, equivalent to transformation to the global coordinate system. As a result, we can estimate the sequence of the user's absolute locations from the aligned trajectory. Various field experiments extensively verify the proposed algorithm's effectiveness that the average positioning error is approximately 0.369 (m) and 1.705 (m) in LOS and NLOS environments, respectively.
\end{abstract}

\begin{IEEEkeywords}
	WiFi positioning, RTT, NLOS bias, user mobility, pedestrian dead reckoning, trajectory alignment
\end{IEEEkeywords}

\section{Introduction}\label{Sec:Intro}

Estimating  a user's  location, called \emph{positioning}, has become vital as a clue to assimilate his behaviors and predict the demands to realize the vision of smart cities \cite{BookAlan2016, Laoudias2018}. Along with the appearance of smartphones equipping various built-in sensors and communication modules, positioning has become an attractive research area due to its promising potential capable of combining various techniques. Among them, this work focuses on \emph{round-trip times} (RTTs) from multiple WiFi \emph{access points} (APs) and a user's mobility pattern measured by the built-in sensors, each of which has a different view from the positioning perspective. A WiFi RTT-based approach gives a macroscopic view of the region where the user is likely to be located during the movement.  On the other hand, the user movement pattern provides a microscopic view of how the user is moving at a specific instant. As a result, combining the two renders the relation between point estimates, forming a geometric representation concerning the given measurements. It enables the design of a novel  algorithm to achieve accurate positioning. 

\subsection{Prior Works}  \label{Subsec:Intro_PriorWorks}

\subsubsection{WiFi positioning} \label{Subsubsec:Intro_WiFiPositioning}

As the massive number of WiFi APs have been deployed in our surroundings, WiFi positioning has received significant attention to providing users seamless positioning services anytime and anywhere \cite{Yang2015}. Depending on the means used to capture specific physical properties of radio signals, two kinds of approaches exist in the literature: \emph{received signal strength} (RSS)-based and RTT-based approaches. The former uses a mathematical path-loss model to translate the measured RSS to the distance to a WiFi AP. Given the {ranging results} to at least $3$ APs, the user's position is uniquely estimated using a multilateration method.  
RSS has been widely used as a fundamental positioning element due to its simple accessibility that RSS is measurable in a commercial off-the-shelf Android smartphone \cite{Vy2019}.  However, the randomness of a radio signal, e.g., shadowing and short-term fading, makes it challenging to obtain reliable ranging results, especially in indoor environments where numerous reflectors and blockages exist. On the other hand, RTT is a relatively reliable measurement since it exploits one basic theory of {classical} physics that the speed of signal propagation is constant at light speed   $c=3\cdot 10^8$ (m/sec). The measurement of RTT is enabled by \emph{fine timing measurement} (FTM) protocol, which was firstly introduced in IEEE 802.11 mc \cite{IEEE2016}. After several {handshaking signals} between a smartphone and the paired AP, the timing instants of the signal receptions are shared, enabling the computation of RTT. With several super-resolution methods, FTM returns a precise RTT {estimate} at picosecond granularity in favorable environments, i.e., a \emph{line-of-sight} (LOS) scenario \cite{Banin2017}.  

Nevertheless, complicated surroundings with numerous reflectors and blockages make the concerned radio signal detour {differently} from a LOS path, called a \emph{{non}-LOS} (NLOS)  path.  The resultant ranging {results} is thus \emph{biased}, as mentioned in \cite{Ibrahim2018}, which is the main reason behind the performance limit. Several methods have been suggested in the literature to overcome the limitation. A primary way is to identify whether the observed signal path is LOS or NLOS and calibrate the bias.  In \cite{Si2020},  the likelihood of a LOS  path is derived, following the assumption that RSS is a Gaussian random variable. The recent trend of machine learning enables the LOS/NLOS identification without any statistical hypothesis. 
For example, in \cite{Han2019} and \cite{Bregar2018},  LOS and NLOS paths are identified by supervised learning techniques, e.g., support vector machine and artificial neural network respectively, of which the performances depend on the number of labeled data.  In \cite{Choi2019}, the technique of unsupervised learning is used to infer the bias by designing a novel cost function underlying the fact that the spatial information, e.g., location, distance, and velocity, is temporally correlated. However, all machine learning-based techniques mentioned above need an offline phase such that support vectors or neural networks should be trained in advance concerning all possible sites, making their usability and scalability limited.   \\

\subsubsection{Estimating User Mobility}\label{Subsubsec:Intro_PDR}

Understanding user mobility gives significant benefits for seamless positioning services by enabling a user to estimate his location in \emph{global positioning system} (GPS)-restricted areas, i.e., {inside buildings} or tunnels. A user's current location can be updated from the latest GPS signal by integrating the trajectory of user movement, called \emph{dead reckoning} (DR). The performance of DR relies on the accuracy of estimating a {user's} mobility, and DR works efficiently for vehicular positioning since its on-board sensors, called \emph{inertial measurement units} (IMUs), can accurately measure velocity, orientation, {etc}~\cite{Toledo2009}.

Noting that modern smartphones also possess IMUs, the concept of DR can be applied to the positioning using a smartphone, defined as \emph{pedestrian DR} (PDR) \cite{Yang2015_PDR_survey}. Basically, the IMUs of a typical smartphone comprise {accelerometer}, gyroscope, and {magnetometer}, each of which observes a user's mobility from a different aspect. An accelerometer can count a user's number of walking steps from the repetitive patterns of up-and-down accelerations, translated into the corresponding total moving distance by multiplying his step length. Next, a gyroscope can recognize {turning direction to right or left sides} when its measurement is suddenly changed. Last, a magnetometer can detect a heading direction in favorable conditions such as outdoor-like spaces  \cite{Afzal2011}. Combining them leads to constructing the user's full trajectory. 

Despite its advantages mentioned above, {the effectiveness of PDR as a standalone positioning technique is questionable due to the following reasons.} First, a user's step length should be known in advance as a prerequisite to translate the number of steps into the distance. Several formulas have been proposed in the literature to estimate it without the user's direct input (see,  e.g., \cite{Weinberg2002} and \cite{Kang2015}), most of which are designed based on the rule of a thumb. 
On the other hand, it should reflect the user's characteristics such as height, weight, and speed and acceleration of walking, hindering the generalization into a simple formula.  
Second, a heading direction estimated by a magnetometer has a significant offset, affected by a magnetic distortion due to several indoor materials and a misalignment between the smartphone's heading direction and the real moving direction \cite{Scherzinger1996}. Third, the concerned scenarios of PDR are mostly indoor, where a GPS signal is hardly detected. In other words, its accuracy cannot be guaranteed and deteriorates as time passes due to the accumulation of estimation errors.  To cope with the above issues, it is recommended to incorporate PDR into other positioning systems by {providing additional location information} to calibrate these measurements.  The recent advancement of this area can be found in numerous surveys, such as~\cite{Guo2020}.  
\\

\subsubsection{Integrating WiFi Positioning and PDR}\label{Subsubsec:Intro_Integration}

The limitations of WiFi RTT positioning and PDR mentioned above are complementary and can be overcome by their integration, which is the main theme of this work. On the one hand,  the IMUs of a smartphone excluding a magnetometer work independently of surrounding environments, playing a pivotal role in compensating an environment-dependent RTT bias. On the other hand, the positions estimated by WiFi positioning can fill the missing information to complete the user's trajectory. Motivated by the synergy effect, a few recent works have been studied in the literature, most of which rely on a technique of an \emph{extended Kalman filter} (EKF). EKF is a well-known nonlinear state estimator utilizing a series of sequential observations with measurement noises.  In \cite{Choi2020}, for example, WiFi positioning's RTT bias, PDR's step length, and heading direction are jointly calibrated using EKF by inputting the raw measurements of PDR and WiFi RTTs. A similar approach is made in \cite{Sun2020}, {where} the positions estimated by PDR and the distances converted from WiFi RTTs are used as inputs of EKF. In \cite{Yu2020}, the PDR's measurements are pre-calibrated by EKF. The result is then fused with the estimated distances from WiFi RTTs to enable 3D positioning by an unscented particle filter, another state estimation filter. In \cite{Liu2021}, EKF is used to detect and remove WiFi RTT measurements' outliers. 
It is shown that all approaches mentioned above  can provide more accurate positioning results than standalone techniques. On the other hand, the convergence of EKF is not guaranteed and sometimes diverges due to the lack of statistical knowledge of measurement noises and the linearization of nonlinear functions required to compute the inputs' covariance matrices. The resultant location {estimates} can be inconsistent depending on the selection of initial settings, calling for diversifying positioning approaches other than EKF.

\begin{table*}[]
	\caption{{Summary of Techniques Integrating WiFi \& PDR }} \label{Table:PriorWorks}
	\begin{adjustbox}{width=\textwidth}
	\begin{tabular}{l|l|c|cc|cc|cc}
		\toprule
		\multirow{2}{*}{Approach}                                                               & \multirow{2}{*}{Key references} & \multirow{2}{*}{\begin{tabular}[c]{@{}c@{}}Ranging \\ calibration \end{tabular}} & \multicolumn{2}{c|}{Integration method} & \multicolumn{2}{c|}{Step length} & \multicolumn{2}{c}{Heading direction estimation} \\ \cline{4-9} 
		
		&                                                  &                                       & EKF-based     & Geometry-based     & Given          & Unknown         & W/ magnetometer           & W/o magnetometer     \\ \hline
		WiFi RTT Positioning                                                                    
		& \cite{Si2020, Han2019,Banin2016a, Hsieh2019} & \checkmark  &    &    &   &    &    &     \\ \hline
		\multirow{2}{*}{\begin{tabular}[c]{@{}l@{}}WiFi RSS \& PDR \\ integration\end{tabular}} 
		& \cite{Vy2019}            &            &            &            &            & \checkmark & \checkmark &          \\ \cline{2-9} 
		& \cite{Choi2020b}         & \checkmark &            &            & \checkmark &            &            & \checkmark  \\ \hline
		\multirow{4}{*}{\begin{tabular}[c]{@{}l@{}}WiFi RTT \& PDR \\ integration\end{tabular}} 
		& \cite{Sun2020,  Liu2021} & \checkmark & \checkmark &            & \checkmark &            & \checkmark &         \\ \cline{2-9} 
		& \cite{Yu2020,Xu2019}     & \checkmark &            &            & \checkmark &            & \checkmark &         \\ \cline{2-9} 
		& \cite{Choi2020}          & \checkmark & \checkmark &            &            & \checkmark &            & \checkmark \\ \cline{2-9} 
		& Proposed                 & \checkmark &            & \checkmark &            & \checkmark &            & \checkmark  \\ \bottomrule
	\end{tabular}
	\end{adjustbox}
\end{table*}

\subsection{Main Contributions}

{This work aims} at designing a new hybrid positioning design combining WiFi RTTs and PDR measurements without EKF. To this end, we attempt to adopt a geometric approach describing the relationship between multiple measurements in mathematical form. This approach provides twofold benefits from the positioning perspective. First, the relations mentioned above are summarized as a list of equations, forming a \emph{system of equations} (SOE) whose unknowns are related to RTT biases, a user's step length, heading direction, {etc}. The SOE can be iteratively solved using well-known optimization techniques, guaranteeing the convergence to a local optimal, and possible to solve it by a single matrix inversion if {the SOE is linear}. Second, we can rigorously provide requirements to guarantee the uniqueness and existence of the positioning result, e.g., the minimum numbers of detected walking steps and connected APs, which are equivalent to the conditions for unique positioning. It helps design not only a practical positioning algorithm, always returning an accurate position but also efficient WiFi deployments in a concerned area.   

There have been several works adopting a geometric approach in different systems such as cellular positioning \cite{Miao2007}, RADAR \cite{Frischen2017}, and vehicular positioning \cite{Han2019a}. Furthermore, during the revision of our paper, a few recent works integrating WiFi positioning and PDR without EKF have been observed, based on using another filter \cite{Xu2019} and deep neural network \cite{Choi2020b}. On the other hand, none of the works uses a geometric approach. To the best of our knowledge, this work represents the first attempt to design a geometry-based positioning algorithm concerning the integration between FTM and PDR. We summarize the comparison with prior works in \Cref{Table:PriorWorks}.
The  main contributions are summarized~below.


\begin{itemize}
	\item \textbf{Joint estimation of an RTT bias and a step length}: We aim at enhancing  an RTT-based distance estimation by offsetting the RTT bias, coupled with PDR information like the number of walking steps, direction changes, and step length. Noting that all information  excluding step length can be measurable by PDR, an SOE is formed to jointly estimate two unknowns of RTT bias and step length. In the linear mobility case, the SOE can be transformed into a linear structure solvable by a matrix inversion when at least $4$ steps are detected. In the arbitrary mobility case, a \emph{one-dimensional} (1D) search enables us to solve the SOE, guaranteeing its unique solution if at least $5$ steps are detected. Besides, the positioning error due to measurement noises can be significantly reduced by leveraging diversified positioning results through multiple SOEs formed by different step combinations. 
	\item \textbf{New positioning method using a trajectory alignment}: The user's sequential positions during his movement can be estimated by aligning multiple trajectories derived from the estimated RTT bias of each WiFi AP and step length. It is equivalent to find the user's initial heading direction, another information challenging to be estimated using a smartphone, as aforementioned. The minimum number of WiFi APs required to {finding} the user's position uniquely is $3$ or $2$ depending on the linear or arbitrary mobility pattern, respectively. Given the requirement of the minimum number of APs and with measurement noises, we develop a 1D search-based  algorithm to make all trajectories aligned as closely as possible.  
	\item \textbf{Verification by field experiments}: The proposed positioning algorithms are evaluated based on field experiments and found to be effective. Specifically, the resultant {average} positioning error is reduced to $1.71$ (m) in unfavorable environments like an underground parking lot, where numerous obstacles exist, such as vehicles, walls, and pillars.  
\end{itemize}

The remainder of the paper is organized as follows. Section \ref{Sec:SystemModel} introduces the system model, measurement procedures, and the overview of the entire positioning algorithm. Section \ref{Section:RTTRangingEnhancement} presents a technique jointly estimating RTT bias and step length for a single WiFi AP. 
Given the enhanced ranging results of multiple WiFi APs, the positioning algorithm finding the user's location is developed  
in Section \ref{Sec:TrajectoryAlignment}. Experiment results are presented in Section \ref{Sec:Experiement}, followed by concluding remarks in Section \ref{Sec:Conclusion}.

\section{System Model}\label{Sec:SystemModel}

Consider the scenario comprising one user with a smartphone and $M$ WiFi APs, denoted by a set $\mathbb{M}=\{1,\cdots M\}$, as shown in Fig. \ref{Fig:SystemModel}. The smartphone has built-in IMU sensors and a WiFi module  required to measure the user's mobility pattern and RTTs, respectively. The detailed measurement procedures are firstly explained. 
Next, the geometric relations between the measurements and the user's locations are derived. Last, the overview of the proposed approach is described. 

\begin{figure}[t]
	\centering
	\includegraphics[width=8cm]{./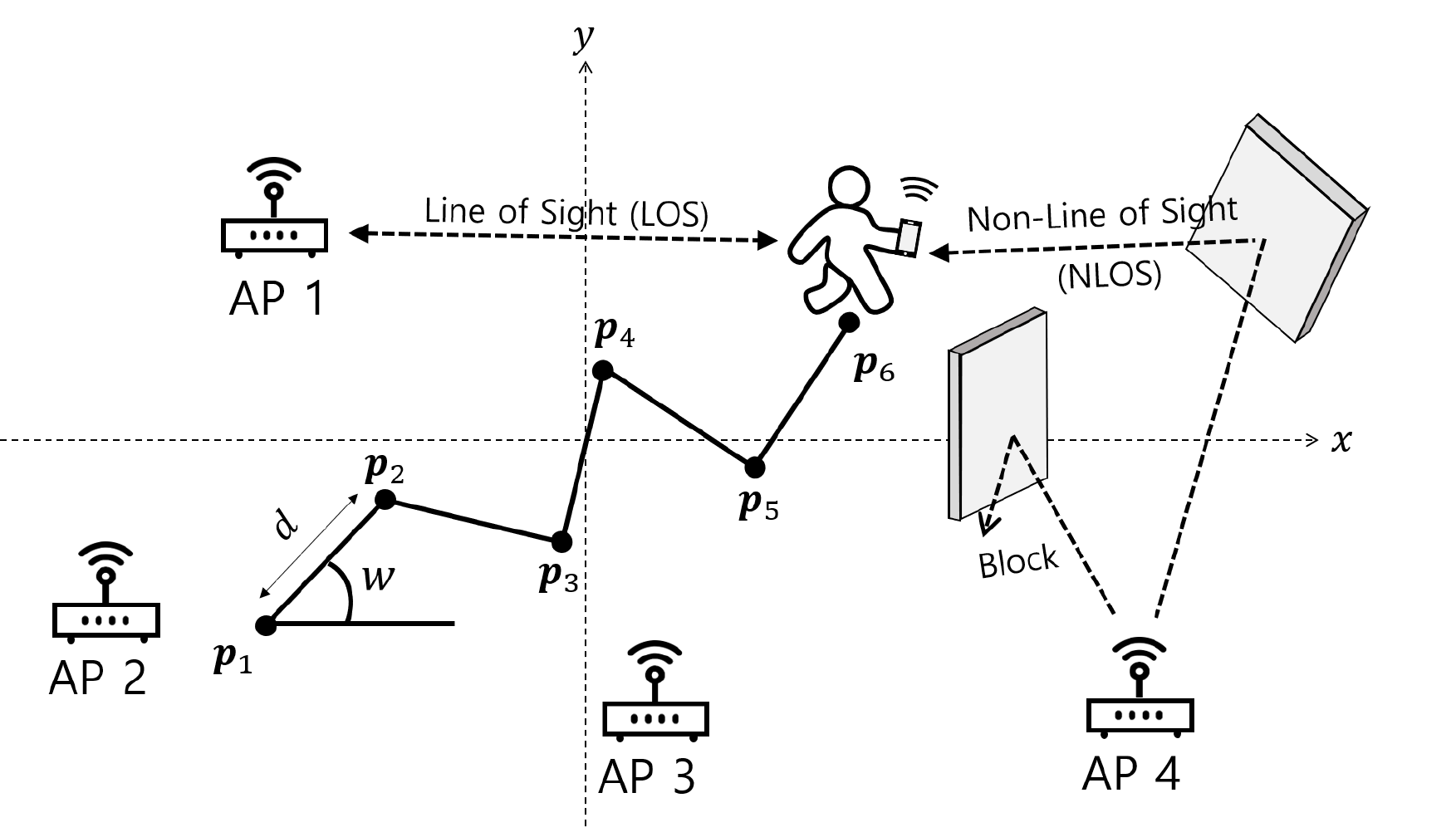}
	\caption{The scenarios illustration with one walking user holding a smartphone and multiple WiFi APs ($M=4$).  Depending on the user's location, a different propagation condition to each AP is determined between LOS and NLOS, but the user does not have any prior knowledge related to it.   }
	\label{Fig:SystemModel}
\end{figure}

\subsection{Measurements}
This subsection explains the measurement procedures and the corresponding outputs required to form geometric relations in the next subsection. 

\subsubsection{Mobility Pattern}
Among IMU sensors in the smartphone, we use an accelerometer and a gyroscope to detect walking steps and turning directions, respectively\footnote{This work does not use a magnetometer due to its distortion as stated in Sec.~\ref{Subsubsec:Intro_PDR}. Incorporating the measurement of a magnetometer with advanced calibration techniques such as \cite{Malyugina2014} and \cite{Vasconcelos2011} is interesting, deserving further investigation in the future. }. To be specific, the accelerometer's up-and-down acceleration is referred to as the user's one walking step. Consider that $N$ walking steps are detected, each of which the index is $n\in\mathbb{N}$, where $\mathbb{N}=\{1,\cdots N\}$. 
The instant of detecting the $n$-th step is denoted by $t_n$. 
When detecting a walking step at $t_n$, the smartphone checks its gyroscope between $t_n$ and $t_{n+1}$, whether the user's movement direction is changed or not. Denote $\mu_n$ the change of his moving direction when the $n$-th step is detected. We set $\mu_n=0$ unless the direction is changed. Besides, denote $\theta_n$ the accumulate direction change by the $n$-th step, namely,  $\theta_n=\sum_{k=1}^n \mu_k$. The initial values of $\mu_1$ and $\theta_1$ are zero without loss of generality. All measurements mentioned above are illustrated in Fig.~\ref{Fig:Moving}(a).


\subsubsection{RTT} The smartphone's WiFi modules and all WiFi APs support IEEE 802.11 mc or later specifications, enabling the measurement of RTTs between them via the FTM protocol. The FTM protocol is initiated whenever a walking step is detected at time $t_n$. Denote $\boldsymbol{\tau}_n=\left[\tau_n^{(1)},\cdots, \tau_n^{(M)}\right]$ the vector of the measured RTTs corresponding to the $n$-th walking step, where $\tau_n^{(m)}$ represents the RTT measurement from WiFi AP $m$ at $t_n$. 

Noting that it takes negligible time to complete an FTM procedure (approximately $30$~ms according to \cite{Banin2017}), it is reasonable to assume that all measurements concerning the $n$-th walking step are synchronized. As a result, we group them as a set of measurement denoted by $\mathcal{M}_n=\{t_n, \theta_n, \boldsymbol{\tau}_n\}$, $\forall n\in \mathbb{N}$. For ease of exposition, the parts described from now on are assumed to be free from measurement errors unless specified.  
However, the proposed algorithms explained in the sequel are designed to be able to work well in the presence of measurement errors.

\subsection{Relations between the Measurements}

\addtocounter{footnote}{-1}
\begin{figure}[t]
	\centering
		\subfigure[ The accelerometer and gyroscope's measurements]{\includegraphics[width=8cm]{./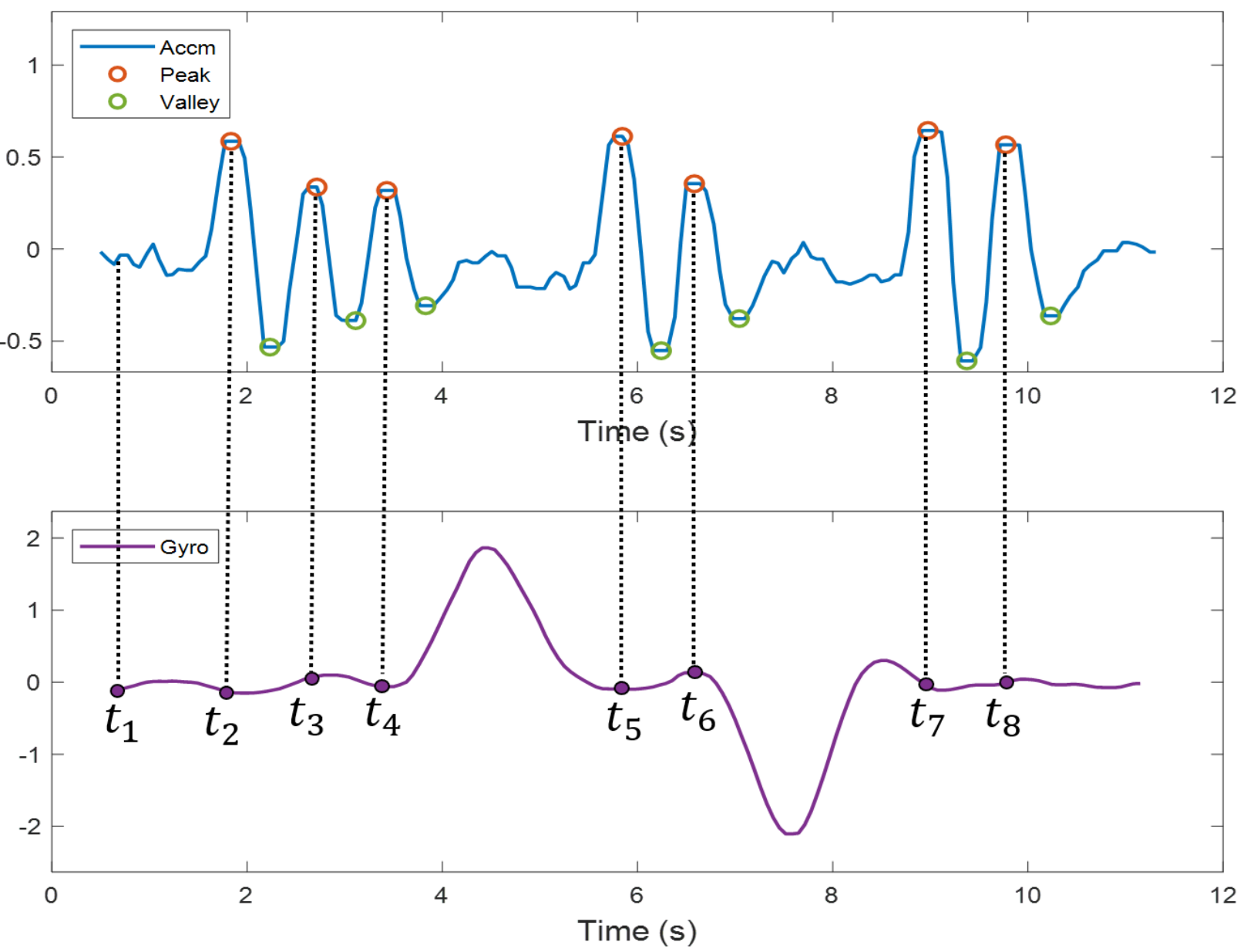}}
		\subfigure[{The user's trajectory}]{\includegraphics[width=8cm]{./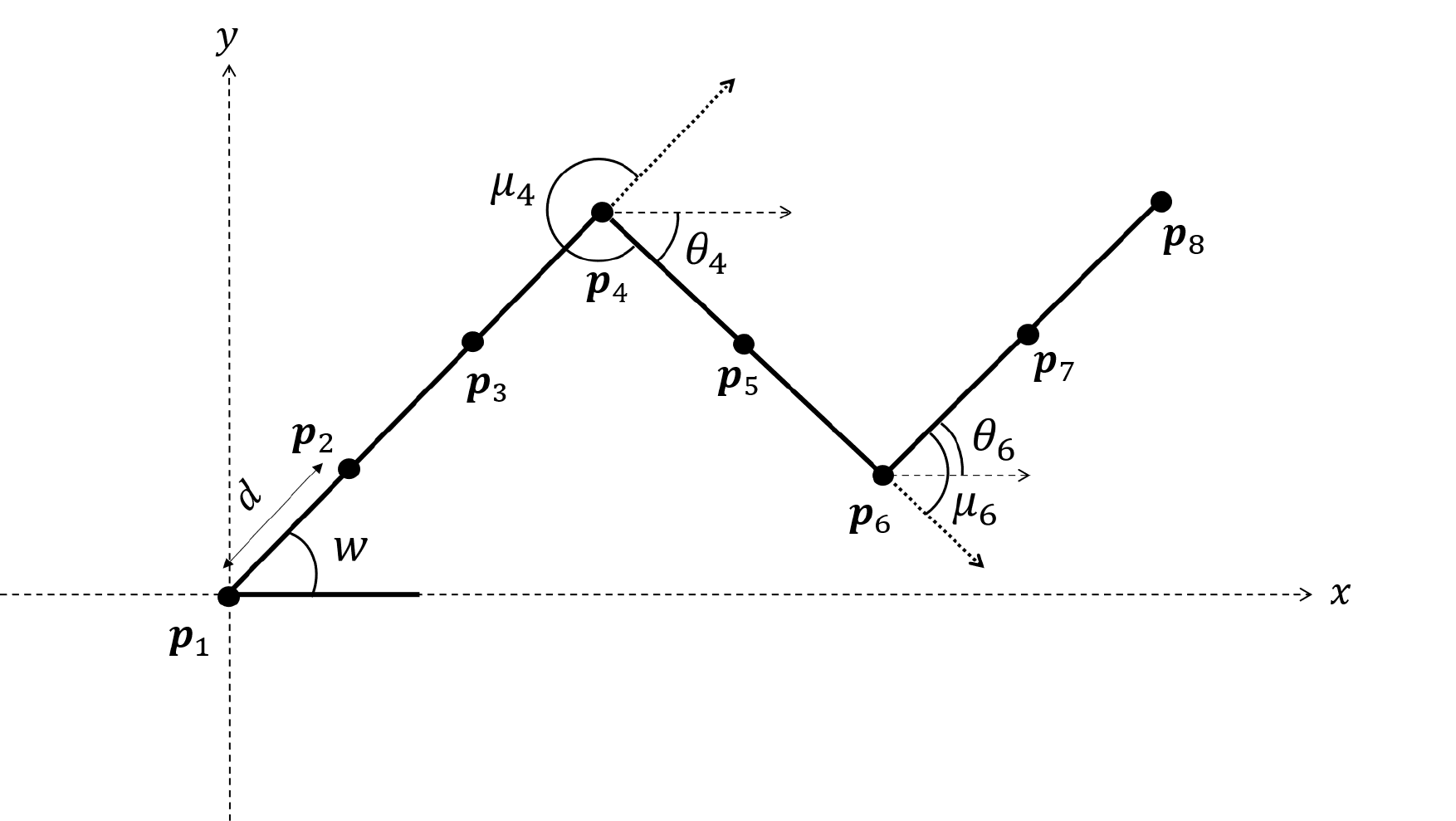}}
	\caption{The graphical example of  the accelerometer and gyroscope's measurements and the corresponding mobility pattern \protect\footnotemark. The accelerometer's peak point is recorded as the instant of each walking step. The direction change can be detected from the gyroscope's change between adjacent instants. }
	\label{Fig:Moving}
\end{figure}
\footnotetext{{It is implicitly assumed that accelerometer and gyroscope's measurements are relatively accurate and consistent by fixing the smartphone's location. Otherwise, severe measurement errors will be introduced, resulting in inaccurate estimations of step counting and direction changes.  It is interesting to extend the current design without the assumption by adopting the well-known pose-detecting algorithms presented in the literature, e.g., \cite{Wang2018b} and \cite{Lee2019}, outside the scope of current work.}}

We first explain the sequential change of the user's location due to user mobility.  Consider a 2D global Cartesian coordinate system, as illustrated in Fig. \ref{Fig:Moving}(b). 
Denote $\boldsymbol{p}_n=[x_n, y_n]^T$ the coordinates of the user's location concerning to the measurement set $\mathcal{M}_n$. The relation between  which is given~as
\begin{align} \label{eq:temporal_relation}
\boldsymbol{p}_{n+1} = \boldsymbol{p}_{n}+ d
\begin{bmatrix}
\cos(\omega+\theta_{n}), \sin(\omega+\theta_{n})
\end{bmatrix}^T,
\end{align}
where $\omega$ is a heading direction when the first walking step is detected at $t_1$. The variable $d$ represents the user's step length assumed to be constant, reflecting on a typical user's regular walking pattern\footnote{{It is verified from \cite{Jasuja1997} and \cite{Menz2003} that the assumption of constant step length is valid when a user’s activity pattern is unchanged, e.g., walking, running, and sitting. As a result, monitoring the user’s activity pattern using well-known methods, such as \cite{Bisio2012} and \cite{Zhou2019}, enables to seamlessly use the proposed algorithm by dividing the full trajectory into several sub-trajectories depending on the activity pattern.}}. Both $\omega$ and $d$ are unknowns to be estimated. 

Second, the relation of the RTT $\tau_n^{(m)}$ to the user's location $\boldsymbol{p}_n$ is described as follows. The RTT $\tau_n^{(m)}$ can be translated into the propagation distance by multiplying $\frac{c}{2}$ where $c$ is light speed. It is in general biased and larger than the direct distance  
$\parallel\boldsymbol{p}_{\text{AP}}^{(m)}-\boldsymbol{p}_{n}\parallel$, where $\boldsymbol{p}_{\text{AP}}^{(m)}=[x_{\text{AP}}^{(m)}, y_{\text{AP}}^{(m)}]^T$ is AP $m$'s coordinates and $\parallel\cdot \parallel$ represents the Euclidean distance. The relation between the two is thus given as
\begin{align} 
\frac{c\cdot \tau_n^{(m)}}{2}=\parallel \boldsymbol{p}_{\text{AP}}^{(m)}-\boldsymbol{p}_{n} \parallel+b,\quad \forall n\in\mathbb{N}, \quad \forall m\in\mathbb{M}, \nonumber
\end{align}
where $b$ represents the AP $m$'s RTT bias. The bias $b$ is a random variable drawn from an unknown stochastic distribution, since it is affected by various factors such as {carrier frequency, bandwidth, and surrounding materials \cite{Horn2020}}, and it is too complex to derive the distribution in a tractable form. The randomness of $b$ is one dominant reason hampering accurate positioning. On the other hand, we regard $b$ as a deterministic value as stated in the following assumption, which helps design a tractable algorithm in the sequel. 

\begin{assumption}[Deterministic Bias]\label{Assumption:Bias}\emph{The bias experienced from the same WiFi AP is assumed to be constant but unknown, denoted by $b^{(m)}$ for all $m\in\mathbb{M}$.}
\end{assumption}

\begin{remark}[Effect of Deterministic Bias]\emph{
{In general, RTT bias is not deterministic but random in a real environment, a key factor making indoor WiFi positioning challenging. Instead of directly addressing the randomness, we use the assumption of deterministic bias as a trick, allowing us to design a geometric positioning algorithm introduced in the sequel. It is worth highlighting that  the error due to the deterministic bias assumption can be reduced as marginal using two processes in the proposed algorithm, namely, reference step selection and their combination approach explained in Sec.~\ref{Section:RTTRangingEnhancement}. Their effect is well-explained with relevant experimental results in Appendix \ref{Subsec:EffectofBias}. }
}

\end{remark}
As a result, the above is rewritten as  
\begin{align}\label{Eq:DistanceRelation}
\parallel \boldsymbol{p}_{\text{AP}}^{(m)}-\boldsymbol{p}_{n} \parallel+b^{(m)}&=
 \frac{c\cdot \tau_n^{(m)}}{2}\nonumber\\
 &=r_n^{(m)},\quad \forall n\in\mathbb{N}, \quad \forall m\in\mathbb{M},
\end{align}
where $r_n^{(m)}$ represents the propagation distance directly obtained from the RTT $\tau_n^{(m)}$.    

\subsection{Procedure Overview}

\begin{figure}[t]
	\centering
	\includegraphics[width=9cm]{./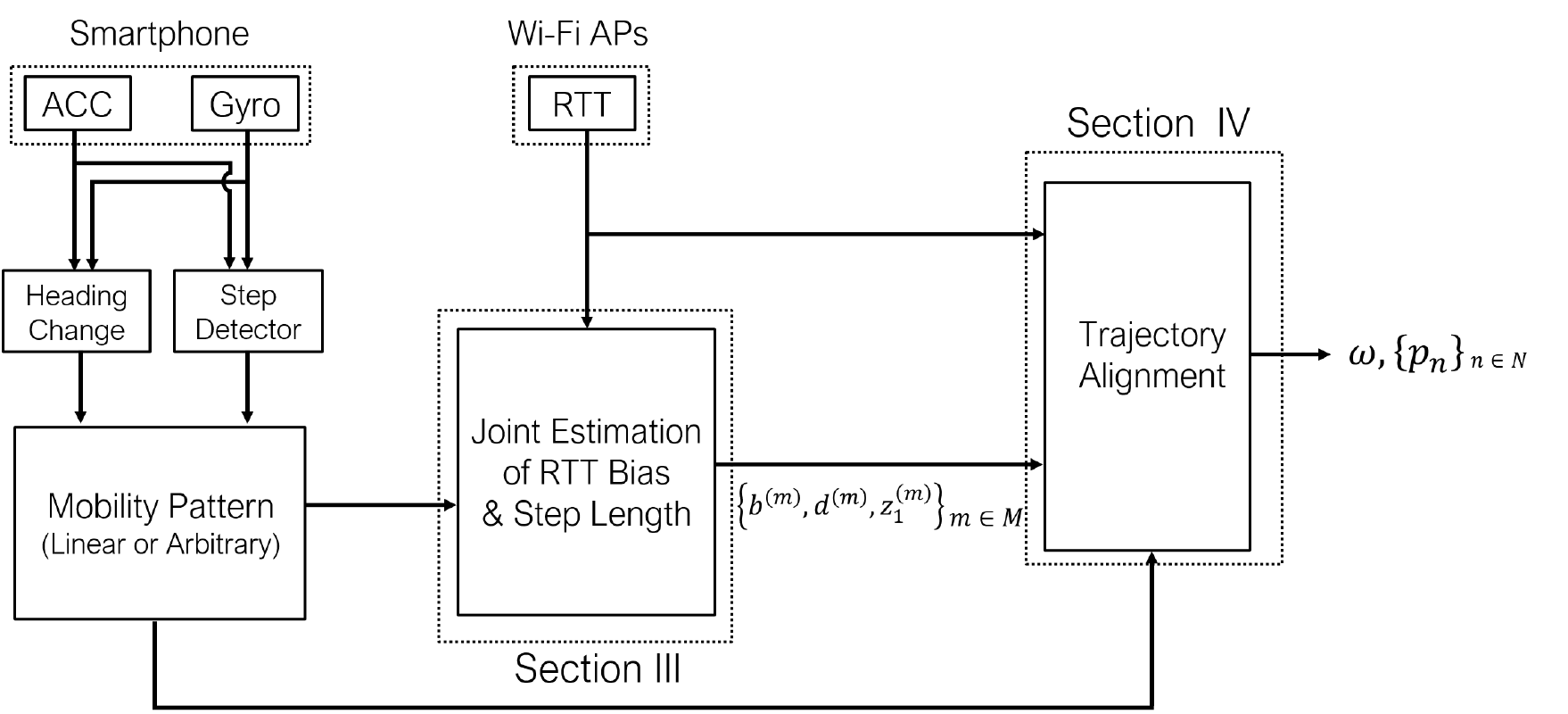}
	\caption{{Overview of the proposed algorithm. The step length $d^{(m)}$ and RTT bias $b^{(m)}$, initial relative position $z_1^{(m)}$, initial heading direction $\omega$, and user’s position $\{p_n\}$ are variables we estimate throughout the paper, specified in (\ref{Eq:Optimal_Solution_Median}), (\ref{Eq:Optimal_Solution_Median2}), (\ref{Eq:TAErrorFunction}), and (\ref{Eq:GlobalPosition}), respectively.}} 
	\label{Fig:Procedure} 
\end{figure}

{This subsection previews the entire procedure as shown in Fig. \ref{Fig:Procedure}. The user's locations $\{\boldsymbol{p}_n\}$ are estimate using the following two-stage approach. }

\subsubsection{Joint Bias-and-Step Length Estimation}
First, the RTT bias of each AP and the user's step length are jointly estimated. Let us explain the case of AP $m$ as an example. 
Collect all RTTs measured from AP $m$ and the user's direction changes, denoted by $\boldsymbol{\tau}^{(m)}=[{\tau}_1^{(m)}, \cdots, {\tau}_N^{(m)}]$ and $\boldsymbol\theta=[\theta_1,\cdots,\theta_N]$, respectively. Using $\boldsymbol{\tau}^{(m)}$ and  $\boldsymbol\theta$, the AP's deterministic bias $b^{(m)}$ and the user's step length, say $d^{(m)}$, can be founded by solving SOE derived from the geometric relations of \eqref{eq:temporal_relation} and \eqref{Eq:DistanceRelation}. The detailed procedure is elaborated in Section \ref{Section:RTTRangingEnhancement}.  
\subsubsection{User Location Estimation} 
Second, the sequence of the user's locations $\{\boldsymbol{p}_n\}$ are estimated using the above estimations of $d^{(m)}$ and $b^{(m)}$ that provide the user's relative locations $\{\boldsymbol{z}_n^{(m)}\}$ from AP $m$. All relative locations can be aligned into single ones using a new positioning method, corresponding to the user's real locations $\{\boldsymbol{p}_n\}$. The detailed procedure is elaborated in Section~\ref{Sec:TrajectoryAlignment}.

\section{Enhancing RTT-Based Ranging via Joint Bias \& Step Length Estimation}\label{Section:RTTRangingEnhancement}

In this section, we aim {at improving RTT-based ranging performance} by jointly estimating the RTT bias and step length.
First, a coordinate system is transformed to facilitate algorithm designs. Next, joint bias-and-step length estimation algorithms are developed for cases of linear and arbitrary mobilities. Last, a new algorithm based on multiple combinations of steps is proposed to reduce  performance degradation due to measurement noises. 

\begin{figure}[t]
	\centering
	\subfigure[Linear mobility]{\includegraphics[width=7cm]{./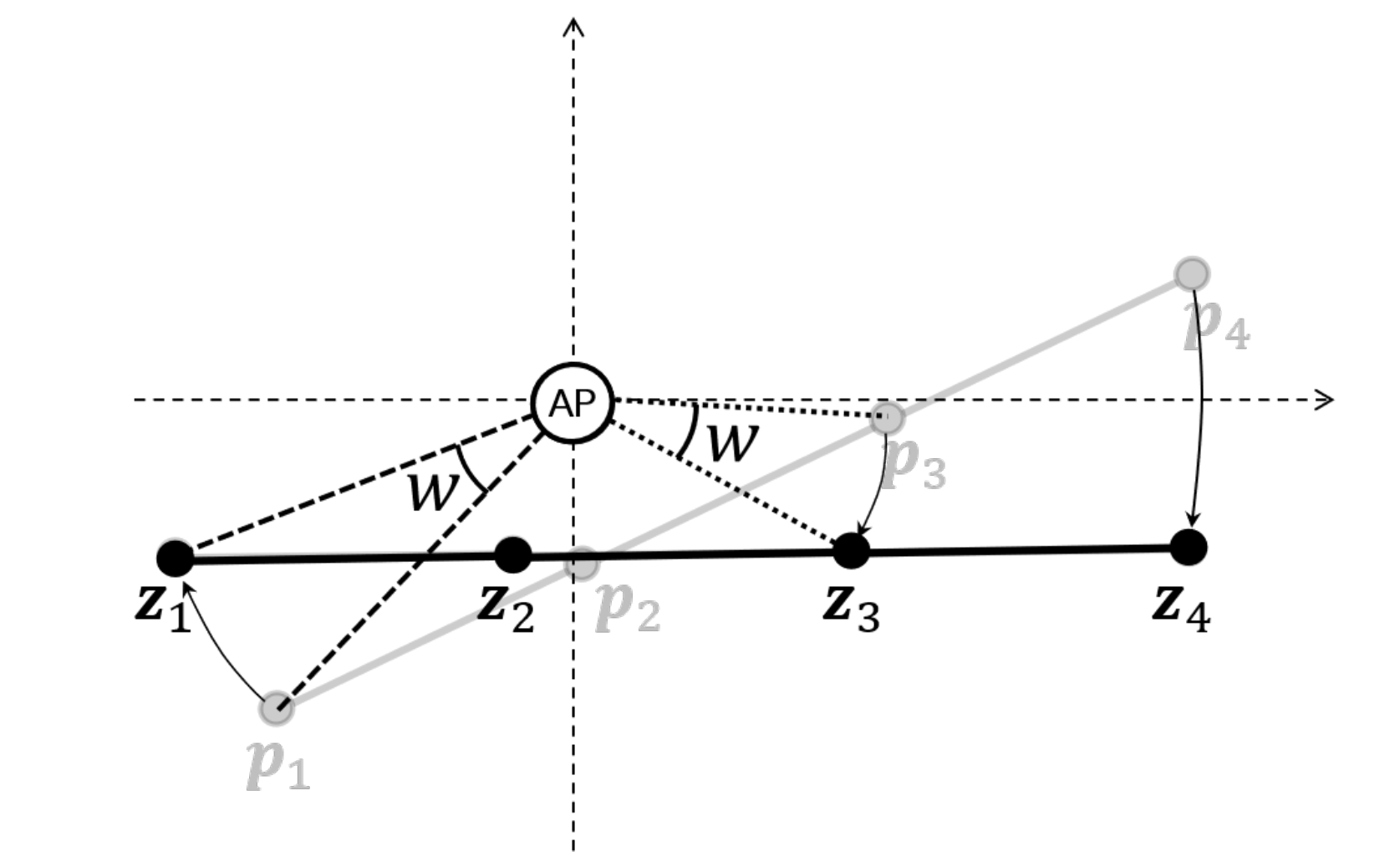}}
	\subfigure[Arbitrary mobility]{\includegraphics[width=7cm]{./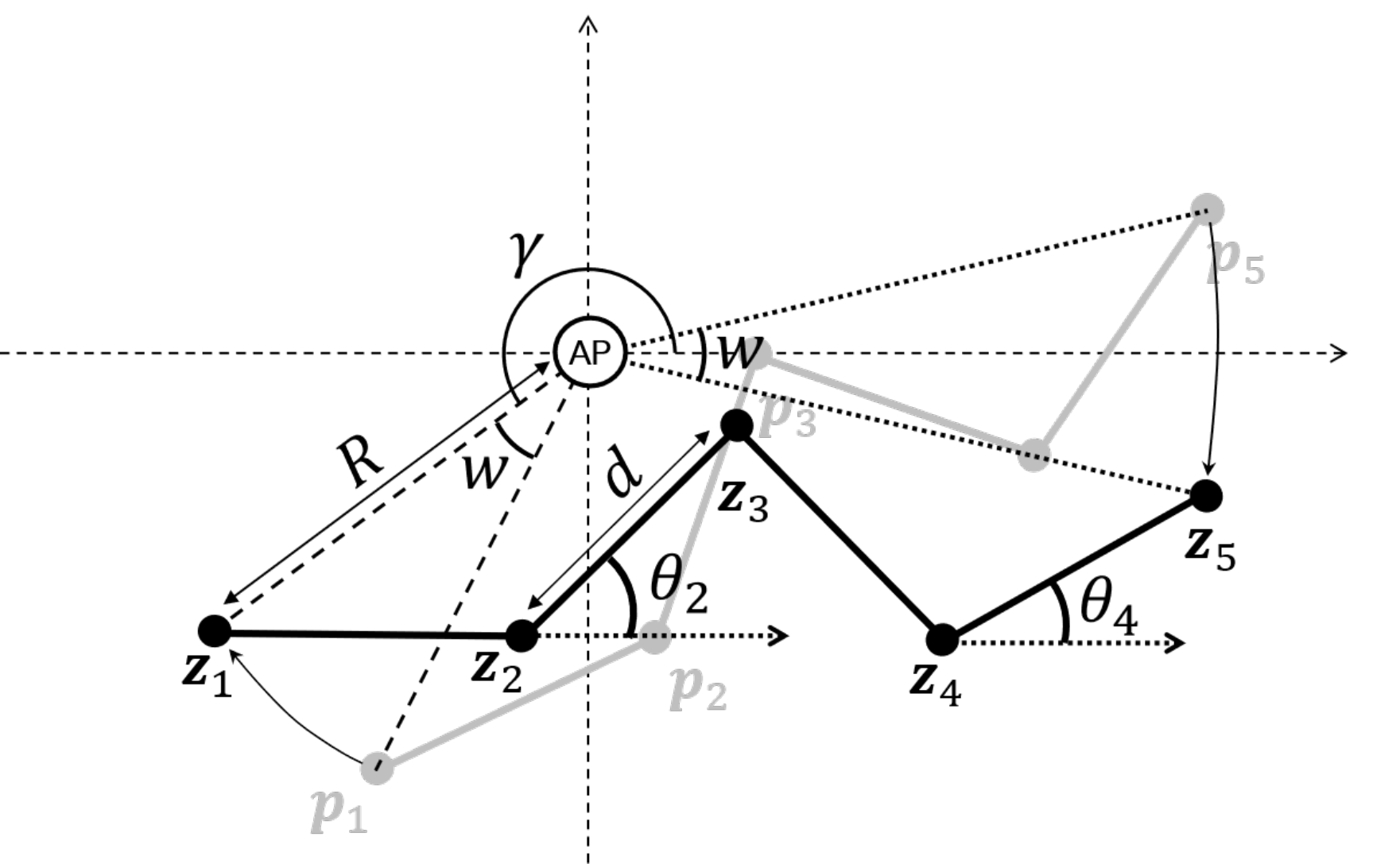}}
	\caption{Transformation to a local coordinate system whose origin is the AP's location and X-axis is aligned to the initial heading direction $\omega$.}
	\label{Fig:Two_case}
\end{figure}

\subsection{Transformation to a Local Coordinate System}

This section focuses on a pair of the user and a single WiFi AP.  For brevity, the location and the bias of the concerned AP are denoted by $\boldsymbol{p}_{\text{AP}}$ and $b$, respectively, by omitting the AP's index $m$. Then, we transform a global coordinate system into a local coordinate one by shifting the origin into $\boldsymbol{p}_{\text{AP}}$ and rotating the X-axis aligned with the heading direction $\omega$, as shown in Fig.~\ref{Fig:Two_case}.  Denote $\boldsymbol{z}_{n}=[q_n, u_n]^T$ the user's location redefined in the local coordinates. Then, 
\eqref{eq:temporal_relation} and \eqref{Eq:DistanceRelation} are rewritten as
\begin{align} \label{eq:temporal_relation_local}
\boldsymbol{z}_{n+1} =& \boldsymbol{z}_{n}+ d
\begin{bmatrix}
\cos(\theta_{n}), \sin(\theta_{n})
\end{bmatrix}^T\nonumber\\
=&\boldsymbol{z}_{1}+ d 
\begin{bmatrix}
\sum_{j=1}^{n} \cos(\theta_{j}), \sum_{j=1}^{n} \sin(\theta_{j})
\end{bmatrix}^T,
\end{align}
\begin{align} \label{Eq:DistanceRelation_local}
r_n=  \norm{\boldsymbol{z}_{n}} +b.
\end{align}
Plugging \eqref{eq:temporal_relation_local} into \eqref{Eq:DistanceRelation_local} gives
\begin{align}\label{Eq:ProblemFormulation1}
r_n= \norm[\bigg]{\boldsymbol{z}_{1}+  d 
\begin{bmatrix}
\sum_{j=1}^{n-1} \cos(\theta_{j}), \sum_{j=1}^{n-1} \sin(\theta_{j})
\end{bmatrix}^T} 
+b,
\end{align}
which is rewritten in terms of $q_1$ and $u_1$ as
\begin{align}
	\left(q_1+d c_n\right )^2 + \left(u_1+d s_n \right)^2 = (r_{n}-b)^2, \quad n\in\mathbb{N},
	\label{eq_nonlinearModel}
\end{align}
where $c_n=\sum_{j=1}^{n-1} \cos(\theta_{j})$ and $s_n=\sum_{j=1}^{n-1} \sin(\theta_{j})$. Depending on the user's mobility pattens, different algorithms are designed, introduced in the following subsections. 

\subsection{Case 1: Linear Mobility}\label{Subsec:LinearMobility}
Consider the case when the user is moving in a straight line without any direction change, which can be detectable by observing $\{\theta_n\}$, namely, $\theta_n=0$, $\forall n\in \mathbb{N}$ [see Fig. \ref{Fig:Two_case}(a)]. It makes $c_n=1$ and $s_n=0$ for all $n$, and \eqref{eq_nonlinearModel} can be reduced as
\begin{align}\label{Eq:LinearMobilityCase}
q_1^2+2 n q_1 d+n^2d^2+u_1^2=r_{n}^2-2 r_n b +b^2, \quad n\in\mathbb{N}.
\end{align}
Next, the nonlinear terms in \eqref{Eq:LinearMobilityCase} related to $q_1$ and $b$ are eliminated by choosing two \emph{reference steps} (RSs), denoted by the subset $\mathbb{S}=\{{a_1}, {a_2}\} \subset \mathbb{N}$. The algorithm covering the selection of RSs is explained in the sequel. Given $\mathbb{S}$, subtracting \eqref{Eq:LinearMobilityCase} of the reference steps $a_1$ or $a_2$ from that of the $n$-step gives
\begin{align}\label{Eq:LinearMobilityCase2}
2(n-a)q_1 d +(n^2-a^2)d^2=r_n^2-r_a^2-2b(r_n-r_a), \quad a\in\mathbb{S}.
\end{align}
The first term can be called out by manipulating the above two as
\begin{align} \label{Eq:LinearMobilityCase2}
d^2(a_1-a_2) + &b \cdot 2 \left(\frac{r_n-r_{a_1}}{n-a_1}-\frac{r_n-r_{a_2}}{n-a_2}\right)\nonumber\\
=&\frac{r_n^2-r_{a_1}^2}{n-a_1}-\frac{r_n^2-r_{a_2}^2}{n-a_2}, \quad n\in \mathbb{N}\cap\mathbb{S}^c, 
\end{align}
which is linear of $d^2$ and $b$. As a result, we formulate a system of linear equations with two unknowns $d^2$ and $b$ as
\begin{align}\label{Eq:Linearization_Linear}\tag{E1}
	\boldsymbol{A}(\mathbb{S})\boldsymbol{x}=\boldsymbol{b}(\mathbb{S}),
\end{align}
where $\boldsymbol{x}=[d^2,b]^T$. For matrix $\boldsymbol{A}(\mathbb{S})$ and vector $\boldsymbol{b}(\mathbb{S})$, we have\footnote{For ease of exposition, the matrix $\boldsymbol{A}$ and the vector $\boldsymbol{b}$ in \eqref{Eq:A_B_Linear} and \eqref{Eq:A_B_Arbitrary} are expressed assuming $\mathbb{S} \cap \{1, N\}=\emptyset$. }
\begin{align}\label{Eq:A_B_Linear}
&\boldsymbol{A}(\mathbb{S}) = 
\begin{bmatrix}
a_1-a_2 & 2 \left(\frac{r_1-r_{a_1}}{1-a_1}-\frac{r_1-r_{a_2}}{1-a_2}\right)\\ 
\vdots & \vdots\\ 
a_1-a_2 & 2 \left(\frac{r_N-r_{a_1}}{N-a_1}-\frac{r_N-r_{a_2}}{N-a_2}\right)
\end{bmatrix}\in \mathbb{R}^{(N-2)\times 2}, \nonumber\\
&\boldsymbol{b}(\mathbb{S})=
\begin{bmatrix}
\frac{r_1^2-r_{a_1}^2}{1-a_1}-\frac{r_1^2-r_{a_2}^2}{1-a_2}\\
\vdots \\
\frac{r_N^2-r_{a_1}^2}{N-a_1}-\frac{r_N^2-r_{a_2}^2}{N-a_2}
\end{bmatrix}\in \mathbb{R}^{(N-2)\times 1}.
\end{align}
Problem \ref{Eq:Linearization_Linear} comprises $(N-2)$ equations with two unknowns of $d$ and $b$. It is thus straightforward to provide the feasibility condition of \ref{Eq:Linearization_Linear} as follows.
\begin{proposition} [Feasible Condition: Linear Mobility]\label{Proposition1}
\emph{Problem \ref{Eq:Linearization_Linear} has a unique solution if the number of detected steps are at least $4$, namely, $N\geq 4$.}
\end{proposition}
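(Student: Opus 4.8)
The plan is to treat \eqref{Eq:Linearization_Linear} as a linear system $\boldsymbol{A}(\mathbb{S})\boldsymbol{x}=\boldsymbol{b}(\mathbb{S})$ with two unknowns $\boldsymbol{x}=[d^2,b]^T$ and $N-2$ equations, and to identify exactly when it pins $\boldsymbol{x}$ down. A consistent linear system in two unknowns has a unique solution if and only if its coefficient matrix has full column rank, i.e. $\rank\boldsymbol{A}(\mathbb{S})=2$. I would first record that the noise-free system is consistent: each row of \eqref{Eq:Linearization_Linear} was produced from \eqref{Eq:LinearMobilityCase} for the triple $(n,a_1,a_2)$ by two linear eliminations, so the true pair $[d^2,b]^T$ satisfies every row by construction. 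Necessity of the bound then follows from a dimension count: $\boldsymbol{A}(\mathbb{S})$ has only $N-2$ rows, so $\rank\boldsymbol{A}(\mathbb{S})=2$ is possible only when $N-2\ge2$, i.e. $N\ge4$.

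It remains to argue sufficiency — that for $N\ge4$ and non-degenerate geometry the rank is indeed $2$. From \eqref{Eq:A_B_Linear} the first column of $\boldsymbol{A}(\mathbb{S})$ is the constant vector $(a_1-a_2)\mathbf{1}$, which is nonzero since $a_1\neq a_2$; so it suffices to show the second column is not a scalar multiple of it, i.e. its entries $h_n:=2\!\left(\tfrac{r_n-r_{a_1}}{n-a_1}-\tfrac{r_n-r_{a_2}}{n-a_2}\right)$ are not all equal. Here I would invoke Assumption~\ref{Assumption:Bias}: writing $r_n=g_n+b$ with $g_n:=\norm{\boldsymbol{z}_n}=\sqrt{(q_1+nd)^2+u_1^2}$, the common bias cancels in each difference quotient, so $h_n$ is a function of the geometry alone. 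Viewing $g$ as a function of a continuous index, when the user is actually moving ($d\neq0$) and the straight path does not pass through the AP ($u_1\neq0$), $g$ traces a branch of a hyperbola and is in particular not affine; one then checks that $h_n\equiv\kappa$ would force the graph of $g$ to be a line, a contradiction. Hence $\rank\boldsymbol{A}(\mathbb{S})=2$, and the unique solution is $\boldsymbol{x}=(\boldsymbol{A}^T\boldsymbol{A})^{-1}\boldsymbol{A}^T\boldsymbol{b}$, which in the noise-free case recovers the true $d=\sqrt{d^2}>0$ and $b$.

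The genuine obstacle is the rank statement, not the $N\ge4$ count: the second column of $\boldsymbol{A}(\mathbb{S})$ can collapse — e.g. for a walk aimed straight at (or away from) the AP, $u_1=0$ makes every $h_n$ vanish and the solution non-unique — so the proposition implicitly presupposes such trivial geometries are excluded, and the non-affineness argument above is the clean way to make this genericity precise. Two minor points worth noting in the write-up: the system returns $d^2$ rather than $d$, so positivity of $d$ fixes the sign; and the same rank condition is exactly what makes the least-squares solution of \eqref{Eq:Linearization_Linear} well defined once measurement noise is reinstated, which is the hook for the multi-combination refinement later in the section.
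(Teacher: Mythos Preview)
Your argument is correct and in fact considerably more careful than the paper's own treatment. The paper does not give a separate proof of Proposition~\ref{Proposition1}: it simply observes, in the sentence immediately preceding the statement, that \ref{Eq:Linearization_Linear} comprises $(N-2)$ equations in the two unknowns $d^2$ and $b$, and declares the feasibility condition ``straightforward'' from that count. Your dimension-counting step matches this exactly.

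Where you go further is in addressing the full column rank of $\boldsymbol{A}(\mathbb{S})$ and in flagging the degenerate geometries (e.g.\ $u_1=0$, the straight walk through the AP) under which the second column collapses and uniqueness fails. The paper tacitly assumes such configurations away and never revisits the rank question; your hyperbola/non-affineness argument is precisely the kind of genericity check that makes the proposition honest, and it is a genuine addition rather than a different route. If anything, you could shorten the write-up to the dimension count alone to match the paper's level of rigor, and relegate the rank discussion to a remark --- but keeping it is the better mathematics.
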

With $N\geq 4$, \ref{Eq:Linearization_Linear} can be solved by 
\begin{align}\label{Solution_Linear}
\boldsymbol{x}^*(\mathbb{S})=&[(d^*(\mathbb{S}))^2, b^*(\mathbb{S})]^T\nonumber\\
=&[\boldsymbol{A}(\mathbb{S})^T\boldsymbol{A}(\mathbb{S})]^{-1}\boldsymbol{A}(\mathbb{S})^T\boldsymbol{b}(\mathbb{S}). 
\end{align}

In the presence of significant measurement noise, the matrix $\boldsymbol{A}(\mathbb{S})$ and the vector $\boldsymbol{b}(\mathbb{S})$ are corrupted, denoted by $\tilde{\boldsymbol{A}}(\mathbb{S})$ and $\tilde{\boldsymbol{b}}(\mathbb{S})$. Then, \ref{Eq:Linearization_Linear} is replaced as the following minimization problem:
\begin{align}\label{Eq:Problem_minimization_linear}
\boldsymbol{x}^*(\mathbb{S})&=\arg\min_{\boldsymbol{x}} \norm{\tilde{\boldsymbol{A}}(\mathbb{S})\boldsymbol{x}-\tilde{\boldsymbol{b}}(\mathbb{S})}\nonumber\\
&=[\tilde{\boldsymbol{A}}(\mathbb{S})^T\tilde{\boldsymbol{A}}(\mathbb{S})]^{-1}\tilde{\boldsymbol{A}}(\mathbb{S})^T\tilde{\boldsymbol{b}}(\mathbb{S}),
\end{align}
which has the same structure as \eqref{Solution_Linear}. 

Given the pair of $d^*(\mathbb{S})$ and $b^*(\mathbb{S})$, we derive the coordinates of $\boldsymbol{z}_1$, denoted by 
$\boldsymbol{z}_1^*(\mathbb{S})=[q_1^*(\mathbb{S}), u_1^*(\mathbb{S})]$.  
First, plugging $\{d^*(\mathbb{S}), b^*(\mathbb{S})\}$ of \eqref{Solution_Linear} or \eqref{Eq:Problem_minimization_linear} into \eqref{Eq:LinearMobilityCase2} leads to deriving $q_1^*(\mathbb{S})$ as  
\begin{align}\label{Eq:q1_Linear}
q_1^*(\mathbb{S}) =\frac{\sum_{a\in\mathbb{S}}\sum_{n\in\mathbb{N}, n\neq a} \frac{r_n^2-r_a^2-2b^*(\mathbb{S})(r_n-r_a)-(n^2-a^2)(d^*(\mathbb{S}))^2}{2(n-a) d^*(\mathbb{S})}}{2(N-1)}.
\end{align}
Next, there exist two possible solutions for $u_1^*(\mathbb{S})$ satisfying \eqref{Eq:LinearMobilityCase} as $u_1^*(\mathbb{S})$ and $-u_1^*(\mathbb{S})$, where
\begin{align}\label{Eq:u1_Linear}
u_1^*(\mathbb{S})&=\frac{1}{N}\sum_{n\in\mathbb{N}}\Big(r_{n}^2-2 r_n b^*(\mathbb{S}) +\big(b^*(\mathbb{S})\big)^2 \nonumber\\
&\quad-\big(q_1(\mathbb{S})\big)^2 +2 n q_1(\mathbb{S}) d^*(\mathbb{S})+n^2\big(d^*(\mathbb{S})\big)^2\Big)^{\frac{1}{2}}. 
\end{align}
It is challenging to distinguish which one is correct under the current setting of a single WiFi AP. On the other hand, it is possible to discriminate the correct one if multiple APs are given, explained in the next section.

\subsection{Case 2: Arbitrary Mobility} \label{Subsec:ArbitraryMobility}
Consider the case when the user is randomly moving, namely, $\theta_n\neq0$ for some $n\in\mathbb{N}$ [see Fig. \ref{Fig:Two_case}(b)].
Letting $R=\sqrt{q_1^2+u_1^2}$ and $\gamma=\tan^{-1}\left(\frac{u_1}{q_1}\right)$ makes \eqref{eq_nonlinearModel} as
\begin{align}
&q_1^2+u_1^2+2d(q_1c_n+u_1s_n)+d^2(c_n^2+s_n^2)\nonumber\\
	\overset{(a)}{=}&R^2+2d R \left(\sum_{i=1}^{n-1}\cos(\theta_{i}-\gamma)\right)+d^2(c_n^2+s_n^2)\nonumber\\
	=&r_{n}^2-2 r_n b+b^2, 
	\label{eq_nonlinearModel-polar1}
\end{align}
where $(a)$ follows from $q_1 \cos(\theta_n)+u_1 \sin(\theta_n)=R\cos(\theta_n-\gamma)$. 
Using RSs $\mathbb{S}=\{a_1, a_2\}$ as in the previous case of linear mobility, we have  
\begin{align}\label{Eq:Intermediate}
	&2d R f_{n,a}(\gamma) + d^2 \eta_{n,a} \nonumber\\
	=& r_n^2-r_a^2-2b(r_n-r_a), \quad a\in \mathbb{S}, \quad n\in \mathbb{N}\cap\mathbb{S}^c,
	\end{align}
where $f_{n,a}(\gamma)=\sum_{i=1}^{n-1}\cos(\theta_{i}-\gamma)-\sum_{i=1}^{a-1}\cos(\theta_{a}-\gamma)$ and $\eta_{n,a}= (c_n^2+s_n^2)-(c_a^2+s_a^2)$. The nonlinear term $2d R f_{n,a}(\gamma)$ is canceled out by manipulating the above two as
\begin{align}\label{Eq:Intermediate2}
&d^2\underbrace{\left[f_{n,a_2}(\gamma)\eta_{n,a_1}-f_{n,a_1}(\gamma)\eta_{n,a_2}\right]}_{=\alpha_n(\mathbb{S}, \gamma)}\nonumber\\
&+b \cdot \underbrace{2\left[f_{n,a_2}(\gamma)(r_n-r_{a_1})-f_{n,a_1}(\gamma)(r_n-r_{a_2})\right]}_{=\beta_n(\mathbb{S}, \gamma)}\nonumber\\
=&\underbrace{f_{n,a_2}(\gamma)(r_n^2-r_{a_1}^2)-f_{n,a_1}(\gamma)(r_n^2-r_{a_2}^2)}_{=\zeta_n(\mathbb{S}, \gamma)}, \quad  n \in \mathbb{N}\cap\mathbb{S}^c. 
\end{align}
We formulate another system of linear equations with two unknowns $\boldsymbol{x}=[d^2,b]^T$ as
\begin{align}\label{Eq:Linearization_Arbitrary}\tag{E2}
	\boldsymbol{A}(\mathbb{S}, \gamma)\boldsymbol{x}=\boldsymbol{b}(\mathbb{S}, \gamma),
\end{align}
where 
\begin{align}\label{Eq:A_B_Arbitrary}
&\boldsymbol{A}(\mathbb{S}, \gamma) = 
\begin{bmatrix}
\alpha_1(\mathbb{S}, \gamma) & \beta_1(\mathbb{S}, \gamma)\\ 
\vdots & \vdots\\ 
\alpha_N(\mathbb{S}, \gamma) & \beta_n(\mathbb{S}, \gamma)
\end{bmatrix}\in \mathbb{R}^{(N-2)\times 2}, \nonumber\\
&\boldsymbol{b}(\mathbb{S}, \gamma)=
\begin{bmatrix}
\zeta_1(\mathbb{S}, \gamma)\\
\vdots \\
\zeta_n(\mathbb{S}, \gamma)
\end{bmatrix}\in \mathbb{R}^{(N-2)\times 1}.
\end{align}
Given $\mathbb{S}$, \ref{Eq:Linearization_Arbitrary} has (N-2) equations with three unknowns of $d$, $b$, and $\gamma$, providing the following feasible condition. 

\begin{proposition} [Feasible Condition: Arbitrary Mobility]\label{Proposition2}\emph{Unless $f_{a_1, a_2}(\gamma)=0$, Problem \ref{Eq:Linearization_Arbitrary} has a unique solution if the number of detected steps are at least $5$, namely, $N\geq 5$.}
\end{proposition}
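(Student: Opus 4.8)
The plan is to run the same equation-counting argument as in Proposition~\ref{Proposition1}, now with \emph{three} unknowns. After forming \eqref{Eq:Linearization_Arbitrary} the term $2dRf_{n,a}(\gamma)$ has been eliminated, so the quantity $R$ no longer appears and the free parameters are $d^2$, $b$, and $\gamma$ (the step length $d>0$ is then read off from $d^2$). The system \eqref{Eq:Linearization_Arbitrary} delivers $N-2$ scalar equations, so the minimal requirement to pin down three unknowns is $N-2\geq 3$, i.e.\ $N\geq 5$. The remaining work is to certify that the hypothesis $f_{a_1,a_2}(\gamma)\neq 0$ is exactly what prevents this count from collapsing through a rank deficiency, just as Proposition~\ref{Proposition1} implicitly needs the two columns of $\boldsymbol{A}(\mathbb{S})$ to be independent.

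Fixing $\gamma$ turns \eqref{Eq:Linearization_Arbitrary} into a linear system $\boldsymbol{A}(\mathbb{S},\gamma)\boldsymbol{x}=\boldsymbol{b}(\mathbb{S},\gamma)$ in $\boldsymbol{x}=[d^2,b]^T$; this is the inner step of the one-dimensional search over $\gamma$ that the algorithm performs. The key elementary identity is $f_{n,a_2}(\gamma)-f_{n,a_1}(\gamma)=f_{a_1,a_2}(\gamma)$, immediate from the definition of $f_{n,a}$. Substituting $f_{n,a_2}=f_{n,a_1}+f_{a_1,a_2}$ into $\alpha_n$ and $\beta_n$, and using that $\eta_{n,a_1}-\eta_{n,a_2}$ and $r_{a_2}-r_{a_1}$ are independent of $n$, each column of $\boldsymbol{A}(\mathbb{S},\gamma)$ becomes a fixed linear combination of the $n$-indexed vectors $(f_{n,a_1}(\gamma))_n$, $(\eta_{n,a_1})_n$, $(r_n-r_{a_1})_n$, with $f_{a_1,a_2}(\gamma)$ the coefficient multiplying the last two. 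Hence if $f_{a_1,a_2}(\gamma)=0$ both columns are scalar multiples of the single vector $(f_{n,a_1}(\gamma))_n$ and $\rank\boldsymbol{A}(\mathbb{S},\gamma)\leq 1$; conversely, for a genuinely two-dimensional trajectory $f_{a_1,a_2}(\gamma)\neq 0$ keeps the two columns non-proportional, so $\rank\boldsymbol{A}(\mathbb{S},\gamma)=2$.

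With $\rank\boldsymbol{A}(\mathbb{S},\gamma)=2$ and $N\geq 5$ there are at least three rows, so I would select two linearly independent ones to obtain $\boldsymbol{x}=\boldsymbol{x}(\gamma)=[d^2(\gamma),b(\gamma)]^T$ uniquely (by a $2\times2$ inversion, or by \eqref{Eq:Problem_minimization_linear} under noise), and substitute into each of the remaining $N-4\geq1$ equations to produce a scalar consistency condition $g(\gamma)=0$; equivalently, $g(\gamma)$ is the $3\times3$ determinant $\det[\,\boldsymbol{A}_3(\mathbb{S},\gamma)\mid\boldsymbol{b}_3(\mathbb{S},\gamma)\,]$ built from any three rows. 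In the noise-free regime the true triple $(\gamma^{*},(d^{*})^2,b^{*})$ satisfies \eqref{eq_nonlinearModel-polar1} exactly, hence satisfies \eqref{Eq:Linearization_Arbitrary}, and therefore $g(\gamma^{*})=0$; so a solution exists, and the uniqueness claim reduces to showing $g$ has no other admissible root --- the value the one-dimensional search returns.

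\noindent\textbf{Main obstacle.} The delicate part is this last step: excluding spurious $\gamma\neq\gamma^{*}$ that also render the overdetermined system consistent. I would argue that $g(\gamma)$ --- a trigonometric polynomial in $\gamma$ through the $\cos(\theta_i-\gamma)$ terms --- is not identically zero whenever some $\theta_n\neq0$ (the defining condition of Case~2), so its zero set is finite, and then discard the finitely many candidates using the physical constraints $d^2(\gamma)>0$ and consistency with the raw distance relations \eqref{Eq:DistanceRelation_local}, reinforced by requiring agreement across different reference sets $\mathbb{S}$. Establishing non-vanishing of $g$ for \emph{every} admissible trajectory, rather than merely generic ones, is where real care is needed; for the feasibility statement asserted by the proposition it suffices that the count $N-2\geq3$ is met and that $f_{a_1,a_2}(\gamma)\neq0$ secures the rank-$2$ inner solve.
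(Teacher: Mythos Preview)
Your proposal is correct and follows essentially the same route as the paper: an equation-counting argument ($N-2\geq 3$ for the three unknowns $d^2,b,\gamma$) combined with the identity $f_{n,a_2}(\gamma)=f_{n,a_1}(\gamma)+f_{a_1,a_2}(\gamma)$ to show that $f_{a_1,a_2}(\gamma)=0$ forces $\rank\boldsymbol{A}(\mathbb{S},\gamma)\leq 1$. The paper's own proof is terser---it simply computes $\alpha_n(\mathbb{S},\hat\gamma)=f_{n,a_1}(\hat\gamma)\,\eta_{a_2,a_1}$ and $\beta_n(\mathbb{S},\hat\gamma)=2f_{n,a_1}(\hat\gamma)(r_{a_2}-r_{a_1})$ at the degenerate angle and reads off the rank-one structure---and, notably, does \emph{not} attempt the uniqueness-of-$\gamma$ argument you flag as the main obstacle; it asserts existence of $\gamma^*$ with $e_1(\mathbb{S},\gamma^*)=0$ ``since \ref{Eq:Linearization_Arbitrary} is formulated from a geometric representation'' and leaves it there, so your honest discussion of that gap actually goes beyond what the paper proves.
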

\begin{proof}
See Appendix \ref{Appen:Propostion2}. 
\end{proof}

\begin{remark}[Underdetermined System]\emph{Define $\hat{\gamma}$ the angle satisfying $f_{a_1, a_2}(\hat{\gamma})=0$. The rank of $\boldsymbol{A}(\mathbb{S}, \hat{\gamma})$ is $1$, making \ref{Eq:Linearization_Arbitrary} an underdetermined system that has infinite number of solutions. Using the condition {$f_{a_1,a_2}(\gamma)=0$}, it is straightforward to identify whether the concerned $\gamma$ is $\hat{\gamma}$.}
\end{remark}

With  $N\geq 5$ and $f_{a_1, a_2}(\gamma)\neq0$, the solution for \ref{Eq:Linearization_Arbitrary} has a similar form to the linear mobility counterpart as
\begin{align}\label{Eq:Solution_Structure_Arbitrary}
\boldsymbol{x}(\mathbb{S},\gamma)&=[(d(\mathbb{S},\gamma))^2, b(\mathbb{S},\gamma)]^T\nonumber\\
&=[\boldsymbol{A}(\mathbb{S},\gamma)^T\boldsymbol{A}(\mathbb{S},\gamma)]^{-1}\boldsymbol{A}(\mathbb{S},\gamma)^T\boldsymbol{b}(\mathbb{S}, \gamma),
\end{align}
which is valid only when $\gamma$ is correctly picked. Otherwise, $\boldsymbol{x}(\mathbb{S},\gamma)$ of \eqref{Eq:Solution_Structure_Arbitrary} does not satisfy \ref{Eq:Linearization_Arbitrary}, i.e., $\boldsymbol{A}(\mathbb{S}, \gamma)\boldsymbol{x}(\mathbb{S},\gamma)\neq \boldsymbol{b}(\mathbb{S}, \gamma)$. Prompted by the fact, a correct $\gamma^*(\mathbb{S})$ can be easily found by a simple 1D search over $[0,\pi)$ satisfying the following criteria:
\begin{align}\label{discriminant_noise_free}
\gamma^*(\mathbb{S})=\{\gamma | e_1(\mathbb{S}, \gamma)=0, \quad \text{$\gamma\in[0,\pi)$}\}, 
 \end{align}
 where 
 \begin{align}\label{Eq:Cost_Function_1}
 e_1(\mathbb{S}, \gamma)=\norm{\boldsymbol{A}(\mathbb{S}, \gamma)\boldsymbol{x}(\mathbb{S},\gamma)- \boldsymbol{b}(\mathbb{S}, \gamma)}.
 \end{align}
 
 \begin{remark}[Ambiguity of $\gamma$]\emph{Noting the period of $\tan^{-1}(x)$ being $\pi$, two possible solutions of $\gamma^*(\mathbb{S})$ exist in $[0, \pi)$ and $[\pi, 2\pi)$, say $\gamma_1^*(\mathbb{S})$ and $\gamma_2^*(\mathbb{S})$,
 where $\gamma_1^*(\mathbb{S})+\pi=\gamma_2^*(\mathbb{S})$. Despite the ambiguity, the resultant solutions of $d$ and $b$ are not changed regardless of $\gamma_1^*(\mathbb{S})$ or $\gamma_2^*(\mathbb{S})$, namely,
$d(\mathbb{S},\gamma_1^*(\mathbb{S}))=d(\mathbb{S},\gamma_2^*(\mathbb{S}))$ and  $b(\mathbb{S},\gamma_1^*(\mathbb{S}))=b(\mathbb{S},\gamma_2^*(\mathbb{S}))$.
As a result, we focus on finding $\gamma_1^*(\mathbb{S})$ in the range of $[0, \pi)$, and it is considered as $\gamma^*(\mathbb{S})$. }
\end{remark}

 When the measurement are corrupted, the noisy versions of the matrix $\boldsymbol{A}(\mathbb{S}, \gamma)$ and the vector $\boldsymbol{b}(\mathbb{S}, \gamma)$ are given, denoted by $\tilde{\boldsymbol{A}}(\mathbb{S}, \gamma)$ and $\tilde{\boldsymbol{b}}(\mathbb{S}, \gamma)$,  making it difficult to use the discriminant of \eqref{discriminant_noise_free} directly. Instead, we develop the following two-stage approach. 
\subsubsection{Finding $\boldsymbol{x}$} Given $\gamma$, we formulate the following minimization problem as
\begin{align}\label{Eq:Problem_minimization_arbitrary}
\boldsymbol{x}(\mathbb{S}, \gamma)&=\arg\min_{\boldsymbol{x}} \norm{\tilde{\boldsymbol{A}}(\mathbb{S}, \gamma)\boldsymbol{x}-\tilde{\boldsymbol{b}}(\mathbb{S}, \gamma)}\nonumber\\
&=[\tilde{\boldsymbol{A}}(\mathbb{S},\gamma)^T\tilde{\boldsymbol{A}}(\mathbb{S},\gamma)]^{-1}\tilde{\boldsymbol{A}}(\mathbb{S},\gamma)^T\tilde{\boldsymbol{b}}(\mathbb{S}, \gamma),
\end{align}
which follows an equivalent structure of \eqref{Eq:Solution_Structure_Arbitrary}. 
\subsubsection{Finding $\gamma$} We use a 1D search based on the following criteria:  
{
\begin{align}\label{Eq:FindingGamma_arbitraty}
\gamma^*(\mathbb{S})=\arg\min_{\gamma\in\mathbb{P}(\mathbb{S})} 
\left[w_1 \cdot e_1(\mathbb{S}, \gamma)+w_2 \cdot e_2(\mathbb{S}, \gamma)\right],
\end{align}
where $\mathbb{P}(\mathbb{S})$ is a range of feasible $\gamma$ defined as all calibrated distances $\{r_n-b(\mathbb{S}, \gamma)\}$ and estimated step length $d(\mathbb{S}, \gamma)$ being positive, namely,
\begin{align} \label{Eq:PossibleSet}
	\mathbb{P}=\left\{\gamma\left\vert \min_{n\in\mathbb{N}}[r_n-b(\mathbb{S}, \gamma)]>0,\right. \quad d(\mathbb{S}, \gamma)>0, \quad \gamma\in[0,\pi)\right\}.
\end{align}
Two error functions are considered whose weighed factors $\{w_1,w_2\}$ satisfy $w_1+w_2=1$. 
The first function $e_1(\mathbb{S}, \gamma)$ is specified in \eqref{Eq:Cost_Function_1}. For the second one $e_2(\mathbb{S}, \gamma)$, denote the matrix $\boldsymbol{R}(\mathbb{S}, \gamma)=[R_{1,a_1}(\mathbb{S}, \gamma), \cdots, R_{N,a_1}(\mathbb{S}, \gamma);R_{1,a_2}(\mathbb{S}, \gamma), \cdots, R_{N,a_2}(\mathbb{S}, \gamma)]\in\mathbb{R}^{2\times(N-1)}$, where $R_{n,a}(\mathbb{S}, \gamma)$ is the estimated $R$ obtained by plugging $d(\mathbb{S}, \gamma)$ and $b(\mathbb{S}, \gamma)$ into the $n$-th equation of \eqref{Eq:Intermediate} as
\begin{align}
&R_{n,a}(\mathbb{S}, \gamma)  = \frac{r_n^2-r_a^2-2 (r_n-r_a)\cdot b(\mathbb{S}, \gamma)-(d(\mathbb{S}, \gamma))^2 \eta_{n,a}}{2f_{n,a}(\gamma)\cdot d(\mathbb{S}, \gamma)} \nonumber\\ 
&\quad , \quad a\in\mathbb{S}.  
\end{align}
Given $\boldsymbol{R}(\mathbb{S}, \gamma)$, the error function $e_2(\mathbb{S}, \gamma)$ is defined as
\begin{align}\label{Eq:Cost_Function_2}
e_2(\mathbb{S}, \gamma)=\mathsf{std}(\boldsymbol{R}(\mathbb{S}, \gamma)),
\end{align}
where $\mathsf{std}(\cdot)$ represents the standard deviation of all elements therein.
}
\begin{remark}[Error Functions] \label{Remark:EffectOfErrorFunction2}
\emph{Given $\mathbb{S}$, the first error function $e_1(\mathbb{S}, \gamma)$ represents the error of  \ref{Eq:Linearization_Arbitrary} itself by focusing its explicit solution $\boldsymbol{x}=[d^2, b]^T$.  
On the other hand, the second error function $e_2(\mathbb{S}, \gamma)$ captures the error on a latent variable $R$ that is eliminated in \ref{Eq:Linearization_Arbitrary} due to the linearization process of \eqref{Eq:Intermediate} and \eqref{Eq:Intermediate2}. Selecting the weight factors $\{w_1, w_2\}$ is discussed in Section~\ref{Sec:Experiement}. 
}
\end{remark}

In both cases with and without measurement noises, a pair of the optimal solution $d^*(\mathbb{S})$ and $b^*(\mathbb{S})$ can be obtained from the solution $\boldsymbol{x}^*(\mathbb{S})=\boldsymbol{x}(\mathbb{S}, \gamma^*(\mathbb{S}))$. Given $\{d^*(\mathbb{S}), b^*(\mathbb{S})\}$, we derive the coordinates of $\boldsymbol{z_1}$, say $\boldsymbol{z}_1^*(\mathbb{S})=[q_1^*(\mathbb{S}), u_1^*(\mathbb{S})]^T$ as follows. First, linear equations with $q_1$ and $u_1$ are derived from \eqref{Eq:Intermediate} as
\begin{align}
&q_1(c_n-c_a)+u_1(s_n-s_a) \nonumber\\
=&\underbrace{ \frac{r_n^2-r_a^2-2b^*(\mathbb{S})\cdot (r_n-r_a)-(d^*(\mathbb{S}))^2 \eta_{n,a}}{2d^*(\mathbb{S})}}_{=g_{n,a}(\mathbb{S})}, \quad a\in\mathbb{S}, n\neq a,
\end{align}
leading to formulating a system of linear equations as
\begin{align}\label{Eq:ProblemFormulation3}\tag{E3}
{\boldsymbol{H}(\mathbb{S})}\boldsymbol{z}_1=\boldsymbol{g}(\mathbb{S}),
\end{align}
where ${\boldsymbol{H}(\mathbb{S})}=\begin{bmatrix}
\boldsymbol{H}_{a_1}; \boldsymbol{H}_{a_2}
\end{bmatrix}\in\mathbb{R}^{2(N-1)\times 2}$ and $\boldsymbol{g}(\mathbb{S})=\begin{bmatrix}
\boldsymbol{g}_{a_1}(\mathbb{S}); \boldsymbol{g}_{a_2}(\mathbb{S})
\end{bmatrix}\in\mathbb{R}^{2(N-1)\times 1}$ with
\begin{align}\label{Eq:H_v_Arbitrary}
&\boldsymbol{H}_a = 
\begin{bmatrix}
c_1-c_a & s_1-s_a\\ 
\vdots & \vdots\\ 
c_N-c_a & s_N-s_a
\end{bmatrix}\in \mathbb{R}^{(N-1)\times 2}, \nonumber\\
&\boldsymbol{g}_a(\mathbb{S})=
\begin{bmatrix}
g_{1,a}(\mathbb{S})\\
\vdots \\
g_{N,a}(\mathbb{S})
\end{bmatrix}\in \mathbb{R}^{(N-1)\times 1},\quad a\in\mathbb{S}. 
\end{align}
Given the feasible condition of \ref{Eq:Linearization_Arbitrary} stated in Proposition \ref{Proposition2},  \ref{Eq:ProblemFormulation3} has a unique solution obtained by a single matrix inversion as
\begin{align}\label{Eq:FindingLocalCoordinates}
\boldsymbol{z}_1^*(\mathbb{S})=&[q_1^*(\mathbb{S}), u_1^*(\mathbb{S})]^T\nonumber\\
=&\arg\min_{\boldsymbol{z}_1}\norm{{\boldsymbol{H}(\mathbb{S})}\boldsymbol{z}_1- \boldsymbol{g}(\mathbb{S})} \nonumber\\
=&{\left[\boldsymbol{H}(\mathbb{S})^T\boldsymbol{H}(\mathbb{S})\right]^{-1}\boldsymbol{H}(\mathbb{S})^T\boldsymbol{g}(\mathbb{S}).}
\end{align}

\subsection{Using Multiple Combinations of Reference Steps}\label{Subsec:RS_Selection} 

This subsection deals with the remaining issue of selecting RSs $\mathbb{S}$, helping mitigate the positioning error due to significant measurement noises. To this end, multiple combinations of RSs are utilized to achieve more accurate positioning than a single RS-based scheme, based on a common statistical belief that more observations make the estimate less deviated from a ground-truth. The detailed procedure is explained as follows. 
\subsubsection{Selecting Candidate RSs}
First, several steps are picked as RS's candidates, denote by $\mathbb{C}$, based on the initial propagation distance {estimates} $\{r_n\}$ defined in \eqref{Eq:DistanceRelation}. In general, smaller $r_n$ means that AP $n$ is located in proximity whose RSS is likely to be high. It is thus reasonable to consider the {resultant ranging result} is relatively accurate. Motivated by this intuition, the set $\mathbb{C}$ contains a step's index, say $n$, if $r_n$ is in the top $C$ smallest, namely, 
\begin{align}
\mathbb{C}=\{n \in \mathbb{N} | r_n\leq r_k, n\in \mathbb{C}, k\in\mathbb{C}^c, |\mathbb{C}|=C\},
\end{align} 
where $C$ is the cardinality constraint of $\mathbb{C}$, whose effect is verified by field experiments in Section~\ref{Sec:Experiement}. 

{
\subsubsection{Estimating the Bias and Step Length} Two indices of $\mathbb{C}$ are picked as $\mathbb{S}$. It is possible to make up to $L={C \choose 2}$ combinations of $\mathbb{S}$. Denote $\mathbb{S}_\ell$ the $\ell$-th set of RSs, $\ell\in\{1, \cdots, L\}$. Given $\mathbb{S}_\ell$, compute $d^*(\mathbb{S}_\ell)$ and $b^*(\mathbb{S}_\ell)$ by following the procedures in Sec. \ref{Subsec:LinearMobility} and Sec. \ref{Subsec:ArbitraryMobility}, depending on the cases of linear and arbitrary mobilities, respectively. Given all individual estimates $\{d^*(\mathbb{S}_\ell)\}$ and $\{b^*(\mathbb{S}_\ell)\}$ representative estimates, denoted by $d^*$ and $b^*$ respectively, are computed using their medians, namely, 
\begin{align}\label{Eq:Optimal_Solution_Median}
	d^*=\mathsf{median}({d}^*(\mathbb{S}_\ell)),\quad b^*=\mathsf{median}({b}^*(\mathbb{S}_\ell)).
\end{align}
\subsubsection{Estimating the relative coordinates of $\boldsymbol{z}_1$} Given $d^*$ and $b^*$, 
compute $\{\boldsymbol{z}_1^*({\mathbb{S}_\ell})\}$ for all possible sets of RSs using \eqref{Eq:q1_Linear} and \eqref{Eq:u1_Linear} for the case of linear mobility, or \eqref{Eq:FindingLocalCoordinates} for the case of arbitrary mobility. Given all individual estimates $\{\boldsymbol{z}_1^*(\mathbb{S}_\ell)\}$, representative estimate, denoted by $\boldsymbol{z}_1^*$ is computed using their medians, namely,
\begin{align}\label{Eq:Optimal_Solution_Median2}
\boldsymbol{z}_1^*=\mathsf{median}(\boldsymbol{z}_1^*(\mathbb{S}_\ell)). 
\end{align}
}
\begin{remark}[Mean vs. Median]\emph{While a mean-based estimation has been widely used as a de facto standard approach, it is prone to a few outliers severely deviated from a ground-truth value. On the other hand, a median-based estimation can ignore these outliers. Thus, it is more suitable to design a positioning algorithm based on the median, which is simple yet robust from measurement noises, such as \cite{Casas2006} and \cite{Qiao2014}.  }
\end{remark}

\section{Positioning via Trajectory Alignment}\label{Sec:TrajectoryAlignment}

In this section, we aim at positioning the user's locations by aligning multiple trajectories based on the measurements of different APs, called \emph{trajectory alignment} (TA). First, the user's relative trajectory defined on the local coordinate system of each AP is derived based on the estimations in the preceding section. Next, a basic principle of TA is mathematically explained assuming the case without measurement noise. Last, a practical algorithm is designed able to work in the case with measurement noise. 

\subsection{Relative Trajectory Derivation}  

This section derives the sequence of the user's locations, denoted by $\mathcal{Z}=\{\boldsymbol{z}^*_n\}=\{[q_n^*, u_n^*]\}$, corresponding to the user's \emph{relative trajectory} defined on the local coordinate system. 
From the preceding estimations of the initial location $\boldsymbol{z}_1^*=[q_1^*, u_1^*]$ in \eqref{Eq:Optimal_Solution_Median2}, it is possible to derive the following locations using \eqref{eq:temporal_relation_local} and the step length estimation $d^*$. Depending on the case of linear or arbitrary mobilities, we have different results explained below.  

\subsubsection{Linear mobility} Recalling that there exist two candidates of $u_1^*$ [see \eqref{Eq:u1_Linear}], two possible local trajectories are thus made, say $\mathcal{Z}_{+}$ and $\mathcal{Z}_-$, given as
\begin{align} \label{Eq:LocalTrajectoiesLinear}
\mathcal{Z}_{+}&=\left\{\l[q_n^*, u_n^*\r] \vert q_n^*=q_1^*+(n-1)d,\quad u_n^*=+u_1^*, \quad \forall n\in\mathbb{N} \right\},\nonumber\\
\mathcal{Z}_{-}&=\left\{\l[q_n^*, u_n^*\r] \vert q_n^*=q_1^*+(n-1)d,\quad u_n^*=-u_1^*, \quad \forall n\in\mathbb{N} \right\},
\end{align}
where $q_1^*$ and $u_1^*$ are specified in \eqref{Eq:q1_Linear} and \eqref{Eq:u1_Linear}, respectively. 
Either $\mathcal{Z}_{+}$ or $\mathcal{Z}_{-}$ is the real trajectory $\mathcal{Z}$, differentiated by the positioning algorithm introduced in the sequel.
\subsubsection{Arbitrary mobility} Contrary to the linear mobility counterpart, no ambiguity of the initial location exists. The resultant local trajectory $\mathcal{Z}$ is given as
\begin{align}
\mathcal{Z}=\left\{\l[q_n^*, u_n^*\r] \left \vert q_n^*=q_1^*+d c_n, \quad u_n^*=u_1^*+d s_n, \right. \quad \forall n\in\mathbb{N}\right\}, 
\end{align}
where the coefficient $c_n$ and $s_n$ are specified in \eqref{eq_nonlinearModel}.

\subsection{Trajectory Alignment}

\begin{figure}[t]
	\centering
	\subfigure[Local Trajectories]{\includegraphics[width=7cm]{./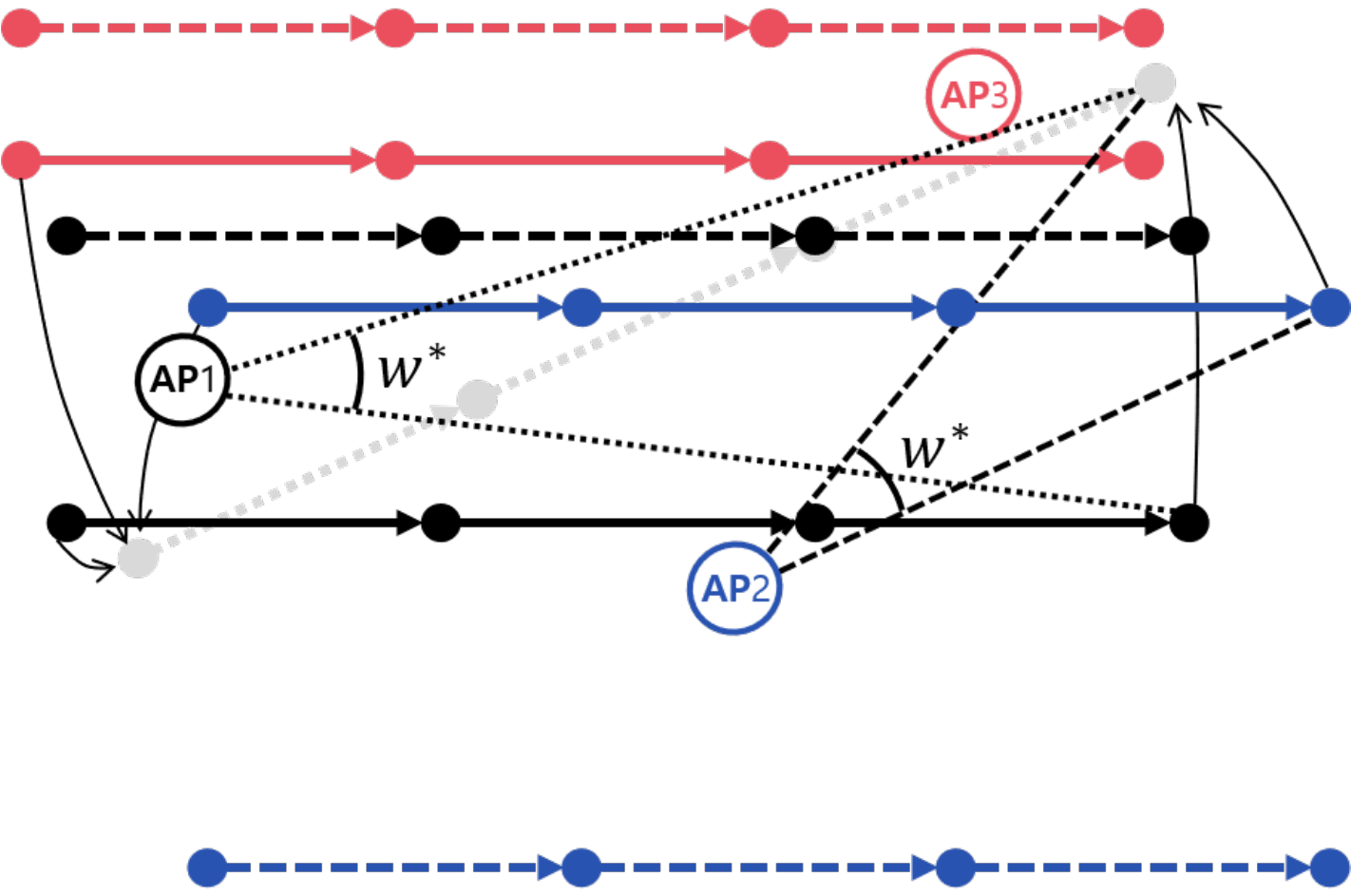}} 
	\subfigure[Global Trajectories]{\includegraphics[width=7cm]{./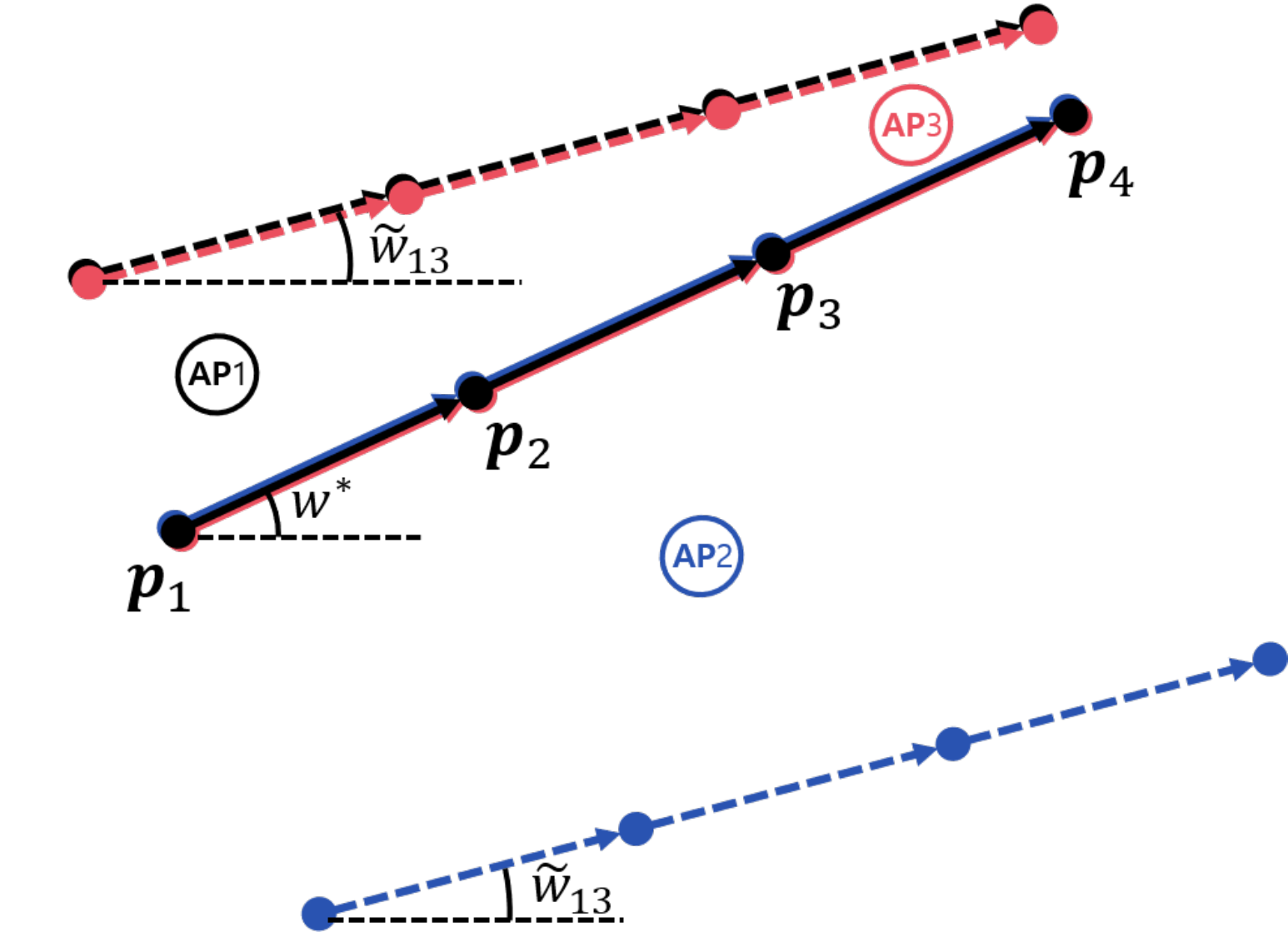}}
	\caption{The graphical example of TA in the case of linear mobility. 
	Solid and dotted lines represent two possible trajectories, which are symmetric to each other.   
	The real trajectories are merged into one in global coordinate, but the others are not. }
	\label{Fig:TA_Linear}
\end{figure}

\begin{figure}[t]
	\centering
	\subfigure[Local Trajectories]{\includegraphics[width=7cm]{./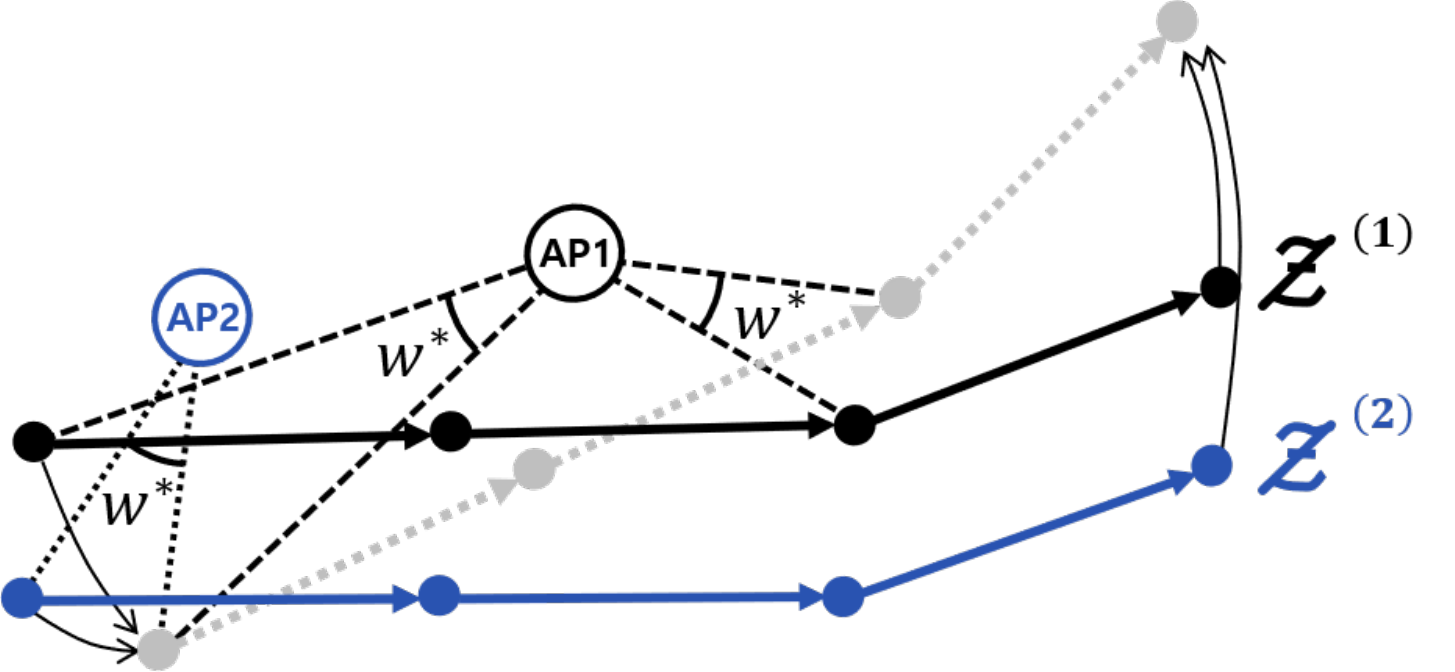}} 
	\subfigure[Global Trajectories]{\includegraphics[width=7cm]{./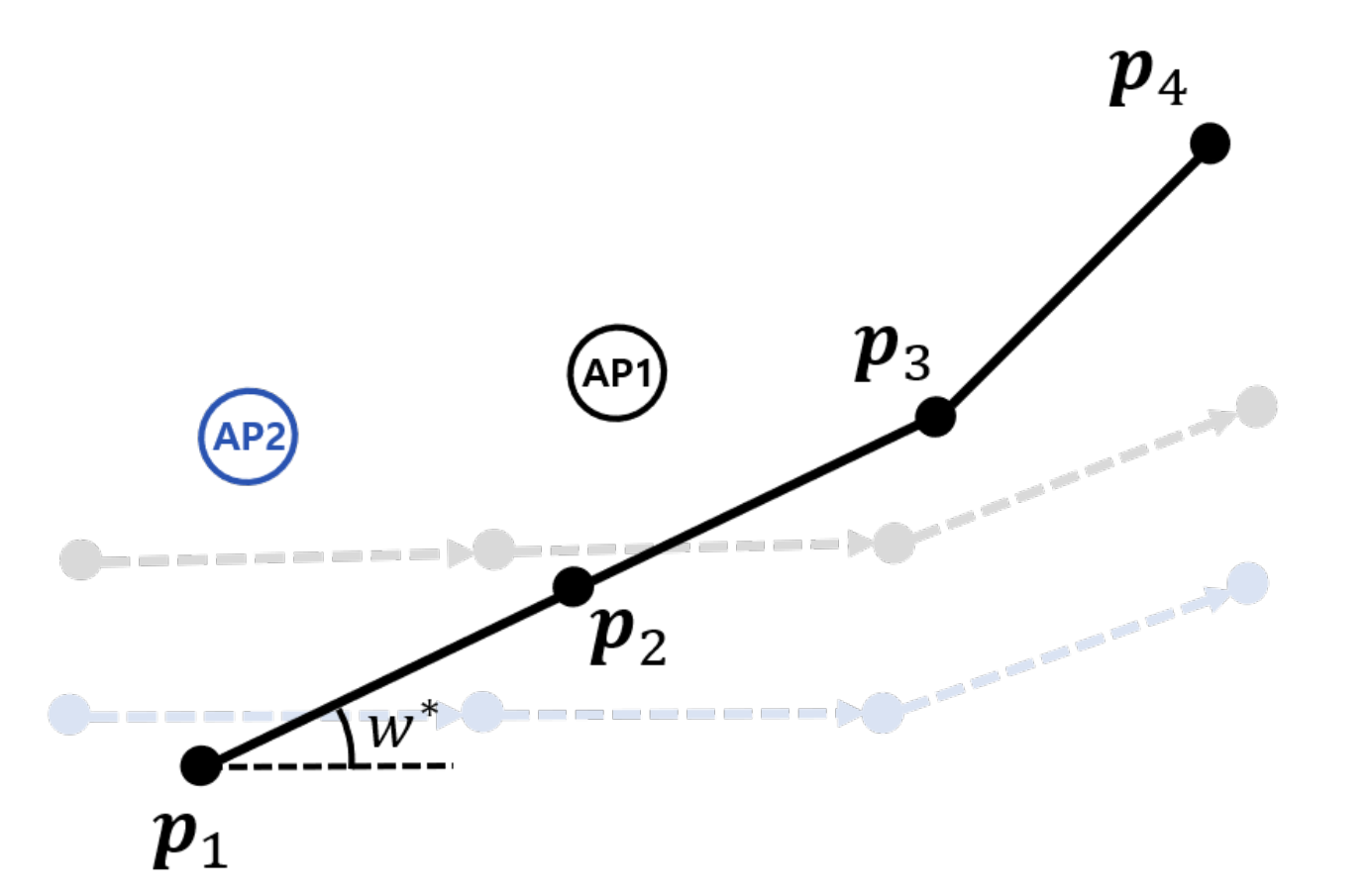}}
	\caption{The graphical example of TA in the case of arbitrary mobility, showing that trajectories in local coordinates represented by multiple curves have a unique angle $\omega^*$, making them aligned perfectly. }
	\label{Fig:TA_Arbitary}
\end{figure}

This subsection introduces the concept of TA and explains its feasible conditions.  
Consider the user's relative trajectory estimated by the RTT measurements from AP $m$, say $\mathcal{Z}^{(m)}=\{\boldsymbol{z}^{(m)*}_n\}$. It is converted into a global coordinate system when the initial direction $\omega$ is given, namely, 
\begin{align} \label{Eq:GlobalPostion}
\boldsymbol{p}_n^{(m)}(\omega)=
\boldsymbol{p}_{\text{AP}}^{(m)}+
\begin{bmatrix}
\cos(\omega) & -\sin(\omega)\\
\sin(\omega) & \cos(\omega)
\end{bmatrix}\boldsymbol{z}^{(m)*}_n, 
\end{align}
where $\boldsymbol{p}_{\text{AP}}^{(m)}$ is AP $m$'s location assumed to be known in advance. 
Noting that the user's location is unique regardless of which AP is used for positioning, the following condition should be met if $\omega$ is correctly selected, denoted by $\omega^*$:  
\begin{align}\label{Eq:EQ_Condition}
\boldsymbol{p}_n=\boldsymbol{p}_n^{(1)}(\omega^*)=\boldsymbol{p}_n^{(2)}(\omega^*)=\cdots=\boldsymbol{p}_n^{(M)}(\omega^*), \quad \forall n\in\mathbb{N}, 
\end{align}
which is said that all trajectories are perfectly aligned. The aligned trajectory after TA is equivalent to the trajectory defined on the global coordinate system, denoted by $\mathcal{P}=\{\boldsymbol{p}_n\}$, if it exists uniquely. The following proposition gives different feasible conditions of TA for linear and arbitrary mobilities. 
\begin{proposition}[Feasible Condition of Trajectory Alignment]\label{Proposition3}\emph{There exists a unique $\omega^*$ satisfying the condition of \eqref{Eq:EQ_Condition} if the number of APs $M$ not on a straight line is at least $3$ for linear mobility or the number of APs $M$ is at least $2$ for arbitrary mobility. }
\end{proposition}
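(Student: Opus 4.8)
The plan is to reduce the ``all trajectories coincide'' requirement \eqref{Eq:EQ_Condition} to a single rotation equation per pair of APs, and then to count how many APs are needed to pin the rotation angle down. I would start from the observation that, by \eqref{eq:temporal_relation_local}, the relative trajectory obtained from AP $m$ has the form $\boldsymbol{z}_n^{(m)*}=\boldsymbol{z}_1^{(m)*}+d^{(m)*}\left[\sum_{j=1}^{n-1}\cos\theta_j,\ \sum_{j=1}^{n-1}\sin\theta_j\right]^T$, where the ``shape'' vector is supplied by PDR and hence is identical for every AP. In the (noise-free) setting of the proposition each AP recovers the true step length, so $d^{(1)*}=\dots=d^{(M)*}=d$ and the difference $\boldsymbol{z}_n^{(1)*}-\boldsymbol{z}_n^{(m)*}$ is independent of $n$; call it $\mathbf{d}_{1m}$. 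Substituting into \eqref{Eq:GlobalPostion}, writing $\mathbf{w}_{m1}:=\boldsymbol{p}_{\text{AP}}^{(m)}-\boldsymbol{p}_{\text{AP}}^{(1)}$ and $\boldsymbol{R}(\omega)$ for the rotation matrix appearing there, the condition $\boldsymbol{p}_n^{(1)}(\omega)=\boldsymbol{p}_n^{(m)}(\omega)$ \emph{for all $n$} collapses to the single vector equation $\boldsymbol{R}(\omega)\mathbf{d}_{1m}=\mathbf{w}_{m1}$. Existence of some valid $\omega^*$ is then free: the true initial heading $\omega_0$ (together with the true sign choice, in the linear case) satisfies it, since the true trajectory is aligned by construction.

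For arbitrary mobility I would argue as follows. The procedure of Section~\ref{Subsec:ArbitraryMobility} returns $\boldsymbol{z}_1^{(m)*}$ unambiguously (via \eqref{Eq:FindingLocalCoordinates}), so $\mathbf{d}_{1m}=\boldsymbol{z}_1^{(1)*}-\boldsymbol{z}_1^{(m)*}$ is a completely known vector, nonzero whenever APs $1$ and $m$ occupy distinct locations. Since rotations are isometries, $\|\mathbf{d}_{1m}\|=\|\mathbf{w}_{m1}\|$, and $\boldsymbol{R}(\omega)\mathbf{d}_{1m}=\mathbf{w}_{m1}$ has exactly one solution $\omega\in[0,2\pi)$, namely the unique rotation carrying one vector onto the other. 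Thus a single pair of distinct APs already forces $\omega^*$ to be unique, and for $M>2$ the remaining pairs add only mutually consistent copies of the same constraint; this yields the $M\ge 2$ claim.

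For linear mobility the extra ingredient is the sign ambiguity of \eqref{Eq:LocalTrajectoiesLinear}: AP $m$ admits two candidate trajectories related by $u_1^{(m)*}\mapsto -u_1^{(m)*}$, so $\mathbf{d}_{1m}=\left[q_1^{(1)*}-q_1^{(m)*},\ \sigma_1 u_1^{(1)*}-\sigma_m u_1^{(m)*}\right]^T$ with $\sigma_1,\sigma_m\in\{+1,-1\}$, and only its first component $a_m:=q_1^{(1)*}-q_1^{(m)*}$ — the along-track displacement — is sign-independent. Rewriting $\boldsymbol{R}(\omega)\mathbf{d}_{1m}=\mathbf{w}_{m1}$ as $\mathbf{d}_{1m}=\boldsymbol{R}(-\omega)\mathbf{w}_{m1}$ and taking the first coordinate gives $\mathbf{w}_{m1}\cdot[\cos\omega,\sin\omega]^T=a_m$, which must hold for every $m$ irrespective of the unknown signs. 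Taking three APs not on a common line (say $1,2,3$), the coefficient vectors $\mathbf{w}_{21},\mathbf{w}_{31}$ are linearly independent, so the $2\times2$ system $\{\mathbf{w}_{21}\cdot\mathbf{e}=a_2,\ \mathbf{w}_{31}\cdot\mathbf{e}=a_3\}$ has a unique solution $\mathbf{e}$, which in the noise-free case is $[\cos\omega_0,\sin\omega_0]^T$ — a unit vector — and hence determines $\omega^*\in[0,2\pi)$ uniquely. (The second coordinate of each equation then fixes the admissible $\{\sigma_m\}$; consistency across all $M$ APs, or a mild non-degeneracy, recovers the true sign pattern, but this is not needed for uniqueness of $\omega^*$.) Conversely, if the APs were collinear then $\mathbf{w}_{21}\parallel\mathbf{w}_{31}$ and the system would confine $\mathbf{e}$ only to a line, generically leaving two unit solutions and two angles — so the ``$\ge 3$ APs not on a straight line'' hypothesis is sharp.

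The step I expect to be the main obstacle is dealing cleanly with the linear-mobility sign ambiguity: a priori the alignment requirement reads ``there exist a sign vector $(\sigma_1,\dots,\sigma_M)$ and an angle $\omega$ merging all global trajectories,'' so one must show the $\omega$-component of this joint solution is nonetheless unique. The key is the observation above — that reflection flips only the across-track component of $\mathbf{d}_{1m}$ and leaves $a_m$ (the quantity that couples to the known AP geometry) untouched — after which uniqueness is the elementary rank count. A secondary, purely bookkeeping matter is enumerating degenerate placements (an AP on the walking line so that $u_1^{(m)*}=0$, two APs equidistant from that line, reference steps overlapping $\{1,N\}$ as in the footnote to \eqref{Eq:A_B_Linear}); each only forbids a particular sign assignment or calls for relabeling and does not affect the conclusion.
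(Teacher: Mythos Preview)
Your arbitrary-mobility argument is essentially the paper's: both collapse \eqref{Eq:EQ_Condition} to a single rotation equation $\boldsymbol{z}_1^{(1)*}-\boldsymbol{z}_1^{(2)*}=\boldsymbol{\Theta}(-\omega^*)(\boldsymbol{p}_{\text{AP}}^{(2)}-\boldsymbol{p}_{\text{AP}}^{(1)})$, note that the difference $\boldsymbol{z}_n^{(1)*}-\boldsymbol{z}_n^{(2)*}$ is constant in $n$, and read off uniqueness of $\omega^*$ from the fact that a planar rotation carrying one nonzero vector to another is unique.

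Your linear-mobility argument is correct but takes a genuinely different route. The paper argues geometrically: each pair of APs produces \emph{two} candidate headings $\omega_1^{(1,m)*},\omega_2^{(1,m)*}$ (one real, one fake), related by reflection across the inter-AP line, i.e.\ $\omega_1^{(1,m)*}+\omega_2^{(1,m)*}=2\angle(\boldsymbol{p}_{\text{AP}}^{(1)}-\boldsymbol{p}_{\text{AP}}^{(m)})$. With a third AP not on the line through the first two, the real headings coincide while the fake ones differ, which singles out $\omega^*$. You instead isolate the sign-invariant piece of $\mathbf{d}_{1m}$ (its $q$-component $a_m$), turn $\boldsymbol{R}(\omega)\mathbf{d}_{1m}=\mathbf{w}_{m1}$ into the linear constraint $\mathbf{w}_{m1}\cdot[\cos\omega,\sin\omega]^T=a_m$, and then invoke a rank argument: with three non-collinear APs, $\mathbf{w}_{21}$ and $\mathbf{w}_{31}$ are independent and the $2\times2$ system pins down $[\cos\omega,\sin\omega]$ uniquely. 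The paper's version gives a clearer picture of \emph{what} the spurious second solution is (the mirror image of the true trajectory across the inter-AP line) and why collinearity preserves that symmetry; your version is more self-contained algebraically, avoids tracking the $2^M$ sign patterns explicitly, and makes the necessity of the non-collinearity hypothesis transparent as the rank condition on $\{\mathbf{w}_{m1}\}$. Either would serve as a complete proof.
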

\begin{proof}
See Appendix \ref{Appen:Propostion3}. 
\end{proof}

Fig. \ref{Fig:TA_Linear} and \ref{Fig:TA_Arbitary} respectively represent the graphical examples of TA for linear and arbitrary mobilities, showing that one more AP is required to identify whether $\mathcal{Z}^{(m)}$ is $\mathcal{Z}_{+}^{(m)}$ or $\mathcal{Z}_{-}^{(m)}$.  The comparison between the two from the perspective of entire procedure is discussed in the following remarks.  
\begin{remark}[Linear vs. Arbitrary Mobilities]\emph{{The different feasible conditions for linear mobility and arbitrary mobility stem from the difference of location dimension embedded therein. Linear mobility is interpreted as 1D location information in a 2D space, bringing about the ambiguity issue illustrated in Fig. \ref{Fig:TA_Linear}. On the other hand, arbitrary mobility provides 2D location information, facilitating TA without ambiguity. This difference yields the positioning accuracy gap between the two, verified in Sec. \ref{subsec:PositioningAccuracy}.}}
\end{remark}

\subsection{Algorithm Design} \label{Subsec:Algorithm_Design}

In an ideal case without a measurement noise, it is possible to find $\omega^*$ making all trajectories aligned perfectly, equivalent to satisfying the condition \eqref{Eq:EQ_Condition}. In contrast, it may be challenging to do in practical cases with measurement noises, since several APs rather deteriorates the positioning accuracy if their estimation errors of bias and step length are severe. It is overcome by excluding these APs in advance and minimizing a new error function defined for TA. The detailed algorithm is explained below.

\subsubsection{Feasible AP Selection} First, we aim at excluding APs unlikely to contribute {to} accurate positioning\footnote{{Our feasible AP selection is based on the assumption that RTT from all APs are measurable during walking. In the coexistence of dynamic and hotspot APs, it is required to filter them out for a reliable positioning result, like the fingerprint filtering method in \cite{Bisio2019}.}}. To this end, we define the set of feasible APs $\mathbb{F}$, whose element's bias and step length estimations, say $b^{(m)*}$ and $d^{(m)*}$ specified in \eqref{Eq:Optimal_Solution_Median} satisfy the following condition:
{
 \begin{align}
 \mathbb{F}=\left\{m\left\vert \min_{n\in\mathbb{N}}\left[r^{(m)}_n-b^{(m)*}\right]>0, \quad d^{(m)*}>0,\right.\right.\nonumber\\
  \left.\left.\boldsymbol{z}^{(m)*}_1\in\mathbb{R}, \quad m\in\mathbb{M}\right. \right\},
 \end{align}
where the first and second conditions mean that the distance estimation after deducing the bias and the step length estimation should be positive, and the third condition means that the coordinates of estimated position are real numbers.} The APs not in $\mathbb{F}$ are excluded and the relative trajectories $\{\mathcal{Z}^{(m)*}\}_{m\in \mathbb{F}}$ are used for the next~step. 

\begin{figure}[t]
	\centering
	\subfigure[Site A]{\includegraphics[width=7.5cm]{./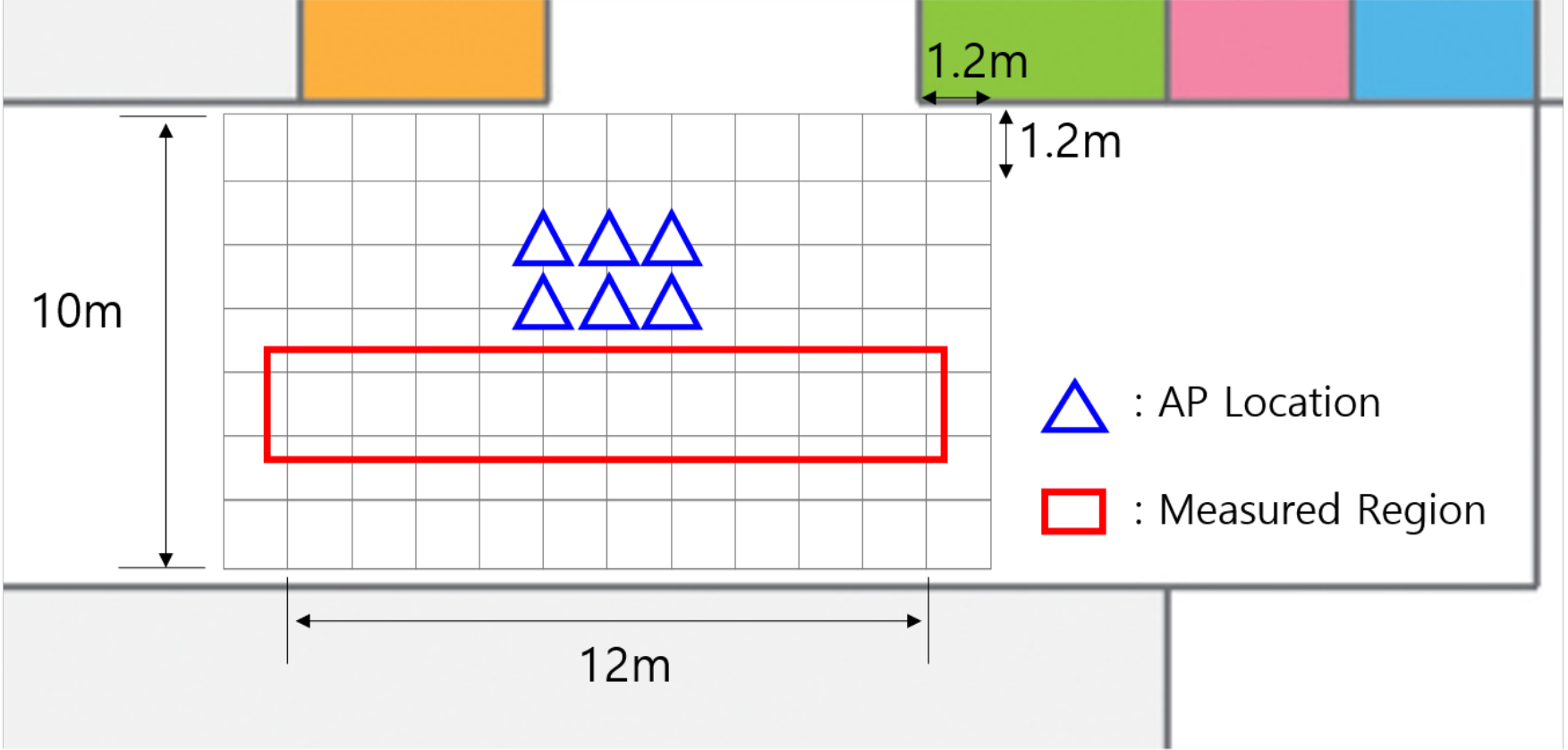}}
	\subfigure[Site B]{\includegraphics[width=8.5cm]{./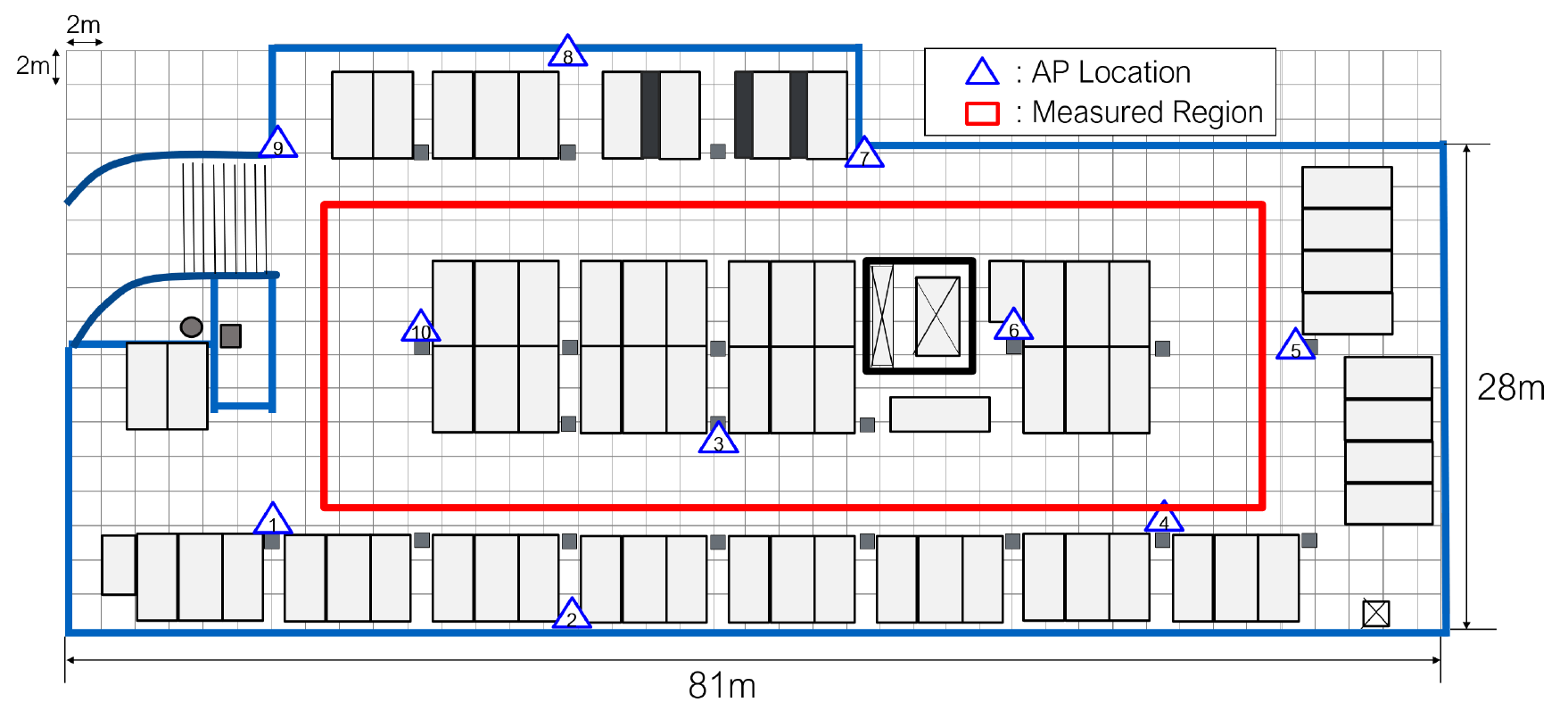}}
	\subfigure[{Site C}]{\includegraphics[width=7.5cm]{./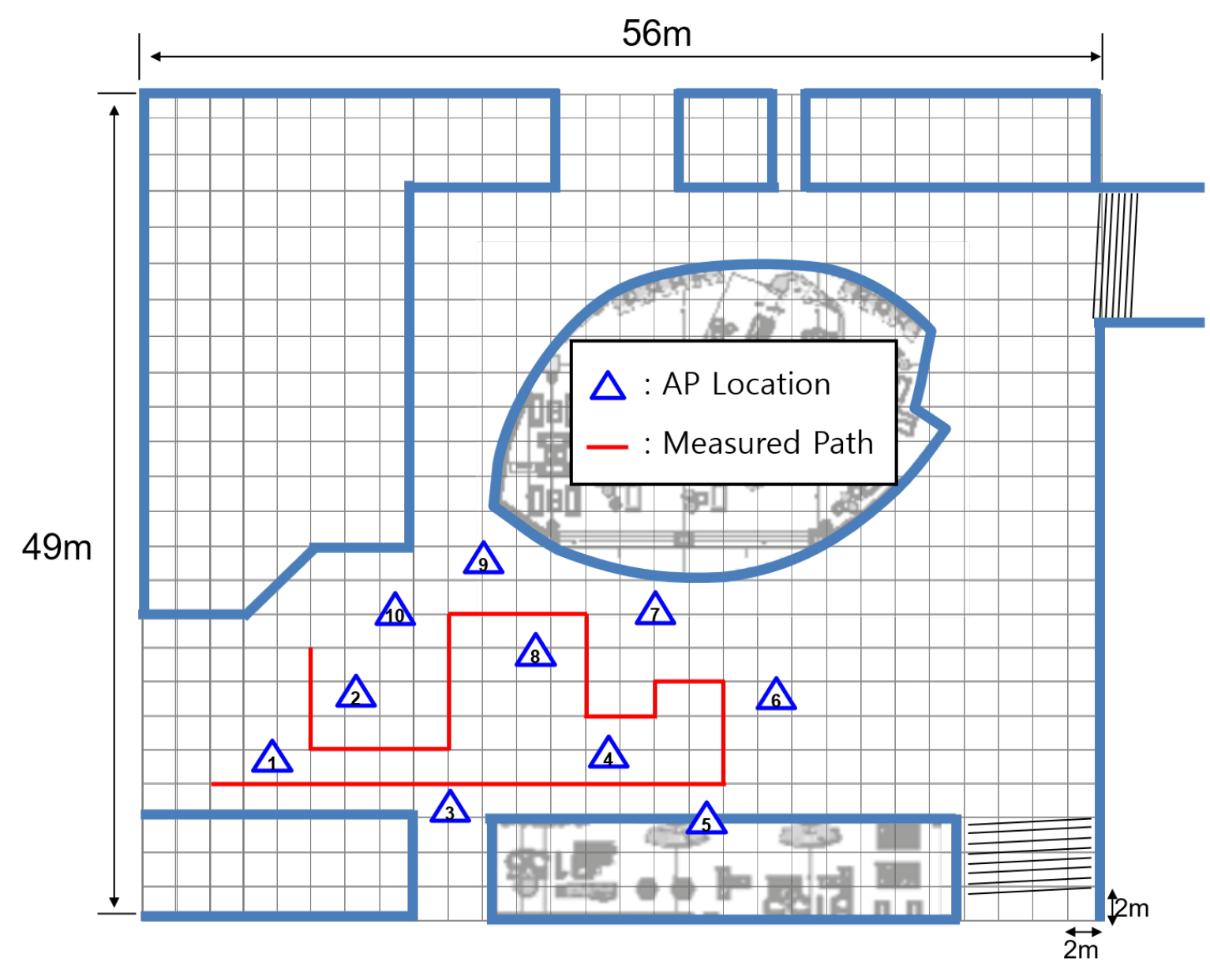}}
	\caption{Floor plans of field experiments. The locations of APs are represented as blue triangles. The user moves along each red line. The detailed explanations of each site and experiment setup are summarized in Table \ref{Tab:TestSetting}. }
	\label{Fig:TestEnvironments}
\end{figure}

\subsubsection{The Optimal Heading Direction Derivation} Depending on the mobility pattern being arbitrary or linear, different methods are used to find the optimal heading direction as follows.
\paragraph{Arbitrary mobility} 
Given $\mathbb{F}$, we aim at {aligning} all trajectories as closely as possible. Specifically, we define an error function $e_3(\omega)$ as the sum of the Euclidean distance between two relative trajectories in $\mathbb{F}$, given as
\begin{align}
	e_3(\omega)=\sum_{i,j\in\mathbb{F}}\sum_{n=1}^N\norm{\boldsymbol{p}_n^{(i)}(\omega)-\boldsymbol{p}_n^{(j)}(\omega)}.\nonumber
\end{align}
By a 1D search, it is possible to find $\omega^*$ to minimize the error function $e_3(\omega)$, namely,      
\begin{align} \label{Eq:TAErrorFunction}
	\omega^*=\arg\min_{\omega\in[0, 2\pi)} e_3(\omega).
\end{align}
\paragraph{Linear mobility} Recall that there {exists} the ambiguity of relative trajectories in the case of linear mobility, say $\{\mathcal{Z}_+^{(m)},\mathcal{Z}_-^{(m)} \}_{m\in \mathbb{F}}$ specified in \eqref{Eq:LocalTrajectoiesLinear}.
To remove this ambiguity, we utilize the relation between relative trajectories of different APs, as stated in the following proposition.
\begin{proposition}[The Relation Between Relative Trajectories]\label{Proposition4}\emph{
	The distance between relative positions of two APs is always equal to the distance between the two APs, namely, 
	\begin{align}
		&\norm{\boldsymbol{z}_n^{(m_1)*}-\boldsymbol{z}_n^{(m_2)*}}\nonumber\\
		=&\norm{\boldsymbol{p}_{\text{AP}}^{(m_1)}-\boldsymbol{p}_{\text{AP}}^{(m_2)}}, \quad \forall n\in \mathbb{N}, \quad \forall m_1, m_2\in \mathbb{M}.\nonumber
	\end{align}
}
\end{proposition}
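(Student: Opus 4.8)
The plan is to exploit the fact that, by construction, the relative trajectory of each AP is obtained from the (common) global trajectory by a \emph{single rigid motion}: a translation by the AP's known location together with a rotation by the user's initial heading direction $\omega^*$, which is a property of the user's motion and is therefore \emph{the same} for every AP. Concretely, I would start from \eqref{Eq:GlobalPostion} evaluated at the correct heading $\omega^*$, i.e.
\begin{align}
\boldsymbol{p}_n = \boldsymbol{p}_{\text{AP}}^{(m)} +
\begin{bmatrix}
\cos(\omega^*) & -\sin(\omega^*)\\
\sin(\omega^*) & \cos(\omega^*)
\end{bmatrix}
\boldsymbol{z}_n^{(m)*}, \quad \forall n\in\mathbb{N},\ \forall m\in\mathbb{M},\nonumber
\end{align}
which holds because, by Proposition~\ref{Proposition3}, $\omega^*$ is the unique heading aligning all trajectories to the true global trajectory $\mathcal{P}=\{\boldsymbol{p}_n\}$.

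Next I would invert this relation. Writing $\boldsymbol{Q}$ for the $2\times 2$ rotation matrix appearing above, $\boldsymbol{Q}$ is orthogonal ($\boldsymbol{Q}^{-1}=\boldsymbol{Q}^T$), so
\begin{align}
\boldsymbol{z}_n^{(m)*} = \boldsymbol{Q}^T\bigl(\boldsymbol{p}_n - \boldsymbol{p}_{\text{AP}}^{(m)}\bigr), \quad \forall n\in\mathbb{N},\ \forall m\in\mathbb{M}.\nonumber
\end{align}
Subtracting this identity for two APs $m_1,m_2$, the common term $\boldsymbol{Q}^T\boldsymbol{p}_n$ cancels, and one is left with
\begin{align}
\boldsymbol{z}_n^{(m_1)*} - \boldsymbol{z}_n^{(m_2)*}
= \boldsymbol{Q}^T\bigl(\boldsymbol{p}_{\text{AP}}^{(m_2)} - \boldsymbol{p}_{\text{AP}}^{(m_1)}\bigr),\nonumber
\end{align}
which in particular is independent of $n$. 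Taking Euclidean norms and using that $\boldsymbol{Q}^T$ is an isometry, $\norm{\boldsymbol{Q}^T\boldsymbol{v}}=\norm{\boldsymbol{v}}$, yields
\begin{align}
\norm{\boldsymbol{z}_n^{(m_1)*} - \boldsymbol{z}_n^{(m_2)*}} = \norm{\boldsymbol{p}_{\text{AP}}^{(m_2)} - \boldsymbol{p}_{\text{AP}}^{(m_1)}} = \norm{\boldsymbol{p}_{\text{AP}}^{(m_1)} - \boldsymbol{p}_{\text{AP}}^{(m_2)}},\nonumber
\end{align}
for all $n\in\mathbb{N}$ and all $m_1,m_2\in\mathbb{M}$, which is the claim.

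There is no real analytic difficulty here; the proof is essentially a one-line observation once the right viewpoint is fixed. The only point that needs care, and which I would state explicitly, is \emph{why the same rotation $\boldsymbol{Q}=\boldsymbol{Q}(\omega^*)$ appears for every AP}: the local coordinate system of AP $m$ differs from the global one only by a translation (to $\boldsymbol{p}_{\text{AP}}^{(m)}$) and by the rotation aligning the $X$-axis with the user's initial heading $\omega$ (see Fig.~\ref{Fig:Two_case} and \eqref{eq:temporal_relation_local}), and $\omega$ is a single global quantity shared across all APs rather than an AP-dependent one. Once this is granted, isometry of rotations does the rest; in particular the argument applies verbatim in both the linear and arbitrary mobility cases.
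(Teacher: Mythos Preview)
Your proposal is correct and follows essentially the same approach as the paper: invert \eqref{Eq:GlobalPostion} at the true heading, subtract for two APs so that the common global position cancels, and use that a rotation is an isometry. The only cosmetic difference is that the paper quotes the identity $\boldsymbol{z}_1^{(1)*}-\boldsymbol{z}_1^{(2)*}=\boldsymbol{\Theta}(-\omega^*)(\boldsymbol{p}_{\text{AP}}^{(2)}-\boldsymbol{p}_{\text{AP}}^{(1)})$ already derived inside the proof of Proposition~\ref{Proposition3} (equation \eqref{Eq:Subtract_z_optimal}) and then extends from $n=1$ to general $n$ via \eqref{Eq:IncreasingPoint}, whereas you re-derive the same identity directly for all $n$ in one stroke; the content is identical.
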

\begin{proof}
	See Appendix \ref{Appen:Propostion4}. 
\end{proof}
Select one reference AP whose index is denoted by $r$. Depending on $\mathcal{Z}_{+}^{(r)}$ or $\mathcal{Z}_{+}^{(r)}$, there exist two possible sequences of relative trajectories, denoted by $\mathcal{Y}_+^{(r)}$ and $\mathcal{Y}_-^{(r)}$, initialized as $\{\mathcal{Z}_{+}^{(r)}\}$ and $\{\mathcal{Z}_{-}^{(r)}\}$, respectively. For example, given $\mathcal{Z}_{+}^{(r)}$, either \ref{Eq:ExampleOfPair1} or \ref{Eq:ExampleOfPair2} holds, given as
\begin{align} 
	\psi_1=\norm{\boldsymbol{p}_{\text{AP}}^{(r)}&-\boldsymbol{p}_{\text{AP}}^{(m)}}-\norm{\boldsymbol{z}_n^{(r)*}-\boldsymbol{z}_n^{(m)*}}=0, \nonumber\\
	& \forall n\in \mathbb{N}, 
	\quad \boldsymbol{z}_n^{(r)*}\in \mathcal{Z}_{+}^{(r)},  \quad \boldsymbol{z}_n^{(m)*}\in \mathcal{Z}_{+}^{(m)}, \label{Eq:ExampleOfPair1}\tag{C1} \\
	\psi_2=\norm{\boldsymbol{p}_{\text{AP}}^{(r)}&-\boldsymbol{p}_{\text{AP}}^{(m)}}-\norm{\boldsymbol{z}_n^{(r)*}-\boldsymbol{z}_n^{(m)*}}=0, \nonumber\\
& \forall n\in \mathbb{N}, \quad  \boldsymbol{z}_n^{(r)*}\in \mathcal{Z}_{-}^{(r)}, \quad \boldsymbol{z}_n^{(m)*}\in \mathcal{Z}_{+}^{(m)}. \label{Eq:ExampleOfPair2}\tag{C2}
\end{align}
When \ref{Eq:ExampleOfPair1} holds (i.e., $\psi_1=0$), $\mathcal{Z}_{+}^{(m)}$ and $\mathcal{Z}_{-}^{(m)}$ are added in $\mathcal{Y}_+^{(r)}$ and $\mathcal{Y}_-^{(r)}$, respectively. When \ref{Eq:ExampleOfPair2} holds (i.e., $\psi_2=0$), reversely, $\mathcal{Z}_{-}^{(m)}$ and $\mathcal{Z}_{+}^{(m)}$ are added in $\mathcal{Y}_+^{(r)}$ and $\mathcal{Y}_-^{(r)}$, respectively. The addition process is continued until $|\mathcal{Y}_+^{(r)}|=|\mathcal{Y}_-^{(r)}|=\mathbb{F}$.  In the presence of measurement noises, neither \ref{Eq:ExampleOfPair1} nor \ref{Eq:ExampleOfPair2} can be satisfied. Instead, we relax \ref{Eq:ExampleOfPair1} and \ref{Eq:ExampleOfPair2} as $\psi_1>\psi_2$ and $\psi_1\leq \psi_2$, respectively.

Given $\mathcal{Y}_+^{(r)}$ and $\mathcal{Y}_-^{(r)}$, relative trajectories are rotated using \eqref{Eq:GlobalPostion}, denoted by $\boldsymbol{p}_n^{(m)}\big(w;\mathcal{Y}_+^{(r)}\big)$ and $\boldsymbol{p}_n^{(m)}\big(w;\mathcal{Y}_-^{(r)}\big)$, respectively. 
We define $e_3^{(r)}\big(\omega\big)=\min\left\{e_3^{(r)}\big(\omega;\mathcal{Y}_+^{(r)}\big), e_3^{(r)}\big(\omega;\mathcal{Y}_-^{(r)}\big)\right\}$, where
\begin{align}
	e_3^{(r)}\big(\omega;\mathcal{Y}^{(r)}\big)
	=\sum_{i,j\in\mathbb{F}}\sum_{n=1}^N\norm[\Big]{\boldsymbol{p}_n^{(i)}\big(\omega;\mathcal{Y}^{(r)}\big)-\boldsymbol{p}_n^{(j)}\big(\omega;\mathcal{Y}^{(r)}\big)}, \nonumber\\
	 \mathcal{Y}^{(r)}\in\{\mathcal{Y}_+^{(r)},\mathcal{Y}_-^{(r)}\}.\nonumber
\end{align}
By a 1D search, it is possible to find the reference AP $r^*$ and the corresponding $\omega^*$ to 
minimize the error function as
\begin{align}
	\{r^*, \omega^*\} = \arg\min_{r\in\mathbb{F},  \omega\in[0,2\pi)} \left\{e_3^{(r)}\big(\omega\big) \right\}.\nonumber
\end{align}

\begin{figure}[t]
	\centering
	\subfigure[Sample trajectory]{\includegraphics[width=3cm]{./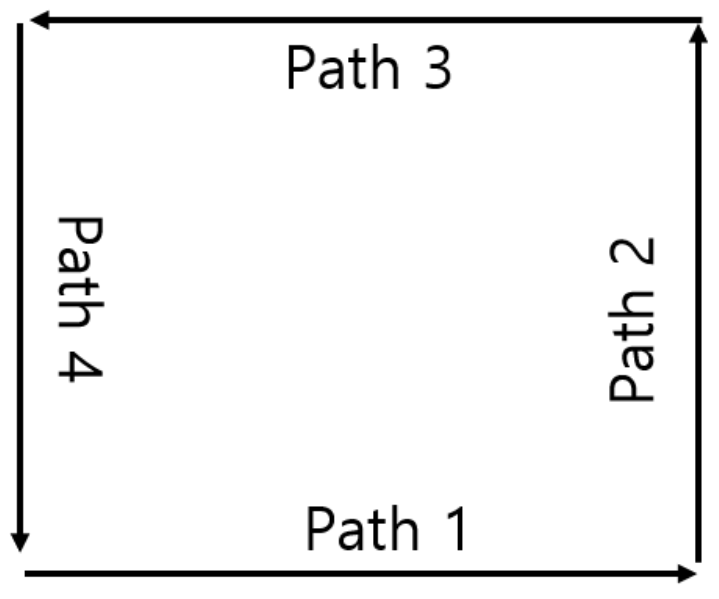}} 
	\subfigure[Quantization of heading change]{\includegraphics[width=8cm]{./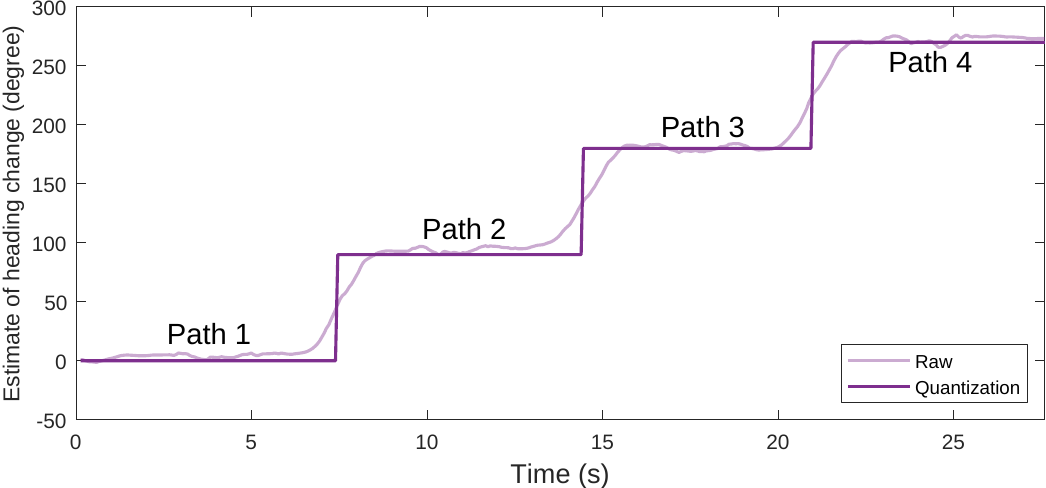}}
	\caption{{The graphical example of heading direction changes measured by a gyroscope. (a) Sample trajectory with four paths (b) the corresponding heading direction changes with and without quantization.} }
	\label{Fig:Quantization}
\end{figure}

\subsubsection{Determining the estimated trajectory}
Last, the estimated trajectory, say $\mathcal{P}=\{\boldsymbol{p}_n\}$, is derived by averaging 
$\{\boldsymbol{p}_n^{(m)}(\omega^*)\}_{m=1}^{|\mathbb{F}|}$ as
\begin{align} \label{Eq:GlobalPosition} 
	\boldsymbol{p}_n^*=\frac{1}{|\mathbb{F}|}\sum_{m\in\mathbb{F}} \boldsymbol{p}_n^{(m)}(\omega^*), \quad \forall n\in\mathbb{N}.
\end{align}

\section{Field Experiments}\label{Sec:Experiement}

\begin{figure*}[t]
	\centering
	\subfigure[Site A]{\includegraphics[width=7cm]{./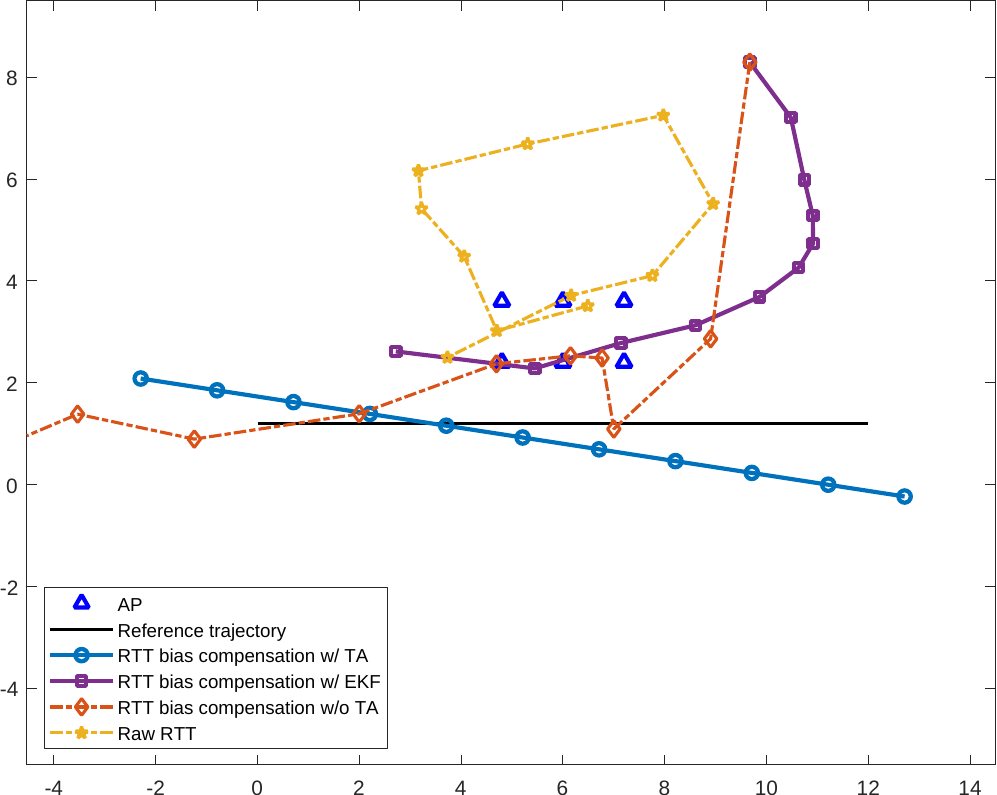}}
	\subfigure[Site B]{\includegraphics[width=7cm]{./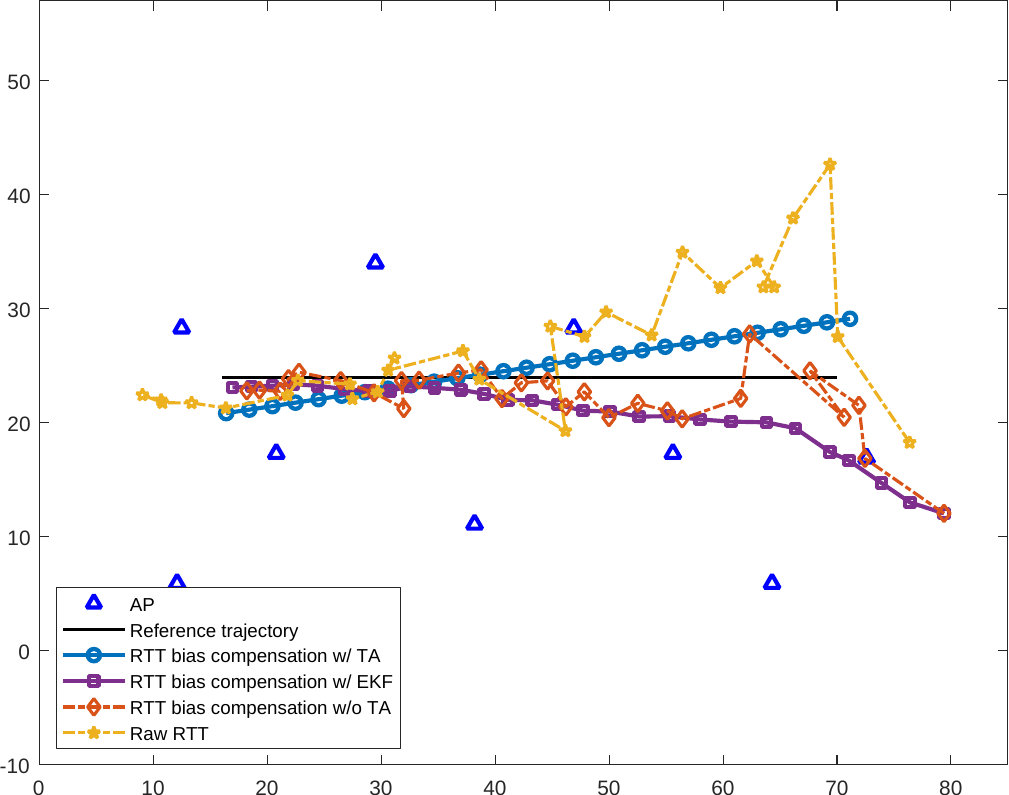}}
	\caption{The comparison between estimated and ground-truth trajectories: linear mobility cases. }
	\label{Fig:LinearPath}
\end{figure*}

\begin{figure*}[t]
	\centering
	\subfigure[Site A]{\includegraphics[width=7cm]{./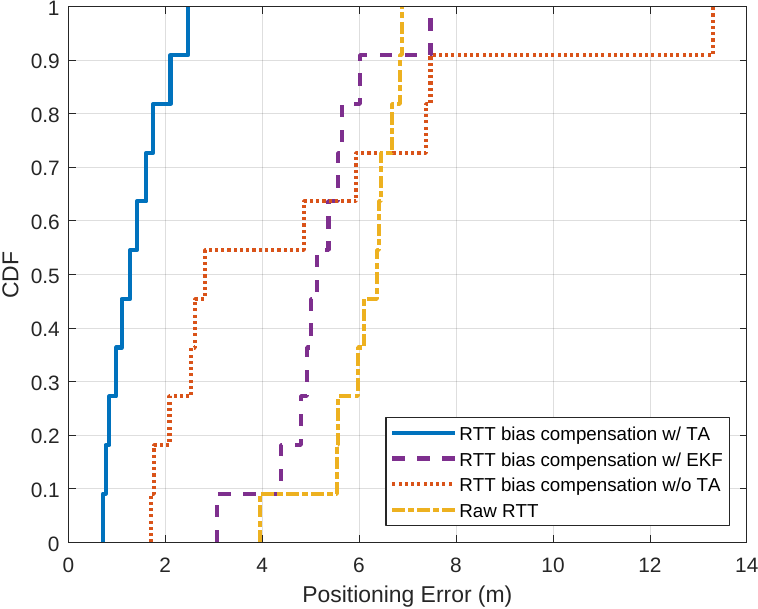}}
	\subfigure[Site B]{\includegraphics[width=7cm]{./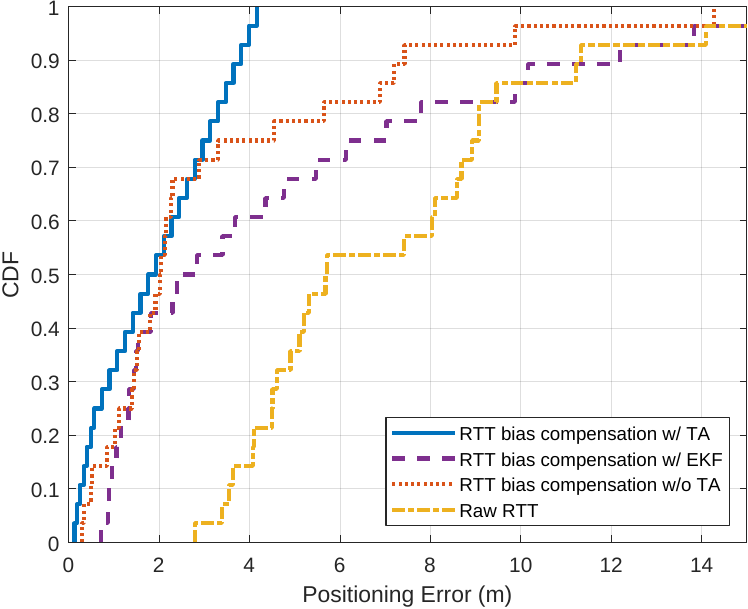}}
	\caption{The CDFs of positioning error for proposed algorithm and three benchmarks: linear mobility cases. }
	\label{Fig:LinearCDF}
\end{figure*}

This section aims at verifying the proposed positioning algorithms using several field experiments at two different indoor sites, each of which has different environments from the positioning perspective, as shown in Fig. \ref{Fig:TestEnvironments}. The first experimental site, called \emph{site A}, is a plaza located inside the Engineering building at Yonsei University, Seoul, Korea. Site A is an open space like outdoor environments where several LOS paths exist. On the other hand, the second experimental site, called \emph{site B}, is a parking lot located under {Building 11} in Korea Railroad Research Institute, Uiwang, Korea. {Compared with} site A, most signal propagations are followed by NLOS paths due to the presence of many obstacles such as {parked} vehicles, walls, and pillars. The detailed experiment setups are summarized in Table \ref{Tab:TestSetting}, unless specified.

We use the algorithm in \cite{Kang2015} to obtain the heading direction changes $\{\theta_n\}$ (see Fig. \ref{Fig:Quantization} as an example). 
As shown in Fig. \ref{Fig:Quantization}, we consider $\theta_n\in \{0$, $\frac{\pi}{2}$, $\pi$, $\frac{3\pi}{2}\}$, 
assuming that the user's moving direction only has finite choices depending on the surrounding arrangement (i.e., road, wall and et al.). Its effect is discussed in the sequel.

Three benchmarks are considered. The first one is based on raw RTT measurements without bias compensation for a conventional multilateration method, such as \emph{linear least square-reference selection} (LLS-RS) \cite{LLS_RS2008}. For the second one, compensated RTT measurements are utilized for a {multilateration} method, but TA is not applied. {The third one is an EKF-based algorithm in \cite{Sun2020}, which is operated based on $\delta^2$ and $\sigma^2$ representing the scaling factors of estimated positions from PDR and WiFi, respectively. We manually optimize these parameters and set them as $\delta^2=0.8$ and $\sigma^2=0.2$. For  a fair comparison, the RTT bias of each AP is compensated using our algorithm.
}
We use the Euclidean distance of estimated positioning to ground-truth locations to represent a positioning error, i.e.,  $\norm{\boldsymbol{p}_n^*-\boldsymbol{p}_n}$.

\subsection{Positioning Accuracy} \label{subsec:PositioningAccuracy}

We verify the performance of the algorithm for the cases of linear and arbitrary mobilities. 
The key performance metrics are summarized in Table \ref{Tab:TestResult}.

\subsubsection{Linear Mobility}

First, we consider the {cases} of linear mobility. Figure \ref{Fig:LinearPath} illustrates a graphical example of the estimated trajectories of the proposed one and three benchmarks, while Figure \ref{Fig:LinearCDF} shows  \emph{cumulative distributional functions} (CDFs) of the resultant positioning errors. 
Several key observations are made. First, the gain of the RTT bias compensation in Sec. \ref{Section:RTTRangingEnhancement} is verified by comparing two benchmarks, showing significant performance improvements for both Sites A and B. Second, TA explained in Sec. \ref{Sec:TrajectoryAlignment} makes all estimated points tailored to the trajectories detected by PDR, leading to additional performance enhancement. As a result, the resultant positioning errors of approximately $90\%$ are located within $2$ (m) and $4$ (m), and the average errors are $1.359$ (m) and $1.915$ (m) for Sites $A$ and $B$ respectively. 
The other performance metrics are summarized in Table \ref{Tab:TestResult}.

\subsubsection{Arbitrary Mobility}

\begin{figure*}[t]
	\centering
	\subfigure[Site A]{\includegraphics[width=5.5cm]{./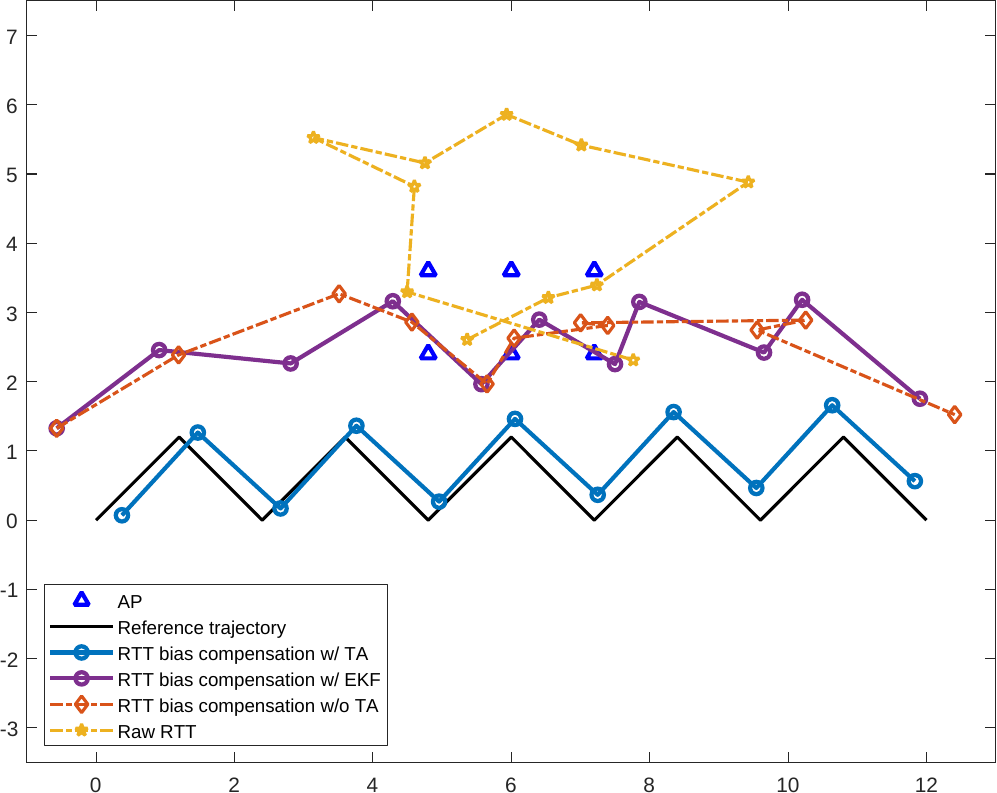}}
	\subfigure[Site B]{\includegraphics[width=5.5cm]{./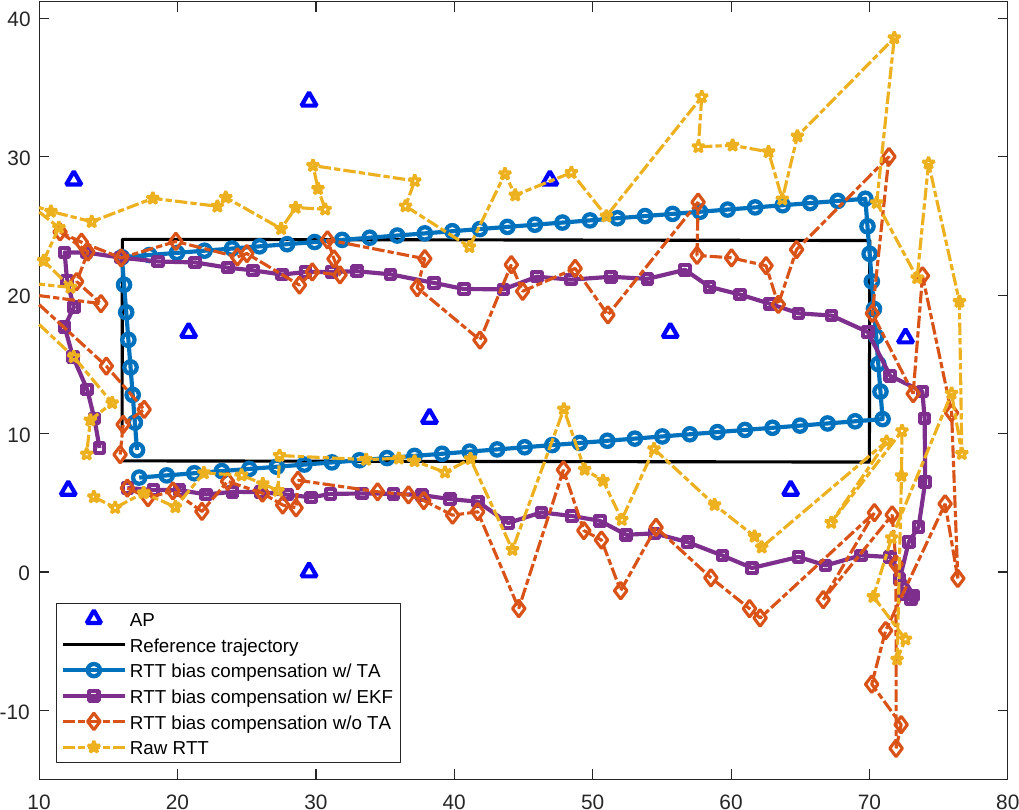}}
	\subfigure[{Site C}]{\includegraphics[width=5.5cm]{./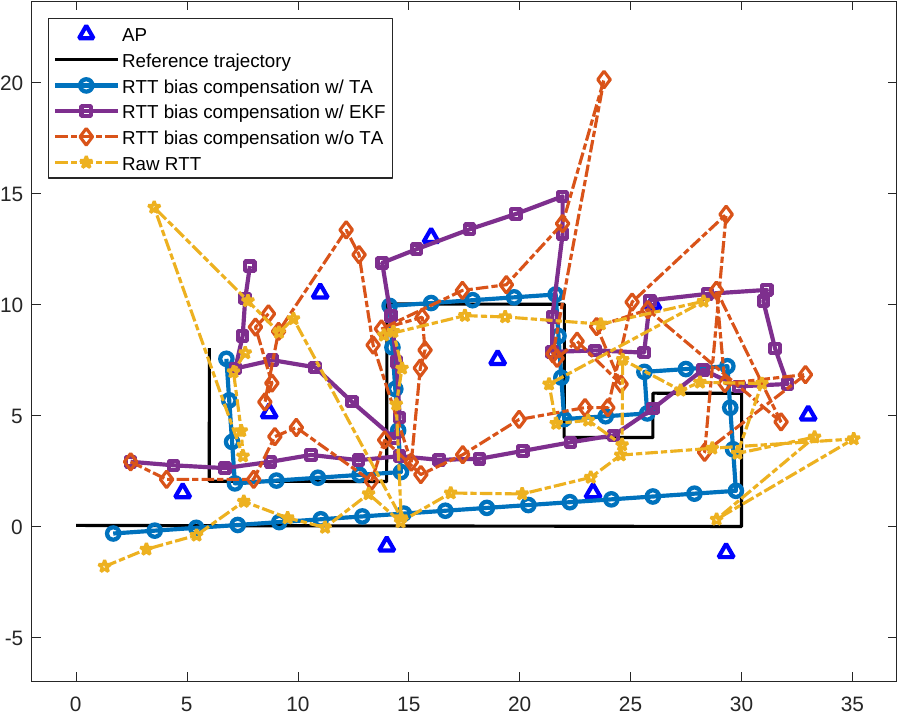}}
	\caption{The comparison between estimated and ground-truth trajectories: arbitrary mobility cases.}
	\label{Fig:ArbitraryPath}
\end{figure*}

\begin{figure*}[t]
	\centering
	\subfigure[Site A]{\includegraphics[width=5.5cm]{./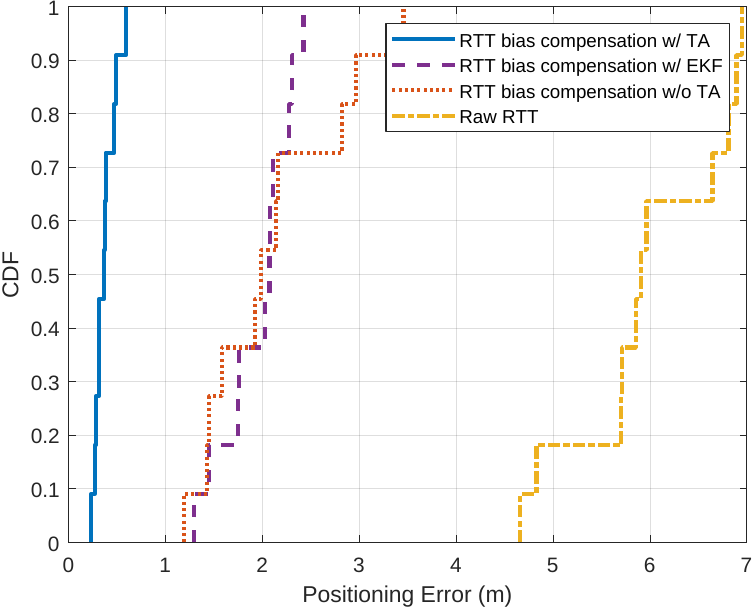}}
	\subfigure[Site B]{\includegraphics[width=5.5cm]{./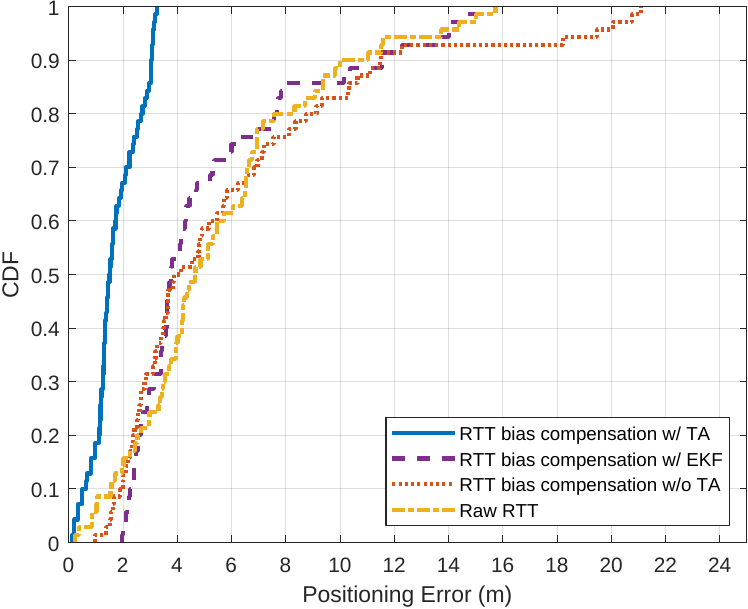}}
	\subfigure[{Site C}]{\includegraphics[width=5.5cm]{./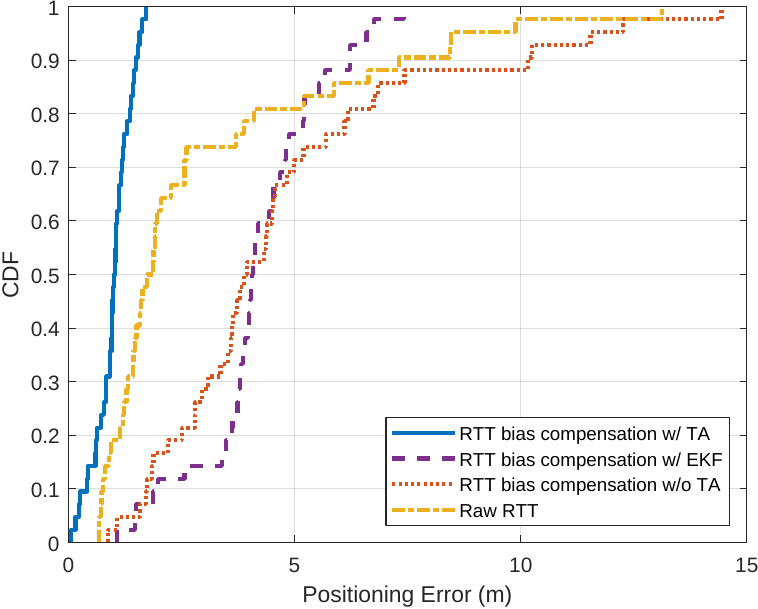}}
	\caption{The CDFs of positioning error for proposed algorithm and three benchmarks: arbitrary mobility cases.}
	\label{Fig:ArbitraryCDF}
\end{figure*}

Second, we consider the {cases} of arbitrary mobility.  we illustrate a graphical example of the estimated trajectories in Figure \ref{Fig:ArbitraryPath} and \emph{cumulative distributional functions} (CDFs) of the resultant positioning errors in  Figure \ref{Fig:ArbitraryCDF}, showing similar tendencies to the linear mobility counterpart. Besides, it is shown that the algorithm for arbitrary mobility provides a more accurate positioning result than that for linear mobility such that 
{the resultant positioning errors of $90\%$ are approximately located within $0.5$ (m), $2.5$ (m), and $1.5$ (m), and the average errors are $0.369$ (m), $1.705$ (m), and $0.978$ (m) for Sites $A$, $B$, and $C$, respectively.} 

{It is noteworthy that the algorithm proposed for arbitrary mobility provides a more accurate positioning result than the linear mobility counterpart (see Table \ref{Tab:TestResult}). As recalled in Remark 6, arbitrary mobility embeds 2D location information, while linear mobility embeds 1D information. The difference results in a significant accuracy difference between the two.}

{
\subsubsection{Comparison with EKF-based Algorithm}
As shown in \Cref{Fig:LinearPath,Fig:LinearCDF,Fig:ArbitraryPath,Fig:ArbitraryCDF} and Table \ref{Tab:TestResult}, we confirm that our proposed algorithm outperforms the EKF-based one for both linear and arbitrary mobility scenarios. The EKF-based algorithm updates the current position based on the previously estimated ones, resulting in accumulating errors on positioning decisions that have been already made. On the other hand, the proposed one can correct such errors by utilizing the full trajectory information.
}

\subsection{Effect of Weighed Factors $w_1$ and $w_2$}

\begin{figure}[t]
	\centering
	\includegraphics[width=6cm]{./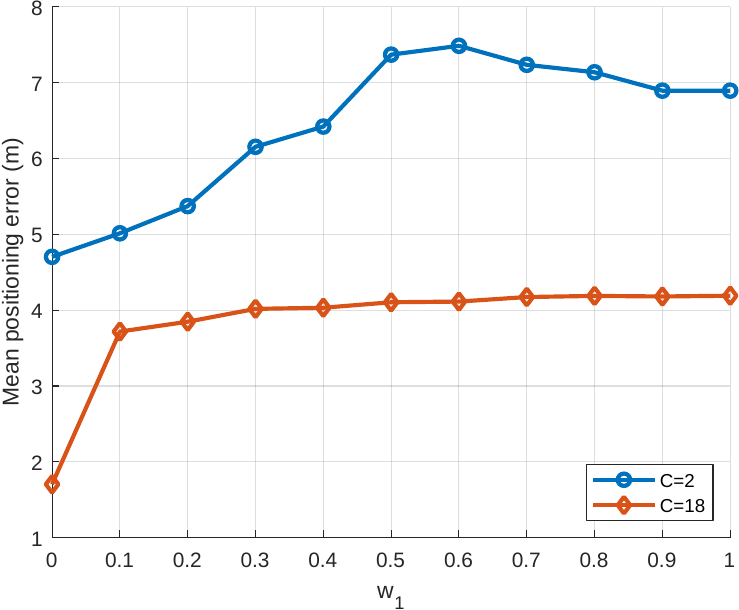}
	\caption{The effect of weight factors $\{w_1, w_2\}$ on positioning error. Site B is considered.}
	\label{Fig:WeighedTest}
\end{figure}
Recall a pair of $w_1$ and $w_2$ weighting the error functions $e_1$ of \eqref{Eq:Cost_Function_1} and $e_2$ of \eqref{Eq:Cost_Function_2}, respectively. As stated in Remark \ref{Remark:EffectOfErrorFunction2},  the error function $e_1$ focuses on minimizing the error on step length $d$ and the RTT bias $b$, while $e_2$ {aims} at minimizing the error on the initial position $\boldsymbol{z}_1^*$. Fig. \ref{Fig:WeighedTest} represents the effect of weight factors on the average positioning error,
showing that $\{w_1,w_2\}=\{0,1\}$ provides the most accurate positioning result. It is because the errors on $d$ and $b$ can be compensated by the method of multiple combination of RSs introduced in Sec.  \ref{Subsec:RS_Selection}. Therefore, if a sufficient number of RSs is selected, it is optimal to focus on the minimization of $e_2$.

\subsection{Effect of Multiple Combinations of RSs}
\begin{figure}[t]
	\centering
	\includegraphics[width=6cm]{./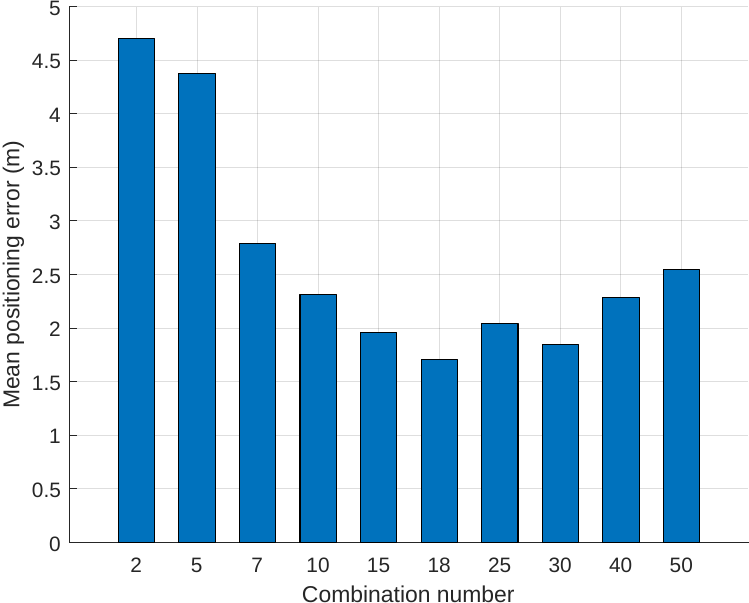}
	\caption{The positioning performance according to the combination number in Site B. $C=18$ has the best performance.}
	\label{Fig:CombinationTest}
\end{figure}

Fig. \ref{Fig:CombinationTest} shows the average positioning error as a function of the number of candidate RSs $C$, showing that the error is reduced from {$4.702$ (m) to $1.705$ (m)}, when the number of RSs $C$ increases from  $2$ {to} $18$. On the other hand, $C$ larger than $18$ rather deteriorates a positioning result since a larger portion of RSs is likely to be severely corrupted by the measurement error. Through various simulation studies, the positioning error can be {minimized in average sense} by selecting the number of RSs $C$ as $0.25 N$, where $N$ is the number of walking steps 
(e.g., $\frac{C}{N}=\frac{18}{70}\approx 0.257$ in the experiment setting of Fig.~\ref{Fig:CombinationTest}) . 

\subsection{Effect of Heading Direction Error}
\begin{figure}[t]
	\centering
	\subfigure[Arbitrary: Site A]{\includegraphics[width=8cm]{./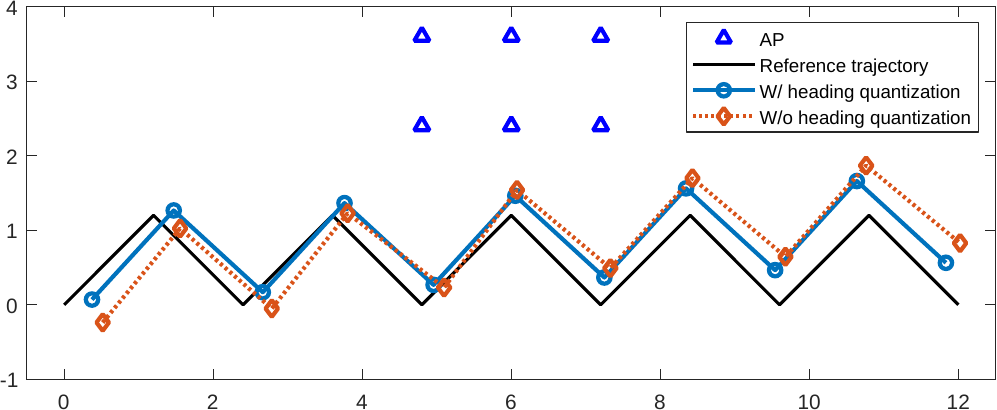}}
	\subfigure[Arbitrary: Site B]{\includegraphics[width=8cm]{./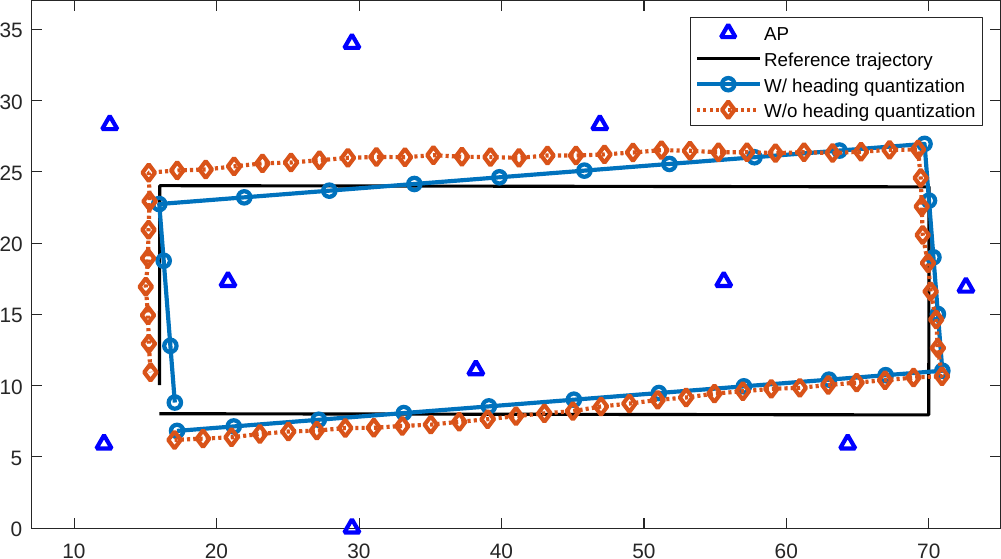}}
	\subfigure[{Arbitrary: Site C}]{\includegraphics[width=8cm]{./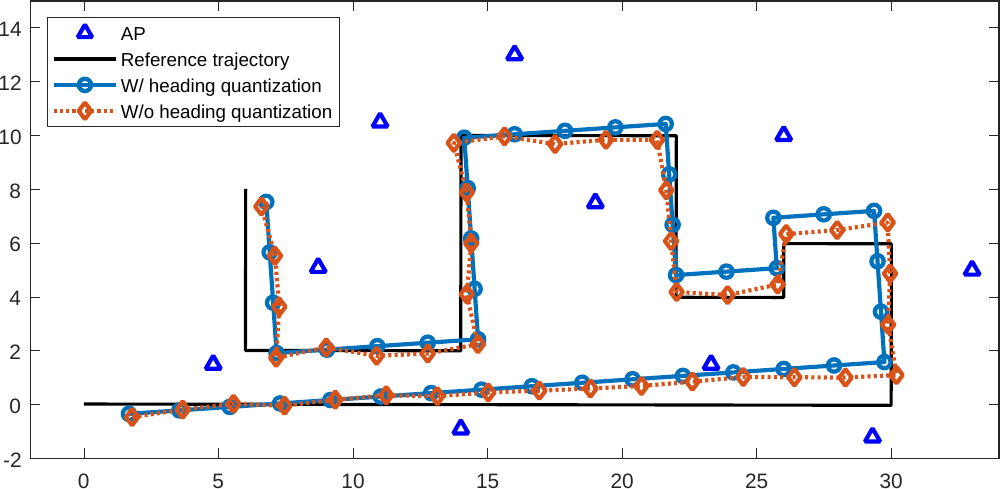}}
	
	\caption{The comparison between estimated trajectories with and without the quantization of heading direction change $\{\theta_n\}$. }
	\label{Fig:PriorPath}
\end{figure}

{
Recall that heading direction changes $\{\theta_n\}$ are quantized into four levels as $\theta_n\in \{0$, $\frac{\pi}{2}$, $\pi$, $\frac{3\pi}{2}\}$, by exploiting the prior information of surrounding arrangement. It enables us to obtain the precise trajectory estimation, as shown in Fig. \ref{Fig:LinearPath} and \ref{Fig:ArbitraryPath}. 
On the other hand, Fig.~\ref{Fig:PriorPath} plots the estimated  trajectory without quantization, showing that the degradation of the positioning accuracy is marginal such that the average positioning error increases from $0.369$ (m) to $0.494$ (m) at Site A, $1.705$ (m) to $1.958$ (m) at Site B.  At Site C, the positioning accuracy is slightly improved from $0.978$ (m) to $0.873$ (m). With the gyroscope error being unbiased, each error can be canceled out while moving. As a result, the entire trajectory is generally well-maintained with acceptable error, verifying the proposed algorithm to be robust against such error. }

\begin{table}
	\caption{Experiment details} \label{Tab:TestSetting}
	\begin{adjustbox}{width=\textwidth/2}
		\begin{tabular}{l|lll}
			& Site A  & Site B & Site C \\ \hline
			\multirow{2}{*}{Address} & Yonsei University,  & Korea Railroad Research & Yonsei University  \\
			& Seoul, Korea & Institute, Uiwang, Korea & Seoul, Korea \\ [0.5em]
			Type of site & Indoor, plaza  & Underground, parking lot & Indoor, conference hall \\ [0.5em]
			\multirow{2}{*}{Chip} & Intel Dual Band   &  Qualcomm IPQ 4018 & Qualcomm IPQ 4018 \\
			& Wireless AC 8260 &   & \\ [0.5em]
			Bandwidth& $80$ MHz  & $40$ MHz &$40$MHz  \\ [0.5em]
			Carrier & $5.24$ GHz  & $5.18$ GHz & $5.18$ GHz \\ [0.5em]
			Number of WiFi APs & $6$   & $10$ & $10$  \\ [0.5em]
			Heights of APs & $1.8$ m  & $2.0$ m & $2.0$m  \\ [0.5em]
			Heights of mobile & $1.8$ m  & $1.1$ m  & $1.1$m \\ [0.5em]
			Smartphone & Google Pixel $2$XL  & Google Pixel $2$XL & Google Pixel $2$XL   \\ [0.5em]
			Version & Android $9$  & Android $9$ & Android $9$ \\  [0.5em]  
			Weight factors $\{w_1, w_2\}$  & \{0,1\}  & \{0,1\} & \{0,1\} \\ [0.5em] 
			\multirow{2}{*}{Number of candidate RSs $C$}  & Linear: $2$  & Linear: $5$ & \multirow{2}{*}{Arbitrary: $12$} \\
			&Arbitrary: $3$ & Arbitrary: $18$
			\\\hline
		\end{tabular}
	\end{adjustbox}
\end{table}

\begin{table}[t]
	\centering
	\caption{Several performance metrics of each algorithm.}\label{Tab:TestResult}
	\begin{adjustbox}{width=\textwidth/2}
		\begin{tabular}{cccl|ccccc} 
			\toprule
			\multirow{2}{*}{Site}& Mobility &  Number  & \multirow{2}{*}{Algorithm}   & \multicolumn{5}{c}{Positioning Error (m)}  \\ 
			\cline{5-9}
			& type & of steps &      & min   & max   & mean  & med   & std     \\  
			\hhline{---------}
			\multirow{3}{*}{A} &\multirow{3}{*}{Linear} & \multirow{3}{*}{11}& w/ TA       & 0.705 & 2.459 & 1.359 & 1.269 & 0.570   \\
			&& & w/ EKF        & 3.063 & 7.468& 5.211& 5.128 & 1.081   \\
			&& & w/o TA        & 1.696 & 13.292& 4.762 & 2.808 & 3.569   \\
			&& & Raw RTT             & 3.95 & 6.874 & 6.064 & 6.365 & 0.835    \\ 
			\hline
			\multirow{3}{*}{B} &\multirow{3}{*}{Linear} & \multirow{3}{*}{28} &w/ TA           & 0.124 & 4.157 & 1.915 & 1.842 & 1.307   \\
			&& & w/ EKF        & 0.703 & 15.196 & 4.496& 2.612 & 4.249   \\
			&& & w/o TA         & 0.286 & 14.270 & 3.182 & 2.031 & 3.281   \\
			&& & Raw RTT              & 2.797 & 18.960 & 5.692 & 5.692 & 3.664    \\ 
			\hline
			\multirow{3}{*}{A} &\multirow{3}{*}{Arbitrary} &\multirow{3}{*}{11} & w/ TA           & 0.231 & 0.588 & 0.369 & 0.365 & 0.108   \\
			&& & w/ EKF        & 1.290 & 2.422 & 1.956& 2.071 & 0.358   \\
			&& & w/o TA              & 1.187 & 3.454 & 2.099 & 1.986 & 0.716   \\
			&& & Raw RTT              & 4.660 & 6.945 & 5.992 & 5.910 & 0.783    \\ 
			\hline
			\multirow{3}{*}{B} & \multirow{3}{*}{Arbitrary}& \multirow{3}{*}{70}& w/ TA             & 0.121 & 3.245 & 1.705 & 1.504 & 0.865  \\
			&& & w/ EKF        & 1.947 & 15.115 & 5.243& 3.773 & 3.510   \\
			&& & w/o TA       & 0.976 & 21.092 & 5.963 & 4.015 & 4.866   \\
			&& & Raw RTT             & 0.218 & 15.727 & 5.525 & 4.653 & 3.555    \\ 
			\hline
			\multirow{3}{*}{C} & \multirow{3}{*}{Arbitrary}& \multirow{3}{*}{70}& w/ TA             & 0.046 & 1.715 & 0.978 & 1.014 & 0.407  \\
			&& & w/ EKF        & 1.064 & 7.406 & 4.199& 4.064 & 1.397   \\
			&& & w/o TA       & 0.857 & 14.439 & 4.743 & 3.095 & 3.072   \\
			&& & Raw RTT             & 1.064 & 7.406 & 4.199 & 4.064 & 1.397    \\ 
			\bottomrule
		\end{tabular}
	\end{adjustbox}
\end{table}

\section{Concluding Remark}\label{Sec:Conclusion}

This work has presented a novel positioning algorithm to estimate a user's location by integrating RTT and PDR measurements. Geometric relations between the two are formulated as mathematical form, enabling us to design a tractable and scalable  positioning algorithm as well as provide the feasible conditions of the number of steps and WiFi APs for a unique positioning. The superiority of the proposed method has been well verified by field experiments that the positioning accuracy can be significantly improved than the conventional multilateration techniques. 

{
The current work can be extended in several directions. First, our algorithm can help \emph{simultaneous localization and mapping} (SLAM) by identifying whether a blockage exists from the concerned AP from the corresponding RTT bias. Second, our algorithm can be applied to the area of vehicle and drone positioning \cite{Ko2021} that has not been actively explored yet. Last, it is interesting to integrate such a sensing system into a wireless communication system using several techniques, e.g., compressive sensing \cite{Han2019b} and over-the-air computation \cite{Chen2018}, a key direction in B5G/6G communications.
}

\appendix
\section{Appendix} \label{Appendix}

\begin{figure*}
	\centering
	\begin{minipage}{0.45\textwidth}
		\centering 
		\includegraphics[height=6.5cm]{./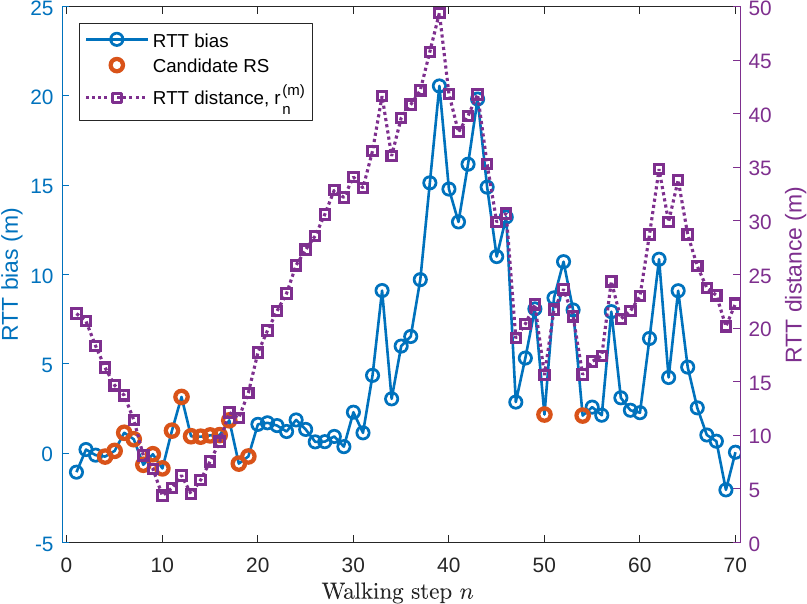}
		\caption{{The trace of RTT and bias measured by AP 3 in Site B. Candidate RSs are indicated as red dots.}}
		\label{Fig:Bias_and_RTT}
	\end{minipage}\hfill
	\begin{minipage}{0.48\textwidth}
		\centering 
		{\includegraphics[height=6cm]{./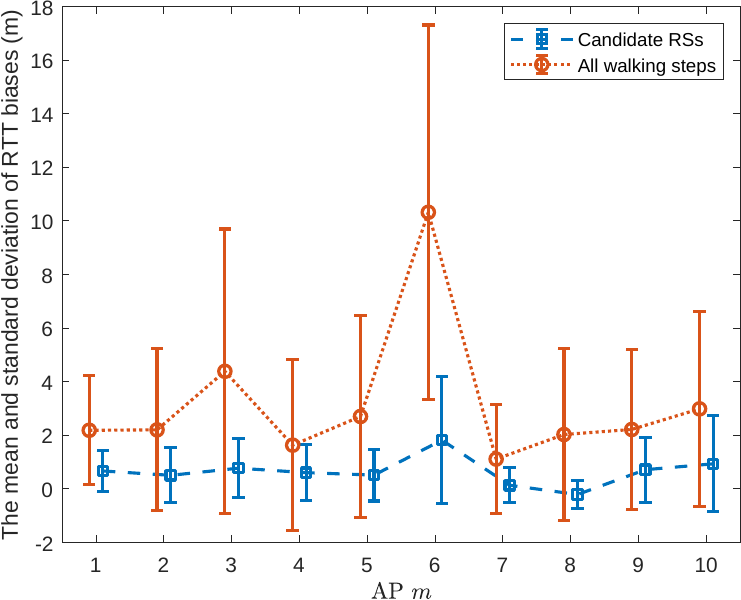}}
		\caption{{The mean and standard deviation of RTT biases in Site B. The sets of candidate RSs and all walking steps in the trajectory are considered.}}
		\label{Fig:RTTStd}
	\end{minipage}
\end{figure*}

\subsection{Verifying Effect of Deterministic Bias Assumption}
\label{Subsec:EffectofBias}

%

{
	This subsection aim at verifying that our algorithm can reduce the error due to the deterministic bias in Assumption 1 as marginal using RS selection and multiple combinations of RSs. 
	\begin{itemize}
		\item \textbf{Reference Step Selection}: Recall that when making the systems of linear equations in \ref{Eq:Linearization_Linear} and \ref{Eq:Linearization_Arbitrary}, two RSs play a pivotal role in the linearization process (see Sec. \ref{Subsec:LinearMobility} and Sec. \ref{Subsec:ArbitraryMobility}). In other words, with the RSs whose biases violate Assumption \ref{Assumption:Bias}, their  effects propagate throughout the entire equations, causing severe errors on the resulting solution. It is thus required to select RSs with biases being relatively constant.  
		Fig. \ref{Fig:Bias_and_RTT} plots the variations of RTT and bias during the full-trajectory in Site B, showing that the biases within a specific range are relatively constant when the corresponding RTTs are small.  As a result, we select two RSs whose RTTs are smaller.  Compared with selecting RSs randomly, selecting two RSs having the smallest RTT can reduce the average error on step length estimation up to $0.231$ (m).  
		\item \textbf{Multiple Combinations of Reference Steps}: Fig. \ref{Fig:StepLengthError}(a) represents the estimated step length  from two RSs sorted by RTT measurements; A smaller index represents that the sum of the selected RSs' RTTs is smaller. It is observed that smaller RTT does not always provide a higher estimation accuracy. To cope with this issue, several candidate RSs are selected instead of a single pair (See Sec. \ref{Subsec:RS_Selection}). Fig.~\ref{Fig:RTTStd} shows that the standard deviation of candidate RSs' RTTs  is significantly smaller than that of all RTTs, validating the deterministic bias assumption in Assumption \ref{Assumption:Bias}.  It leads to providing a reliable step-length estimation for all APs as shown in Fig. \ref{Fig:StepLengthError}(b). 
	\end{itemize}
}

\begin{figure*}[t] 
	\centering
	\subfigure[{Step length estimation of different pairs of RSs for AP 1.} ]{\includegraphics[width=7cm]{./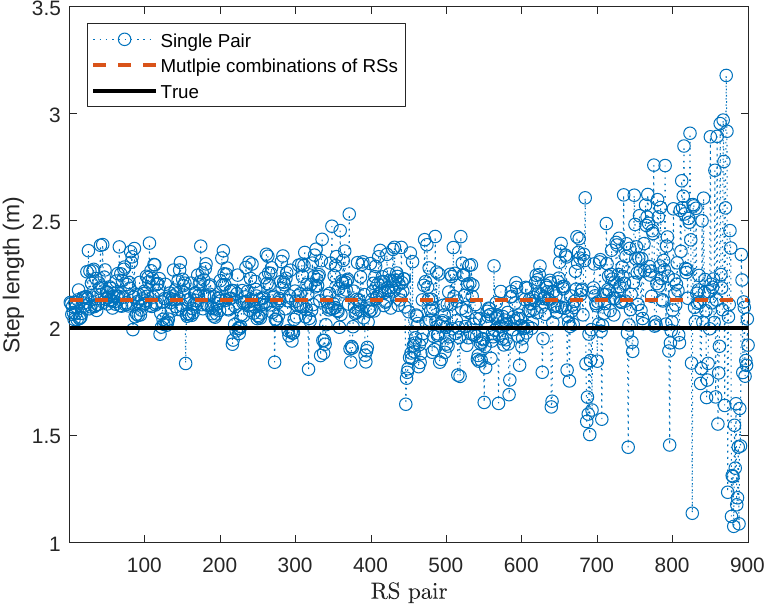}}
	\subfigure[{The average step length estimation error for every AP.} ]{\includegraphics[width=7cm]{./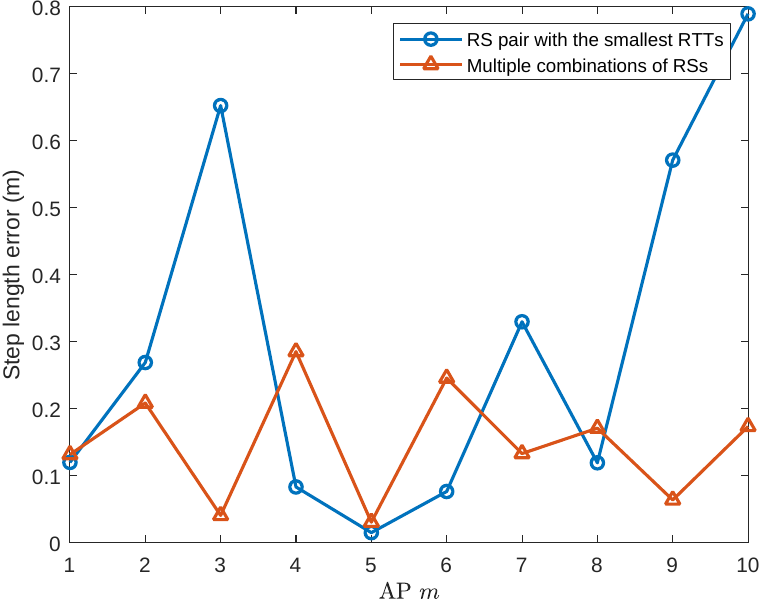}}
	\caption{{Step length estimation in Site B.}}
	\label{Fig:StepLengthError}
\end{figure*}

\subsection{Proof of Proposition \ref{Proposition2}} \label{Appen:Propostion2}
Noting that \ref{Eq:Linearization_Arbitrary} has $(N-2)$ equations, it is an overdetermined system when $N\geq5$ if the rank of the matrix $\boldsymbol{A}(\mathbb{S},\gamma)$ in \eqref{Eq:A_B_Arbitrary} is $3$. Then, there always exists $\gamma^*$ satisfying 
$e_1(\mathbb{S},\gamma^*)=0$ since \ref{Eq:Linearization_Arbitrary} is formulated from a geometric representation of multiple unknowns.  

Consider a special case of $\gamma=\hat{\gamma}$ where $f_{a_1,a_2}(\hat{\gamma})=0$. At this case,  $f_{n,a_2}(\hat{\gamma})=f_{n,a_1}(\hat{\gamma})$ because if $a_1<a_2$, $f_{n,a_2}(\hat{\gamma})= \sum_{i=1}^{n-1}\cos(\theta_{i}-\hat{\gamma})-\big(\sum_{i=1}^{a_1-1}\cos(\theta_{i}-\hat{\gamma})+\sum_{i=a_1}^{a_2-1}\cos(\theta_{i}-\hat{\gamma})\big ) $ and $f_{a_1,a_2}(\hat{\gamma})=\sum_{i=a_1}^{a_2-1}\cos(\theta_{i}-\hat{\gamma})$ by definition. From this, the matrix components $\alpha_n(\mathbb{S},\hat{\gamma})$ and $\beta_n(\mathbb{S},\hat{\gamma})$ can be changed as
\begin{align} \label{Eq:AlphaChanged}
	\alpha_n(\mathbb{S},\hat{\gamma})&=f_{n,a_2}(\hat{\gamma})\eta_{n,a_1}-f_{n,a_1}(\hat{\gamma})\eta_{n,a_2} \nonumber \\
	&=f_{n,a_1}(\hat{\gamma})(\eta_{n,a_1}-\eta_{n,a_2}) \nonumber \\
	&=f_{n,a_1}(\hat{\gamma})\eta_{a_2,a_1},
\end{align} 
\begin{align} \label{Eq:BetaChanged}
	\beta_n(\mathbb{S},\hat{\gamma})&=2\big(f_{n,a_2}(\hat{\gamma})(r_n-r_{a_1})-f_{n,a_1}(\hat{\gamma})(r_n-r_{a_2})\big) \nonumber \\
	&=2\big(f_{n,a_1}(\hat{\gamma})(r_{a_2}-r_{a_1})\big).
\end{align} 
Above \eqref{Eq:AlphaChanged} and \eqref{Eq:BetaChanged} have common terms $f_{n,a_1}(\hat{\gamma})$, the matrix $\boldsymbol{A}(\mathbb{S}, \hat{\gamma})$ which defined as \eqref{Eq:A_B_Arbitrary} becomes
\begin{align}
	\boldsymbol{A}(\mathbb{S}, \hat{\gamma})=
	\begin{bmatrix}
		\eta_{a_2,a_1} & 2(r_{a_2}-r_{a_1})
	\end{bmatrix}.
\end{align}
Therefore, the rank of $\boldsymbol{A}(\mathbb{S}, \hat{\gamma})$ is 1, it is underdetermined system. This finishes the proof.

\subsection{Proof of Proposition \ref{Proposition3}} \label{Appen:Propostion3}

We provide separated proofs for arbitrary and linear mobilities. 

\subsubsection{Arbitrary mobility} Consider the case with APs $1$ and $2$. 
Using \eqref{Eq:GlobalPostion},  a pair of relative locations at $n=1$, say $\boldsymbol{z}_1^{(1)*}$ and $\boldsymbol{z}_1^{(2)*}$, are given as	
\begin{align} \label{Eq:z_n^1}
	\boldsymbol{z}_1^{(1)*}=
	\boldsymbol{\Theta}(-\omega) \big(\boldsymbol{p}_1^{(1)}(\omega)-\boldsymbol{p}_{\text{AP}}^{(1)}\big), 
	\\
	\boldsymbol{z}_1^{(2)*}=
	\boldsymbol{\Theta}(-\omega) \big(\boldsymbol{p}_1^{(2)}(\omega)-\boldsymbol{p}_{\text{AP}}^{(2)}\big), 
\end{align}
where $\boldsymbol{\Theta}(\omega)=\begin{bmatrix}
	\cos(\omega) & -\sin(\omega)\\
	\sin(\omega) & \cos(\omega)
\end{bmatrix}$ is a rotation matrix. Their relation is derived by subtracting the above two as
\begin{align}\label{Eq:Subtract_z}
	\boldsymbol{z}_1^{(1)*}-\boldsymbol{z}_1^{(2)*}=
	\boldsymbol{\Theta}(-\omega)
	\Big\{\big(\boldsymbol{p}_1^{(1)}(\omega)-\boldsymbol{p}_1^{(2)}(\omega)\big)-\big(\boldsymbol{p}_{\text{AP}}^{(1)}-\boldsymbol{p}_{\text{AP}}^{(2)}\big)\Big\}.
\end{align}
Noting that $\boldsymbol{p}_1^{(1)}(\omega)=\boldsymbol{p}_1^{(2)}(\omega)$ if $\omega=\omega^*$, the above is reduced as
\begin{align} \label{Eq:Subtract_z_optimal}
	{\boldsymbol{z}_1^{(1)*}-\boldsymbol{z}_1^{(2)*}}=
	\boldsymbol{\Theta}(-\omega^*)
	(\boldsymbol{p}_{\text{AP}}^{(2)}-\boldsymbol{p}_{\text{AP}}^{(1)}),
\end{align}
where $\omega^*$ is unique since the rotation matrix is not ambiguous between $[0, 2\pi)$. 
Next, the relation between two relative locations at $n=2$, say $\boldsymbol{z}_2^{(1)*}$ and $\boldsymbol{z}_2^{(2)*}$, is given as 
\begin{align} \label{Eq:IncreasingPoint}
	\boldsymbol{z}_{2}^{(1)*}-\boldsymbol{z}_{2}^{(2)*} &= \Big(\boldsymbol{z}_1^{(1)}+d \begin{bmatrix}
		\cos (\theta_1), \sin (\theta_1)
	\end{bmatrix}^T\Big) \\ &\quad \,\, - \Big(\boldsymbol{z}_1^{(2)}+d \begin{bmatrix}
		\cos (\theta_1),  \sin (\theta_1)
	\end{bmatrix}^T\Big)\nonumber\\ 
	&=\boldsymbol{z}_1^{(1)*}-\boldsymbol{z}_1^{(2)*}.
\end{align}
The above is straightforwardly extended into other relative locations at $n\in\mathbb{N}$. 	
In other words, two relative trajectories $\mathcal{Z}^{(1)}=\{\boldsymbol{z}_n^{(1)*}\}$ and $\mathcal{Z}^{(2)}=\{\boldsymbol{z}_n^{(2)*}\}$, are perfectly aligned when $\omega=\omega^*$. We complete the proof. 

\subsubsection{Linear Mobility}

\begin{figure}[t]
	\centering
	\subfigure[]{\includegraphics[width=6cm]{./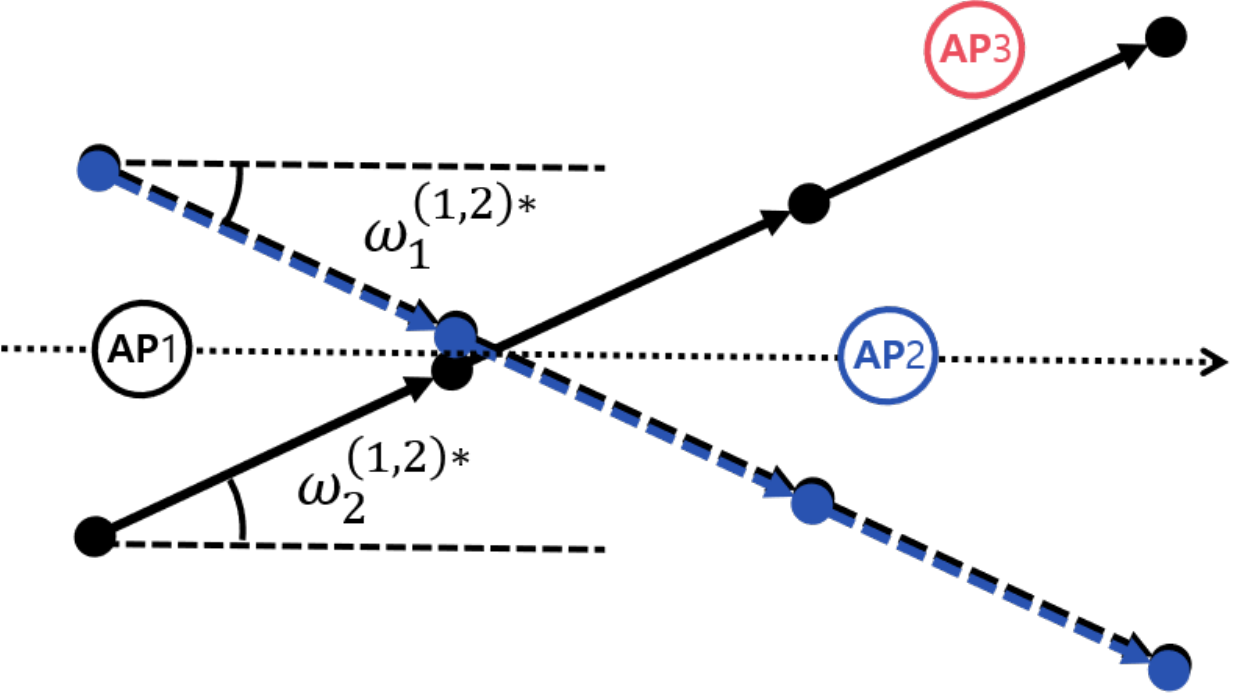}}
	\subfigure[]{\includegraphics[width=6cm]{./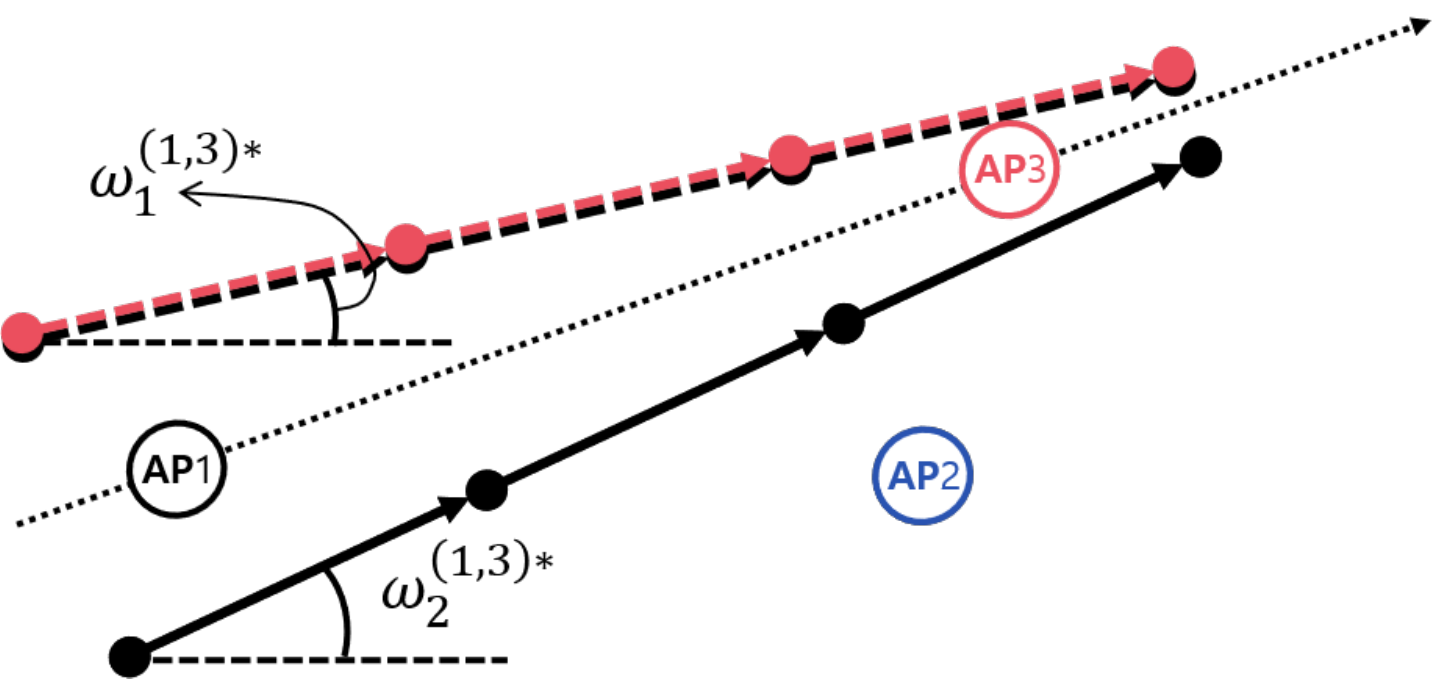}}
	\caption{The black solid line and dotted lines represent the real trajectory and fake trajectories, respectively. In this case, $\omega_2^{(1,2)*}=\omega_2^{(1,3)*}=\omega^*$ is the real one.}
	\label{Fig:proofLinear}
\end{figure}

Consider the case with APs $1$, $2$, and $3$. Recall that there are two relative trajectories for each AP, 
say $\{\mathcal{Z}^{(m)}_+, \mathcal{Z}^{(m)}_-\}_{m=1}^3$. Following the same step in the arbitrary mobility counterpart, two possible global coordinates are made for APs $1$ and $2$, denoted by $\mathcal{Z}^{(1,2)}_1$ and $\mathcal{Z}^{(1,2)}_2$, which are symmetric concerning the line between the locations of APs $1$ and $2$ (See Fig. \ref{Fig:proofLinear}). Specifically, the corresponding heading directions, denoted by $\omega_1^{(1,2)*}$ and $\omega_2^{(1,2)*}$, has the following geometric relation as
\begin{align}
	\omega_1^{(1,2)*}+\omega_2^{(1,2)*}=2 \angle \left(\boldsymbol{p}_{\text{AP}}^{(1)}-\boldsymbol{p}_{\text{AP}}^{(2)}\right),
\end{align}
where $\angle(\boldsymbol{x})$ returns the angle of the vector $\boldsymbol{x}$. Between the two, one is real whereas the other is fake. Similarly, the resultant heading directions for APs $1$ and $3$, denoted by $\omega_1^{(1,3)*}$ and $\omega_2^{(1,3)*}$, gives
\begin{align}
	\omega_1^{(1,3)*}+\omega_2^{(1,3)*}=2 \angle \left(\boldsymbol{p}_{\text{AP}}^{(1)}-\boldsymbol{p}_{\text{AP}}^{(3)}\right).
\end{align}
Using the above two equations, it is easy to identify which ones are real. 
For example, assume $\omega_2^{(1,2)*}$ and $\omega_2^{(1,3)*}$ are real, namely,  $\omega^*=\omega_2^{(1,2)*}=\omega_2^{(1,3)*}$. Unless $\angle \left(\boldsymbol{p}_{\text{AP}}^{(1)}-\boldsymbol{p}_{\text{AP}}^{(2)}\right)= \angle \left(\boldsymbol{p}_{\text{AP}}^{(1)}-\boldsymbol{p}_{\text{AP}}^{(3)}\right)$, it is obvious to identify $\omega_1^{(1,2)*}$ and $\omega_1^{(1,3)*}$ are fake since they are different. Note that if all APs exist on a straight line, it cannot be distingished between real and fake ones because symmetry is maintained, completing the proof.  

\subsection{Proof of Proposition \ref{Proposition4}} \label{Appen:Propostion4}
It can easily be verify by the fact ${\boldsymbol{z}_1^{(1)*}-\boldsymbol{z}_1^{(2)*}}=
\boldsymbol{\Theta}(-\omega^*)
(\boldsymbol{p}_{\text{AP}}^{(2)}-\boldsymbol{p}_{\text{AP}}^{(1)})$ as in \eqref{Eq:Subtract_z_optimal}. By putting norm on both sides,
\begin{align}
	\norm{\boldsymbol{z}_1^{(1)*}-\boldsymbol{z}_1^{(2)*}}=\norm{\boldsymbol{p}_{\text{AP}}^{(2)}-\boldsymbol{p}_{\text{AP}}^{(1)}}.
\end{align}
Also, $\boldsymbol{z}_{2}^{(1)*}-\boldsymbol{z}_{2}^{(2)*}=\boldsymbol{z}_1^{(1)*}-\boldsymbol{z}_1^{(2)*}$ as proved in \eqref{Eq:IncreasingPoint}. The above manners are extended into other APs position at $m_1,m_2\in\mathbb{M}$ and other relative positions at $n\in\mathbb{N}$. This finishes the proof. 

\def\bibfont{\footnotesize}

\begin{IEEEbiography}
	[{\includegraphics[width=1in,height=1.25in,clip,keepaspectratio]{./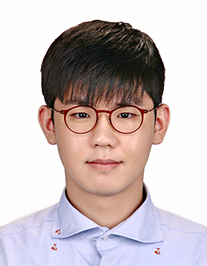}}]
	{Kyuwon Han} (S’21)
	 received the B.S degrees in physics from Chung-Ang University, South Korea in 2018. He is currently a Ph.D student in Electrical and Electronic Engineering at Yonsei University, South Korea. His research interests include indoor positioning, spectrum sharing, and machine learning techniques for wireless communications. Mr. Han received a bronze prize at IEEE Seoul Section Student Papar Award in 2019.
\end{IEEEbiography}

\begin{IEEEbiography}
	[{\includegraphics[width=1in,height=1.25in,clip,keepaspectratio]{./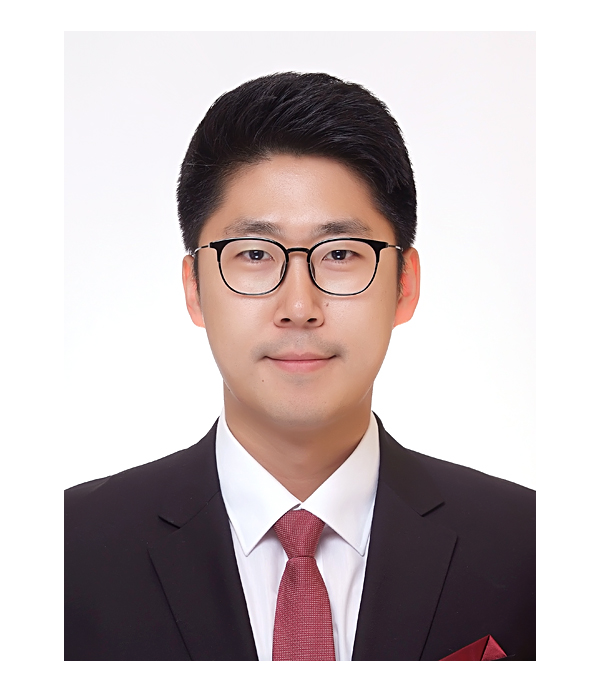}}]
	{Seung Min Yu}
	 received the B.S. (Hons.) and Ph.D. degrees in electrical and electronic engineering from Yonsei University, South Korea, in 2009 and 2013, respectively. 
	He is currently a Senior Researcher with the Korea Railroad Research Institute, South Korea. He was a Senior Engineer with Networks Business, Samsung Electronics Company, Ltd., South Korea. He was also a Researcher with the Internet Research Division, NAVER Corporation (South Korea's most popular search engine), South Korea. His research interests include indoor positioning, optimization for wireless networks, and economics of wireless systems.
	Dr. Yu received the Student Travel Grant Award at the IEEE Global Communications Conference in 2011 and the Best Paper Award at the IEEE Vehicular Technology Conference in 2013.
\end{IEEEbiography}

\begin{IEEEbiography}
	[{\includegraphics[width=1in,height=1.25in,clip,keepaspectratio]{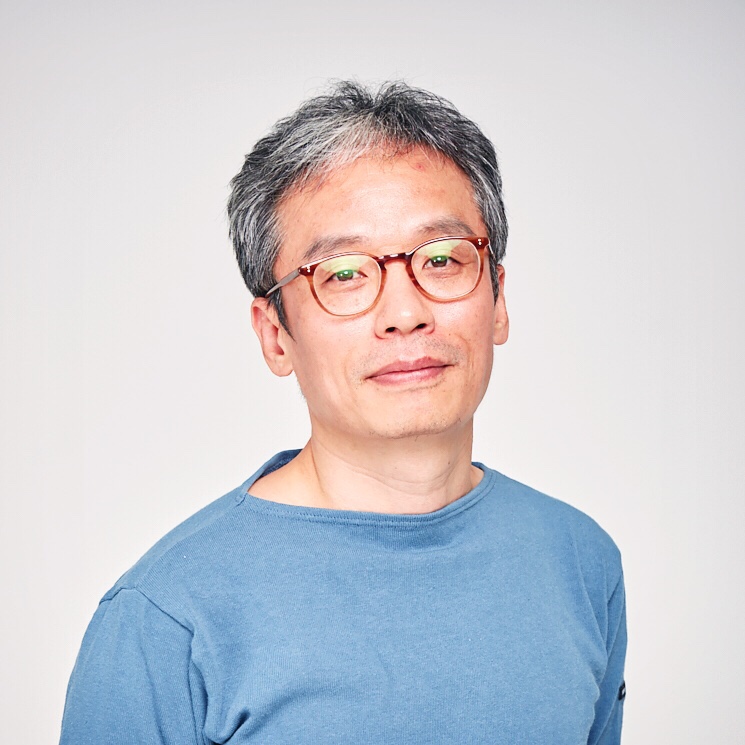}}]
	{Seung-Lyun Kim}
	is a Professor and Head of the School of Electrical \& Electronic Engineering, Yonsei University, Seoul, Korea, heading the Robotic \& Mobile Networks Laboratory (RAMO) and the Center for Flexible Radio (CFR+). He is co-directing H2020 EUK PriMO-5G project, and leading Smart Factory Committee of 5G Forum, Korea. He was an Assistant Professor of Radio Communication Systems at the Department of Signals, Sensors \& Systems, Royal Institute of Technology (KTH), Stockholm, Sweden. He was a Visiting Professor at the Control Engineering Group, Helsinki University of Technology (now Aalto), Finland, the KTH Center for Wireless Systems, and the Graduate School of Informatics, Kyoto University, Japan. He served as a technical committee member or a chair for various conferences, and an editorial board member of IEEE Transactions on Vehicular Technology, IEEE Communications Letters, Elsevier Control Engineering Practice, Elsevier ICT Express, and Journal of Communications and Network. He served as the leading guest editor of IEEE Wireless Communications and IEEE Network for wireless communications in networked robotics, and IEEE Journal on Selected Areas in Communications. He also consulted various companies in the area of wireless systems both in Korea and abroad. His research interest includes radio resource management, information theory in wireless networks, collective intelligence, and robotic networks. He published numerous papers, including the co-authored book (with Prof. Jens Zander), Radio Resource Management for Wireless Networks. His degrees include BS in economics (Seoul National University), and MS \& PhD in operations research (with application to wireless networks, Korea Advanced Institute of Science \& Technology).
\end{IEEEbiography}

\begin{IEEEbiography}
	[{\includegraphics[width=1in,height=1.25in,clip,keepaspectratio]{./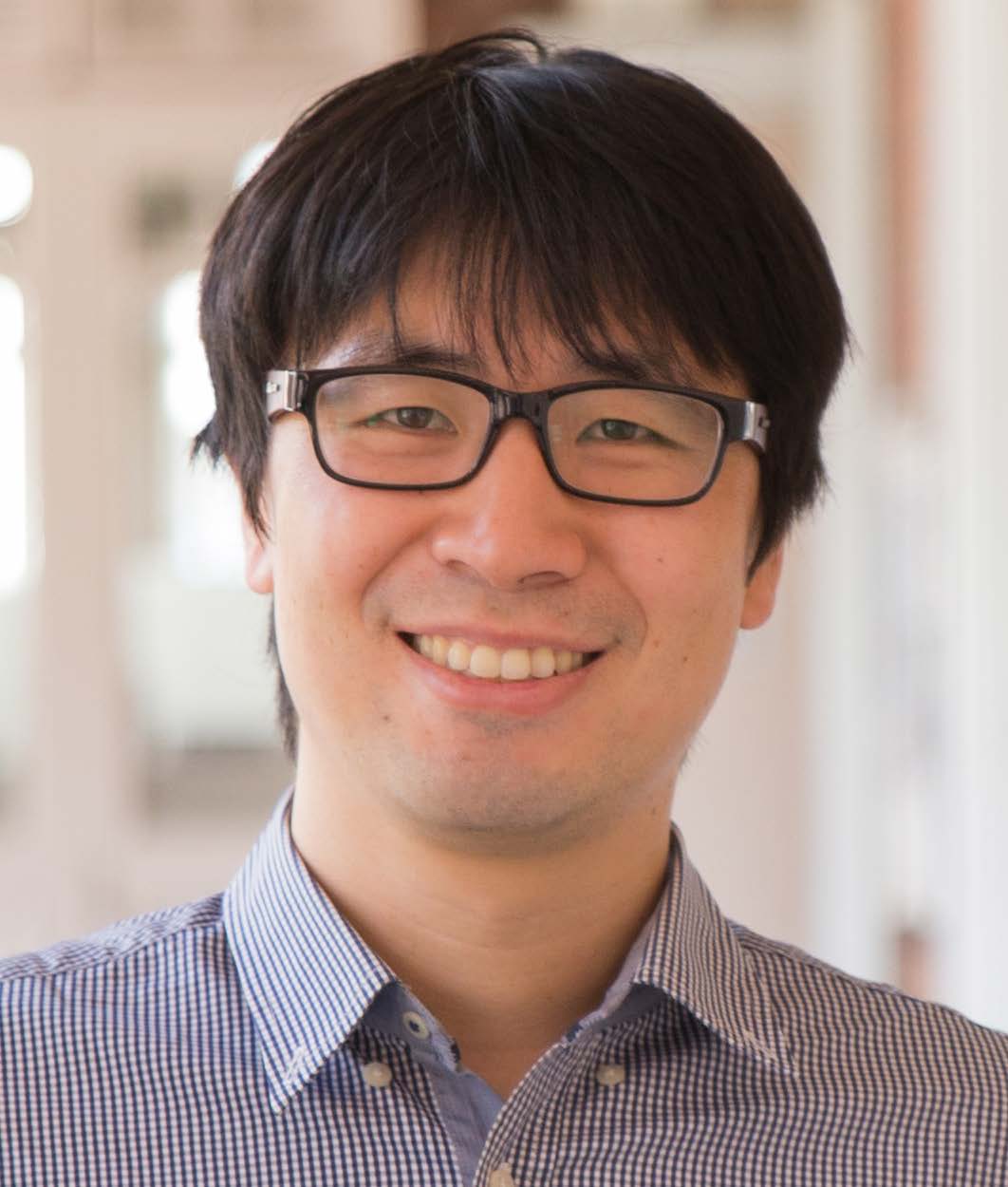}}]
	{Seung-Woo Ko}
	 (M’17) received the B.S., M.S., and Ph.D. degrees from the School of Electrical and Electronic Engineering, Yonsei University, South Korea, in 2006, 2007, and 2013, respectively. Since March 2019, he has been an Assistant Professor with the Division of Electronics and Electrical Information Engineering, Korea Maritime and Ocean University (KMOU). Before joining KMOU, he has been a Senior Researcher with LG Electronics, South Korea, from March 2013 to June 2014, and a Postdoctoral Researcher at Yonsei University, South Korea, from July 2014 to March 2016, and The University of Hong Kong, from April 2016 to February 2019. His research interests focus on intelligent wireless communications and networking, with special emphasis on edge computing and learning, vehicular technologies, and localization.
\end{IEEEbiography}

\end{document}